\newtheorem{theorem}{Theorem}
\newtheorem{lemma}{Lemma}
\newtheorem{claim}{Claim}
\newtheorem{definition}{Definition}
\title{Non-uniformly Stable Common Independent Sets}
\author{Naoyuki Kamiyama\thanks{%
This work was supported by JST ERATO Grant Number JPMJER2301, Japan.}}
\date{Institute of Mathematics for Industry \\ Kyushu University \\ Fukuoka, Japan \\
\url{kamiyama@imi.kyushu-u.ac.jp}}
\begin{document}

\maketitle

\begin{abstract}
In this paper, we consider 
a matroid generalization of the stable matching 
problem. 
In particular, we consider the setting where 
preferences may contain ties. 
For this generalization, we propose a polynomial-time 
algorithm 
for the problem of checking the existence 
of a common independent set  
satisfying non-uniform stability, which is 
a common generalization of super-stability and 
strong stability. 
\end{abstract} 

\section{Introduction}

The stable matching problem, which was proposed by 
Gale and Shapley~\cite{GaleS62}, is one of the most 
central problems in the study of matching under preference~\cite{Manlove13}.  
In this problem, we are given two disjoint groups of agents. 
Each agent has a preference over a subset of agents in the other group. 
A matching between two groups is said to be stable if there does not exist 
an unmatched pair of agents who 
have incentives to deviate from the current matching. 
In other words, there does not exist 
an unmatched pair of agents who 
prefer the new partner to the current partner. 
Gale and Shapley~\cite{GaleS62} proved that 
when the preferences are strict (i.e., they do not contain ties), 
a stable matching always exists and 
we can find a stable matching in polynomial time. 

In this paper, we consider the following 
two directions of generalization of the stable matching problem. 
The first direction is introducing ties in preferences. 
In other words, agents may be indifferent between potential partners. 
The second direction is generalization to matroids.
In the stable matching problem, an instance is often represented by 
a bipartite graph. 
In this direction, a matching in a bipartite graph 
is generalized to a 
common independent set of two matroids. 

In the stable matching problem with ties, 
Irving~\cite{Irving94} proposed the following three properties of a matching 
(see, e.g., \cite{IwamaM08} and 
\cite[Chapter~3]{Manlove13} for a survey of the stable matching problem 
with ties).  
The first property is weak stability. 
This property guarantees that 
there does not exist an unmatched pair of agents such that 
both prefer the other agent in the pair to the 
current partner. 
For this property, 
Irving~\cite{Irving94} proved 
that a weakly stable matching always exists, and 
a weakly stable matching can be found
in polynomial time by using the algorithm 
of Gale and Shapley~\cite{GaleS62} with tie-breaking.
The second property is super-stability.
This property guarantees that 
there does not exist an unmatched pair of agents such that
both weakly prefer the other agent in the pair to the current partner
(i.e., the other agent in the pair 
is not worse than the current partner for any agent in the pair). 
The third property is strong stability.
This stability guarantees that 
there does not exist an unmatched pair of agents such that
both weakly prefer the other agent in the pair to the current partner, and
at least one of the agents in the pair 
prefers the other agent in the pair 
to the current partner. 
It is known that a super-stable matching and a strongly stable matching 
may not exist~\cite{Irving94}. 
For the problem of checking the existence of 
a super-stable matching and 
a strongly stable matching in the one-to-one setting, 
Irving~\cite{Irving94} 
proposed 
polynomial-time algorithms 
(see also \cite{Manlove99}). 
In the many-to-one setting, 
Irving, Manlove, and Scott~\cite{IrvingMS00} proposed 
a polynomial-time algorithm for 
the super-stable matching problem, and 
the papers~\cite{IrvingMS03,KavithaMMP07}
gave polynomial-time 
algorithms for the 
strongly stable matching
problem. 
Scott~\cite{Scott05} considered 
the super-stable matching problem
in the many-to-many setting.
In addition, 
in the many-to-many setting,  
the papers~\cite{ChenG10,Kunysz19,Malhotra04} considered 
the strongly stable matching problem.
Olaosebikan and Manlove~\cite{OlaosebikanM20}
considered super-stability 
in the student-project allocation problem,
which is a variant of the stable matching problem with hierarchical 
capacity constraints.
Olaosebikan and Manlove~\cite{OlaosebikanM22}
considered strong stability
in the student-project allocation problem. 

For the stable matching problem with ties, 
Kamiyama~\cite{Kamiyama25+}
introduced non-uniform stability, which is 
a common generalization of 
super-stability and strong stability. 
More concretely, the pairs of agents 
is divided into two groups. 
Then in the definition of non-uniform stability, 
the pairs in one group have to 
satisfy the condition for super-stability, and 
those in the other group have to 
satisfy the condition for strong stability.  
Kamiyama~\cite{Kamiyama25+} proved that 
the existence of a non-uniformly stable matching 
can be checked in polynomial time. 

Fleiner~\cite{Fleiner03} introduced a matroid generalization of 
the stable matching problem with strict preferences.
In addition, Fleiner~\cite{Fleiner03}
proved that 
there always exists a stable solution in this generalization and 
we can find a stable solution in polynomial time if we are given 
polynomial-time independence oracles for the matroids. 
For a matroid generalization of 
the super-stable matching problem, 
Kamiyama~\cite{Kamiyama22} proposed 
a polynomial-time algorithm 
for checking 
the existence of a super-stable solution.
Furthermore, 
Kamiyama~\cite{Kamiyama25} proposed 
a polynomial-time algorithm 
for 
the many-to-one strongly stable matching 
problem under a matroid constraint. 

In this paper, we consider 
a non-uniformly stable common independent set of 
two matroids, i.e., 
we consider both of the above two directions. 
We propose a polynomial-time 
algorithm 
for the problem of checking the existence of 
a non-uniformly stable 
common independent set of two matroids.
The result of this paper is a common generalization of 
\cite{Kamiyama22} (a matroid generalization of 
the super-stable matching problem),
\cite{Kamiyama25} (the many-to-one strongly stable matching 
problem under a matroid constraint), and 
\cite{Kamiyama25+} (non-uniformly stable matchings 
in the one-to-one setting). 
Thus, our proof is based on the techniques in these papers. 
However, 
the extension of the many-to-one case to the many-to-many case
is generally a non-trivial task (see, e.g., \cite{ChenG10}). 
Actually, in some parts of our proof (e.g., the proof of Lemma~\ref{lemma:key}),
we need careful discussion.

\section{Preliminaries} 

For each non-negative integer $z$, 
we define $[z]$ as the set of integers $i$ 
such that $1 \le i \le z$. 
For each finite set $X$ and each element $x$, 
we define $X + x \coloneqq X \cup \{x\}$ and 
$X - x \coloneqq X \setminus \{x\}$. 

In a finite simple
directed graph, $uv$ denotes  
an arc from a vertex $u$ to a vertex $v$. 
A directed cycle $C$ in a finite simple directed graph is said to be 
\emph{simple} if $C$ passes through each vertex at most once. 
For each simple directed cycle $C$ 
in a finite simple directed graph, an arc $a$ that $C$ does not pass through 
is called a \emph{shortcut arc for $C$} if 
$C + a$ contains a simple directed cycle that is different 
from $C$, where we do not distinguish between 
the simple directed cycle $C$ and the set of arcs 
that $C$ passes through. 
It is not difficult to see that when  
there exists a directed 
cycle in a finite simple directed graph,  
there exists a simple directed cycle for which 
there does not exist a shortcut arc. 

An ordered pair ${\cal M} = (U,{\cal I})$ of a finite set $U$ and 
a non-empty family ${\cal I}$ 
of subsets of $U$ is called a \emph{matroid}
if the following conditions are satisfied for every pair of 
subsets $I,J \subseteq U$. 
\begin{description}
\item[(I1)]
If $I \subseteq J$ and $J \in {\cal I}$, then 
$I \in {\cal I}$. 
\item[(I2)]
If $I,J \in {\cal I}$ and 
$|I| < |J|$, then there exists an element $u \in J \setminus I$ such that 
$I + u \in {\cal I}$. 
\end{description} 
An element in ${\cal I}$ is called an 
\emph{independent set of ${\cal M}$}. 

When we consider an algorithm for a matroid ${\cal M} = (U,{\cal I})$, 
we assume that 
we can determine whether $I \in {\cal I}$ 
for each subset $I \subseteq U$ in time bounded by a polynomial in 
$|U|$.

Let ${\cal M}_1 = (U, {\cal I}_1)$ and ${\cal M}_2 = (U, {\cal I}_2)$ 
be matroids. 
Then a subset $I \subseteq U$ is called 
a \emph{common independent set of ${\cal M}_1, {\cal M}_2$} 
if $I \in {\cal I}_1 \cap {\cal I}_2$. 

In this paper, we are given two matroids 
${\cal M}_D = (E, {\cal I}_D)$ and 
${\cal M}_H = (E, {\cal I}_H)$, and 
transitive and complete binary relations 
$\succsim_D$ and $\succsim_H$ on $E$.
(The term ``complete'' means that, 
for every element $S \in \{D,H\}$ and 
every pair of elements $e,f \in E$, 
at least one of $e \succsim_S f$, $f \succsim_S e$
holds.) 
For each element $S \in \{D,H\}$ and 
each pair of elements $e,f \in E$, 
if $e \succsim_S f$ and $f \not\succsim_S e$ 
(resp.\ $e \succsim_S f$ and $f \succsim_S e$), 
then 
we write $e \succ_S f$ (resp.\ $e \sim_S f$). 
Furthermore, we are given disjoint subsets $E_1,E_2 \subseteq E$
such that $E_1 \cup E_2 = E$. 
We assume that, for 
every element $e \in E$
and every element $S \in \{D,H\}$, 
we have $\{e\} \in {\cal I}_S$. 

\begin{definition}
Let $I$ be a common independent set of 
${\cal M}_D, {\cal M}_H$, and let $e$ be an element in $E \setminus I$. 
Let $S$ be an element in $\{D,H\}$.
Then we say that 
$e$ \emph{weakly blocks} $I$ on ${\cal M}_S$ if 
one of the following conditions is satisfied. 
\begin{description}
\item[(W1)]
$I + e \in {\cal I}_S$.
\item[(W2)]
$I + e \notin {\cal I}_S$, and 
there exists an element $f \in I$ such that 
$e \succsim_S f$ and $I + e - f \in {\cal I}_S$.
\end{description}
We say that 
$e$ \emph{strongly blocks} $I$ on ${\cal M}_S$ if 
one of the following conditions is satisfied. 
\begin{description}
\item[(S1)]
$I + e \in {\cal I}_S$.
\item[(S2)]
$I + e \notin {\cal I}_S$, and 
there exists an element $f \in I$ such that 
$e \succ_S f$ and $I + e - f \in {\cal I}_S$.
\end{description}
\end{definition} 

\begin{definition}
Let $I$ be a common independent set of ${\cal M}_D, {\cal M}_H$, 
and let $e$ be an element in $E \setminus I$. 
Then we say that $e$ \emph{blocks} $I$ if the following 
conditions are satisfied. 
\begin{itemize}
\item
If $e \in E_1$, then
$e$ weakly blocks $I$ on $\mathcal{M}_S$ for every element $S \in \{D,H\}$. 
\item
If $e \in E_2$, then 
$e$ weakly blocks $I$ on $\mathcal{M}_S$ for every element $S \in \{D,H\}$, and 
there exists an element $S \in \{D,H\}$ such that 
$e$ strongly blocks $I$ on ${\cal M}_S$. 
\end{itemize}
\end{definition} 

\begin{definition}
A common independent set $I$ of ${\cal M}_D, {\cal M}_H$
is said to be \emph{non-uniformly stable} if no 
element in $E \setminus I$ blocks $I$. 
\end{definition}

\subsection{Basics of matroids} 

Throughout this subsection, let ${\cal M} = (U,{\cal I})$ be a matroid. 

An inclusion-wise maximal independent set of ${\cal M}$ 
is called a \emph{base of ${\cal M}$}. 
Then the condition (I2) implies that 
all the bases of ${\cal M}$ have the same size. 
In addition, (I2) implies that, 
for every independent set $I$ of ${\cal M}$, 
there exists a base $B$ of ${\cal M}$ such that 
$I \subseteq B$. 

A subset $C \subseteq U$ 
such that $C \notin {\cal I}$ 
is called a \emph{dependent set of ${\cal M}$}. 
An inclusion-wise minimal dependent set of ${\cal M}$ 
is called a \emph{circuit of ${\cal M}$}. 

\begin{lemma}[{See, e.g., \cite[Page~15, Exercise~14]{Oxley11}}] \label{lemma:elimination}
Let $C_1,C_2$ be distinct circuits of ${\cal M}$ such that 
$C_1 \cap C_2 \neq \emptyset$. 
Then for every of element $e \in C_1 \cap C_2$ and 
every element $f \in C_1 \setminus C_2$, 
there exists a circuit $C$ of ${\cal M}$ 
such that $f \in C \subseteq (C_1 \cup C_2) - e$.
\end{lemma}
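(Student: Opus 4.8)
The plan is to prove Lemma~\ref{lemma:elimination} via the rank function $r$ of ${\cal M}$ (equivalently, its closure operator $\operatorname{cl}$), which reduces the statement to a short chain of closure inclusions. First I would recall the standard facts that follow routinely from (I1) and (I2): every subset $X \subseteq U$ has a well-defined rank $r(X)$, the common cardinality of the inclusion-wise maximal independent subsets of $X$; the closure $\operatorname{cl}(X) \coloneqq \{ u \in U : r(X + u) = r(X) \}$ satisfies $X \subseteq \operatorname{cl}(X)$, is monotone, and has the transitivity property that $A \subseteq \operatorname{cl}(B)$ implies $\operatorname{cl}(A) \subseteq \operatorname{cl}(B)$; a circuit $C$ has $r(C) = |C| - 1$, so that $x \in \operatorname{cl}(C - x)$ for every element $x \in C$; and, for a subset $X \subseteq U$ and an element $x \in X$, there exists a circuit $C$ of ${\cal M}$ with $x \in C \subseteq X$ if and only if $x \in \operatorname{cl}(X - x)$. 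For the nontrivial direction of this last equivalence, extend the empty set to an inclusion-wise maximal independent subset $B$ of $X - x$; then $B + x$ is dependent, hence contains a circuit $C'$, and since $B$ is independent we have $x \in C'$, while $C' \subseteq B + x \subseteq X$.

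Granting these facts, let $Y \coloneqq (C_1 \cup C_2) - e - f$; note that $e \neq f$ because $f \notin C_2$ while $e \in C_2$, so $f \in (C_1 \cup C_2) - e$ and $Y$ is obtained from $(C_1 \cup C_2) - e$ by deleting $f$. The argument then proceeds in four steps. First, since $C_2$ is a circuit with $e \in C_2$ we have $e \in \operatorname{cl}(C_2 - e)$, and since $f \notin C_2$ we have $C_2 - e \subseteq Y$, so monotonicity gives $e \in \operatorname{cl}(Y)$. Second, since $C_1$ is a circuit with $f \in C_1$ we have $f \in \operatorname{cl}(C_1 - f)$. Third, writing $C_1 - f = (C_1 - f - e) \cup \{e\}$ with $C_1 - f - e \subseteq Y \subseteq \operatorname{cl}(Y)$ and $e \in \operatorname{cl}(Y)$, we obtain $C_1 - f \subseteq \operatorname{cl}(Y)$, hence $\operatorname{cl}(C_1 - f) \subseteq \operatorname{cl}(Y)$ by transitivity, and therefore $f \in \operatorname{cl}(Y) = \operatorname{cl}((C_1 \cup C_2) - e - f)$. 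Fourth, applying the circuit-membership criterion with $X = (C_1 \cup C_2) - e$ and $x = f$ produces a circuit $C$ of ${\cal M}$ with $f \in C \subseteq (C_1 \cup C_2) - e$, which is exactly the claim.

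Because the lemma is a classical fact, I do not expect a genuine mathematical obstacle; the only real work is assembling the standard rank/closure facts above, none of which is introduced earlier in the paper (and one could instead simply cite them). An alternative route that stays entirely within the independence axioms is to argue by contradiction from a counterexample minimizing $|C_1 \cup C_2|$, using only the weak circuit elimination axiom: there is a circuit $C_3 \subseteq (C_1 \cup C_2) - e$, one may assume $f \notin C_3$, and $C_3$ must meet $C_2 \setminus C_1$, after which one recursively applies the lemma to a pair of circuits whose union is strictly contained in $C_1 \cup C_2$. The delicate point there — the step I would expect to be the main obstacle in that approach — is to choose that pair and its distinguished element so that the inductive conclusion can be combined back with $C_1$ to yield a circuit through $f$ that avoids $e$; this bookkeeping is what makes the rank-based argument above preferable.
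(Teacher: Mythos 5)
The paper does not prove this lemma; it cites it directly to Oxley's text. Your rank/closure argument is correct and is a standard route to the strong circuit elimination axiom: you reduce the claim to showing $f \in {\bf cl}\big(((C_1 \cup C_2) - e) - f\big)$, and the three-step chain using $e \in {\bf cl}(C_2 - e)$, $f \in {\bf cl}(C_1 - f)$, and transitivity of closure is sound. The only step you state without justification is that $B + x$ is dependent in the circuit-membership criterion; it follows immediately from $x \in {\bf cl}(X - x)$ (otherwise $B + x$ would be an independent subset of $X$ of size $r(X-x)+1 = r(X)+1$, a contradiction), so this is a routine omission rather than a gap. Since the paper supplies no proof, there is nothing to compare against beyond noting that your argument matches the classical treatment in the cited reference.
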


Let $X$ be a subset of $U$.
Then we define ${\cal I}|X$ as the family of 
subsets $I\subseteq X$ such that $I \in {\cal I}$. 
Furthermore, we define ${\cal M}|X := (X, {\cal I}|X)$.
Then it is known that 
${\cal M}|X$ is a matroid
(see, e.g., \cite[p.20]{Oxley11}). 
Define ${\bf rk}_{{\cal M}}(X)$ as the size of a base of ${\cal M}|X$. 

For each subset $X \subseteq U$, 
we define 
${\bf cl}_{{\cal M}}(X)$ as the set of elements 
$u \in U$ such that ${\bf rk}_{{\cal M}}(X + u) = {\bf rk}_{{\cal M}}(X)$. 
Clearly, $X \subseteq {\bf cl}_{{\cal M}}(X)$ for every subset $X \subseteq U$. 

\begin{lemma}[{See, e.g., \cite[Proposition~1.4.11]{Oxley11}}] \label{lemma:closure}
Let $X$ be a subset of $U$. 
Then for every element $u \in U \setminus X$, 
$u \in {\bf cl}_{{\cal M}}(X)$ if and only if 
there exists a circuit $C$ of ${\cal M}$ 
such that 
$u \in C \subseteq X + u$.
\end{lemma}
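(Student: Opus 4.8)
The plan is to rephrase the statement ``$u \in {\bf cl}_{{\cal M}}(X)$'' as a condition on bases of the restricted matroid ${\cal M}|X$, and then read the desired circuit off a dependent set of the form $B+u$, where $B$ is a base of ${\cal M}|X$. So the first thing I would prove is the auxiliary fact that, for $u \in U \setminus X$, one has $u \in {\bf cl}_{{\cal M}}(X)$ if and only if $B + u \notin {\cal I}$ for every base $B$ of ${\cal M}|X$ (and, in fact, whether $B+u \in {\cal I}$ does not depend on which base $B$ of ${\cal M}|X$ one picks). Recall from the remarks following (I2) that all bases of ${\cal M}|X$ have size ${\bf rk}_{{\cal M}}(X)$ and that every independent subset of $X$ extends to such a base. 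If $B + u \in {\cal I}$ for some base $B$ of ${\cal M}|X$, then $B+u$ is an independent subset of $X+u$ of size ${\bf rk}_{{\cal M}}(X)+1$, so ${\bf rk}_{{\cal M}}(X+u) > {\bf rk}_{{\cal M}}(X)$ and $u \notin {\bf cl}_{{\cal M}}(X)$. Conversely, if ${\bf rk}_{{\cal M}}(X+u) > {\bf rk}_{{\cal M}}(X)$, then fixing any base $B$ of ${\cal M}|X$ and any base $B'$ of ${\cal M}|(X+u)$ we have $|B| < |B'|$, so (I2) supplies an element $v \in B' \setminus B$ with $B + v \in {\cal I}$; since $B$ is a maximal independent subset of $X$, the element $v$ cannot lie in $X$, so $v = u$ and $B + u \in {\cal I}$. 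This proves the auxiliary fact.

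With that in hand, both implications of the lemma are short. For the ``if'' direction I would argue: given a circuit $C$ with $u \in C \subseteq X+u$, we have $C - u \subseteq X$, and $C - u \in {\cal I}$ because a circuit is an inclusion-wise minimal dependent set; extending $C - u$ to a base $B$ of ${\cal M}|X$ gives $C \subseteq B + u$, and since $C \notin {\cal I}$, condition (I1) forces $B + u \notin {\cal I}$, so $u \in {\bf cl}_{{\cal M}}(X)$ by the auxiliary fact. For the ``only if'' direction, assume $u \in {\bf cl}_{{\cal M}}(X)$ and fix any base $B$ of ${\cal M}|X$; by the auxiliary fact $B + u \notin {\cal I}$, so $B + u$ is dependent and hence contains a circuit $C$ (take an inclusion-wise minimal dependent subset of $B+u$); since $B \in {\cal I}$ we have $C \not\subseteq B$, so $u \in C$, and clearly $C \subseteq B + u \subseteq X + u$, as required.

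The only place where anything beyond the monotonicity axiom (I1) is used is the auxiliary fact, where the exchange axiom (I2) is needed to control ${\bf rk}_{{\cal M}}(X+u)$ via a single base of ${\cal M}|X$; I expect that to be the main — though still routine — point, after which the proof is just bookkeeping about circuits sitting inside dependent sets.
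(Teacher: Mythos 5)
The paper does not prove Lemma~\ref{lemma:closure}; it is cited as a standard fact from Oxley's textbook. Your proof is correct and is essentially the standard argument: characterize $u \in {\bf cl}_{{\cal M}}(X)$ via a base $B$ of ${\cal M}|X$ (using (I2) to show that whether $B+u \in {\cal I}$ is independent of the choice of $B$ and equivalent to the rank not increasing), then read the circuit off the dependent set $B+u$ in one direction, and extend $C-u$ to a base and invoke (I1) in the other. The only thing worth flagging is a matter of presentation: in your ``if'' direction you exhibit a single base $B$ with $B+u \notin {\cal I}$, while the auxiliary fact is phrased with ``for every base''; this is fine because your own two implications (some base gives $B+u \in {\cal I}$ implies $u \notin {\bf cl}$; $u \notin {\bf cl}$ implies every base gives $B+u \in {\cal I}$) already show the ``some'' and ``every'' versions coincide, but a reader would benefit from you stating that equivalence explicitly before using it.
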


Lemma~\ref{lemma:closure} implies that
we have
${\bf cl}_{\cal M}(X) \subseteq {\bf cl}_{\cal M}(Y)$
for every pair 
of subsets $X, Y \subseteq U$
such that $X \subseteq Y$.
Thus, for every pair 
of subsets $X, Y \subseteq U$, 
if $X \subseteq Y$ and $Y \setminus X \subseteq {\bf cl}_{\cal M}(X)$, 
then 
we have ${\bf rk}_{\cal M}(X) = {\bf rk}_{\cal M}(Y)$.
In particular, when $Y = {\bf cl}_{\cal M}(X)$, we have 
${\bf rk}_{{\cal M}}(X) = {\bf rk}_{{\cal M}}({\bf cl}_{{\cal M}}(X))$
(see, e.g., \cite[Lemma~1.4.2]{Oxley11}).

Let $X$ be a subset of $U$.
Then 
we define 
${\cal I}/X$ as the family of 
subsets $I \subseteq U \setminus X$ 
such that 
$|I| = {\bf rk}_{{\cal M}}(I \cup X) - {\bf rk}_{{\cal M}}(X)$.
Define 
${\cal M}/X := (U \setminus X, {\cal I}/X)$. 
Then it is known that ${\cal M}/X$ is a matroid
(see, e.g., \cite[Proposition~3.1.6]{Oxley11}).

\begin{lemma}[{See, e.g., \cite[Proposition~3.1.25]{Oxley11}}] \label{lemma:minor}
For every pair of disjoint subsets $X,Y \subseteq U$, 
\begin{equation*}
({\cal M}/X) / Y = {\cal M}/ (X \cup Y), \ \ \ 
({\cal M}/X) | Y = ({\cal M}| (X \cup Y)) / X.
\end{equation*}
\end{lemma}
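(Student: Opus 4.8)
The plan is to reduce both identities to computations with the rank functions of the matroids involved; the essential tool is the standard rank formula for a contraction, namely that
\begin{equation*}
{\bf rk}_{{\cal M}/X}(Z) = {\bf rk}_{{\cal M}}(Z \cup X) - {\bf rk}_{{\cal M}}(X) \tag{$\star$}
\end{equation*}
for every subset $Z \subseteq U \setminus X$. To establish $(\star)$, I would fix an inclusion-wise maximal member $I$ of $({\cal I}/X)|Z$, so that $|I| = {\bf rk}_{{\cal M}/X}(Z)$ and, by the definition of ${\cal I}/X$, also $|I| = {\bf rk}_{{\cal M}}(I \cup X) - {\bf rk}_{{\cal M}}(X)$. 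It then suffices to prove ${\bf rk}_{{\cal M}}(I \cup X) = {\bf rk}_{{\cal M}}(Z \cup X)$. The inequality ``$\le$'' is immediate since $I \subseteq Z$ and the rank function is monotone. For ``$\ge$'', suppose it fails; applying (I2) to a base of ${\cal M}|(I \cup X)$ and a strictly larger base of ${\cal M}|(Z \cup X)$ produces an element $u \in (Z \cup X) \setminus (I \cup X) = Z \setminus I$ with ${\bf rk}_{{\cal M}}(I \cup X + u) = {\bf rk}_{{\cal M}}(I \cup X) + 1$; then the defining equality of ${\cal I}/X$ for $I$ forces $I + u \in {\cal I}/X$, and since $I + u \subseteq Z$ this contradicts the maximality of $I$ in $({\cal I}/X)|Z$.

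Granting $(\star)$, the two identities follow by comparing ground sets and defining conditions. Both $({\cal M}/X)/Y$ and ${\cal M}/(X \cup Y)$ have ground set $U \setminus (X \cup Y)$, and for $I$ in this set, $I \in ({\cal I}/X)/Y$ means $|I| = {\bf rk}_{{\cal M}/X}(I \cup Y) - {\bf rk}_{{\cal M}/X}(Y)$; applying $(\star)$ to $Z = I \cup Y$ and to $Z = Y$ rewrites this condition as $|I| = {\bf rk}_{{\cal M}}(I \cup Y \cup X) - {\bf rk}_{{\cal M}}(Y \cup X)$, which is exactly $I \in {\cal I}/(X \cup Y)$. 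Similarly, both $({\cal M}/X)|Y$ and $({\cal M}|(X \cup Y))/X$ have ground set $Y$, and for $I \subseteq Y$ the membership conditions read $|I| = {\bf rk}_{{\cal M}}(I \cup X) - {\bf rk}_{{\cal M}}(X)$ and $|I| = {\bf rk}_{{\cal M}|(X \cup Y)}(I \cup X) - {\bf rk}_{{\cal M}|(X \cup Y)}(X)$, respectively; since $I \cup X$ and $X$ are contained in $X \cup Y$, the rank of each in ${\cal M}|(X \cup Y)$ agrees with its rank in ${\cal M}$ straight from the definition of restriction, so the two conditions coincide.

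The only step that uses matroid theory rather than unwinding definitions is $(\star)$, and the delicate point there is the claim that a maximal independent set of the contraction, together with $X$, spans $X \cup Z$ in ${\cal M}$; this is where the exchange axiom (I2) enters. One also needs the elementary facts that ${\bf rk}_{{\cal M}}$ is monotone and increases by at most one when a single element is added, both of which follow directly from (I1) and (I2) (and could alternatively be read off from Lemma~\ref{lemma:closure}). Everything after $(\star)$ is bookkeeping about ground sets and the defining conditions of restriction and contraction, so I do not expect any real obstacle there.
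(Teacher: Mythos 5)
Your proof is correct. The paper does not give its own proof of this lemma — it is cited directly to Oxley's textbook — so there is no in-text argument to compare against; the rank-function route you take (establish ${\bf rk}_{{\cal M}/X}(Z) = {\bf rk}_{{\cal M}}(Z \cup X) - {\bf rk}_{{\cal M}}(X)$ via (I2) and maximality, then unwind the defining conditions) is essentially the standard textbook derivation of these minor identities, and all the steps check out, including the observation that the element $u$ produced by (I2) must land outside $I \cup X$ because $B_1$ is already a base of ${\cal M}|(I \cup X)$.
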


\begin{lemma}[{See, e.g., \cite[Proposition~3.1.7]{Oxley11}}] \label{lemma:contraction}
Let $X$ be a subset of $U$. Let $B$ be a base of ${\cal M}|X$.
Then for every subset $I \subseteq U \setminus X$, 
$I$ is an independent set 
{\rm (}resp.\ a base{\rm )}
of ${\cal M}/ X$
if and only if 
$I \cup B$ is an independent set 
{\rm (}resp.\ a base{\rm )}
of ${\cal M}$. 
\end{lemma}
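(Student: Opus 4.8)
The plan is to reduce the statement to rank arithmetic, using only the definition of ${\cal I}/X$ and the closure characterisation in Lemma~\ref{lemma:closure}. Put $r \coloneqq {\bf rk}_{\cal M}(X)$, so that $|B| = r$ since $B$ is a base of ${\cal M}|X$; note also that $I$ and $B$ are disjoint for every $I \subseteq U \setminus X$. The first step is to check that $B$ \emph{spans} $X$ in ${\cal M}$, that is, $X \subseteq {\bf cl}_{\cal M}(B)$. Indeed, for $x \in X \setminus B$ the set $B + x$ is a subset of $X$, hence dependent by the maximality of $B$ in ${\cal M}|X$; so $B+x$ contains a circuit $C$, and $x \in C$ because $B \in {\cal I}$. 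Then $x \in C \subseteq B + x$, so $x \in {\bf cl}_{\cal M}(B)$ by Lemma~\ref{lemma:closure}.

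The second and central step is the rank identity
\[
{\bf rk}_{\cal M}(I \cup X) = {\bf rk}_{\cal M}(I \cup B) \qquad \text{for every } I \subseteq U \setminus X .
\]
By step~1 and monotonicity of the closure operator (the remark after Lemma~\ref{lemma:closure}), $X \setminus B \subseteq {\bf cl}_{\cal M}(B) \subseteq {\bf cl}_{\cal M}(I \cup B)$. Since $I \cup B \subseteq I \cup X$ and $(I \cup X) \setminus (I \cup B) = X \setminus B$, the rank-invariance fact recorded just after Lemma~\ref{lemma:closure} gives the displayed equality; taking $I = \emptyset$ recovers ${\bf rk}_{\cal M}(B) = r$.

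Now the independent-set part is immediate: using $|I \cup B| = |I| + r$ together with the identity above and the elementary fact that a set lies in ${\cal I}$ exactly when its cardinality equals its rank,
\begin{align*}
I \in {\cal I}/X
&\iff |I| = {\bf rk}_{\cal M}(I \cup X) - r \\
&\iff |I \cup B| = {\bf rk}_{\cal M}(I \cup B) \\
&\iff I \cup B \in {\cal I} .
\end{align*}
For the base version, recall that the bases of a matroid are exactly its maximum-size independent sets and that every base of ${\cal M}/X$ has size ${\bf rk}_{\cal M}(U) - r$ (a standard consequence of the definition of ${\cal I}/X$), while every base of ${\cal M}$ has size ${\bf rk}_{\cal M}(U)$; combining this with $|I \cup B| = |I| + r$ and the equivalence just proved shows that $I$ is a maximum-size member of ${\cal I}/X$ if and only if $I \cup B$ is a maximum-size member of ${\cal I}$, which is the base statement. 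I expect the only genuinely substantive point to be step~1 — that a base of the restriction ${\cal M}|X$ already spans $X$ inside ${\cal M}$ — with everything afterwards being bookkeeping with ranks and cardinalities.
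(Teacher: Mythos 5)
The paper does not prove this lemma --- it simply cites \cite[Proposition~3.1.7]{Oxley11} --- so there is no paper argument to compare against; your self-contained proof is correct. Step~1 (a base $B$ of ${\cal M}|X$ spans $X$, via the circuit characterisation of Lemma~\ref{lemma:closure}) and Step~2 (the rank identity ${\bf rk}_{\cal M}(I\cup X)={\bf rk}_{\cal M}(I\cup B)$, via the rank-invariance remark following Lemma~\ref{lemma:closure}) are both sound, and the chain of equivalences for the independence part then works exactly as you write it, using disjointness of $I$ and $B$ and the elementary fact that $A\in{\cal I}$ iff $|A|={\bf rk}_{\cal M}(A)$. The only soft spot is the base part, which invokes without proof that every base of ${\cal M}/X$ has size ${\bf rk}_{\cal M}(U)-{\bf rk}_{\cal M}(X)$; this is standard and true, but you could sidestep it entirely by arguing that $I$ is inclusion-wise maximal in ${\cal I}/X$ iff $I\cup B$ is inclusion-wise maximal in ${\cal I}$: the equivalence already proved handles candidate augmenting elements in $U\setminus X$, and Step~1 shows that any $u\in X\setminus B$ creates a circuit inside $(I\cup B)+u$, so such a $u$ can never enlarge $I\cup B$ within ${\cal I}$.
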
 

Lemma~\ref{lemma:closure} implies that, 
for every independent set $I$ of ${\cal M}$ and 
every element $u \in {\bf cl}_{{\cal M}}(I) \setminus I$, 
there exists a circuit $C$ of ${\cal M}$ such that 
$u \in C \subseteq I + u$.
In addition, 
Lemma~\ref{lemma:elimination}
implies that 
such a circuit $C$ is uniquely determined. 
We call this circuit the {\it fundamental circuit of $I$
and $u$ in ${\cal M}$}, and this circuit is 
denoted by 
$\mathbb{C}_{{\cal M}}(u,I)$.
It is known that, 
for every independent set $I$ of 
${\cal M}$ and 
every element $u \in {\bf cl}_{{\cal M}}(I) \setminus I$, 
$\mathbb{C}_{{\cal M}}(u,I)$ coincides with the set of 
elements $v \in I  + u$ such that  
$I + u - v \in {\cal I}$
(see, e.g., \cite[Page~20, Exercise~5]{Oxley11}). 
For each independent set $I$ of ${\cal M}$ and 
each element $u \in {\bf cl}_{{\cal M}}(I) \setminus I$, 
we define $\mathbb{D}_{{\cal M}}(u,I) := \mathbb{C}_{{\cal M}}(u,I) - u$. 
We can rewrite (W2) (resp.\ (S2)) as follows.  
\begin{center}
$e \in {\bf cl}_{{\cal M}_S}(I)$, and 
there exists an element $f \in \mathbb{D}_{{\cal M}_S}(e,I)$ such that 
$e \succsim_S f$
(resp.\ 
$e \succ_S f$).
\end{center}
Furthermore, Lemma~\ref{lemma:closure}
implies that (W1) and (S1) 
can be rewritten as $e \notin {\bf cl}_{{\cal M}_S}(I)$. 

\begin{lemma}[{Iri and Tomizawa~\cite[Lemma~2]{IriT76}}] \label{lemma:IriT76_sequence}
Let $I$ be an independent set of ${\cal M}$.
Furthermore, we assume that we are given 
distinct elements $u_1,u_2,\ldots,u_{\ell} \in {\bf cl}_{\cal M}(I) \setminus I$ 
and $v_1,v_2,\ldots,v_{\ell} \in I$ satisfying the 
following 
conditions.
\begin{itemize}
\item
$v_i \in \mathbb{C}_{\cal M}(u_i,I)$
for every integer $i \in [\ell]$. 
\item
$v_i \notin \mathbb{C}_{\cal M}(u_j,I)$
holds for every pair of integers $i,j \in [\ell]$ such that $i < j$.
\end{itemize}
Define $J := (I \cup \{u_1,u_2,\ldots,u_{\ell}\}) \setminus \{v_1,v_2,\ldots,v_{\ell}\}$.
Then $J \in {\cal I}$ and 
${\bf cl}_{\cal M}(I) = {\bf cl}_{\cal M}(J)$. 
\end{lemma}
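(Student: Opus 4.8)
The plan is to proceed by induction on $\ell$. The case $\ell = 0$ is immediate since then $J = I$. For the inductive step with $\ell \ge 1$, the idea is to peel off the \emph{first} index: set $I' \coloneqq I + u_1 - v_1$ and apply the induction hypothesis to $I'$ together with $u_2, \ldots, u_\ell$ and $v_2, \ldots, v_\ell$. Peeling off the first index rather than the last is essential, for the reason explained below.

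First I would check that $I'$ is well behaved. Since $u_1 \in {\bf cl}_{\cal M}(I) \setminus I$ and $v_1 \in \mathbb{C}_{\cal M}(u_1, I)$, the characterization of the fundamental circuit recalled above --- that $\mathbb{C}_{\cal M}(u_1, I)$ is exactly the set of $v \in I + u_1$ with $I + u_1 - v \in {\cal I}$ --- gives $I' \in {\cal I}$. For the closure, note $I' \subseteq {\bf cl}_{\cal M}(I)$ because $u_1 \in {\bf cl}_{\cal M}(I)$, so monotonicity and idempotence of ${\bf cl}_{\cal M}$ give ${\bf cl}_{\cal M}(I') \subseteq {\bf cl}_{\cal M}(I)$; conversely, $\mathbb{C}_{\cal M}(u_1, I)$ is a circuit with $v_1 \in \mathbb{C}_{\cal M}(u_1, I) \subseteq I + u_1 = I' + v_1$, so Lemma~\ref{lemma:closure} gives $v_1 \in {\bf cl}_{\cal M}(I')$, hence $I \subseteq I' + v_1 \subseteq {\bf cl}_{\cal M}(I')$ and ${\bf cl}_{\cal M}(I) \subseteq {\bf cl}_{\cal M}(I')$. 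Thus ${\bf cl}_{\cal M}(I') = {\bf cl}_{\cal M}(I)$.

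The crux --- and the step I expect to be the main obstacle --- is to transfer the two hypotheses from $I$ to $I'$, for which it suffices to show that the exchange $u_1 \leftrightarrow v_1$ does not change the fundamental circuits of the other $u_j$, that is, $\mathbb{C}_{\cal M}(u_j, I') = \mathbb{C}_{\cal M}(u_j, I)$ for every $j \in \{2, \ldots, \ell\}$. This is exactly where the asymmetric hypothesis ``$v_i \notin \mathbb{C}_{\cal M}(u_j, I)$ for $i < j$'' is used, and why the first index must be the one peeled off: taking $i = 1$ gives $v_1 \notin \mathbb{C}_{\cal M}(u_j, I)$, whence $\mathbb{C}_{\cal M}(u_j, I) \subseteq (I + u_j) - v_1 \subseteq I' + u_j$. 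Since $I' \in {\cal I}$ and $u_j \in {\bf cl}_{\cal M}(I') \setminus I'$ (using $u_j \ne u_1$ and $u_j \notin I$), every circuit contained in $I' + u_j$ must contain $u_j$ and, by the uniqueness that follows from Lemma~\ref{lemma:elimination}, must equal $\mathbb{C}_{\cal M}(u_j, I')$; as $\mathbb{C}_{\cal M}(u_j, I)$ is such a circuit, the two coincide. Granting this, the hypotheses for $I'$ and $u_2, \ldots, u_\ell$, $v_2, \ldots, v_\ell$ follow at once: $v_j \in \mathbb{C}_{\cal M}(u_j, I) = \mathbb{C}_{\cal M}(u_j, I')$ for $j \ge 2$, and $v_i \notin \mathbb{C}_{\cal M}(u_j, I) = \mathbb{C}_{\cal M}(u_j, I')$ for $2 \le i < j$. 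I would also record the routine nondegeneracy facts: $u_2, \ldots, u_\ell$ are distinct and lie in ${\bf cl}_{\cal M}(I') \setminus I'$, and each $v_j$ with $j \ge 2$ lies in $I'$ since $v_j \in I$ and $v_j \ne v_1$ (if $v_j = v_1$ then $v_1 \in \mathbb{C}_{\cal M}(u_j, I)$, contradicting the hypothesis).

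Applying the induction hypothesis then yields $(I' \cup \{u_2, \ldots, u_\ell\}) \setminus \{v_2, \ldots, v_\ell\} \in {\cal I}$ with closure equal to ${\bf cl}_{\cal M}(I')$. Because $v_1 \notin \{u_1, \ldots, u_\ell\}$ (as $v_1 \in I$), a one-line set computation identifies this set with $J = (I \cup \{u_1, \ldots, u_\ell\}) \setminus \{v_1, \ldots, v_\ell\}$, and combining with ${\bf cl}_{\cal M}(I') = {\bf cl}_{\cal M}(I)$ we obtain $J \in {\cal I}$ and ${\bf cl}_{\cal M}(J) = {\bf cl}_{\cal M}(I)$, closing the induction. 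Apart from the circuit-invariance step, everything is elementary manipulation of closures and fundamental circuits.
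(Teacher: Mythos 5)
Your proof is correct. The paper does not give its own argument for this lemma --- it is stated with a citation to Iri and Tomizawa~\cite[Lemma~2]{IriT76} and used as a black box --- so there is nothing in the text to compare against. Your induction (peeling off the \emph{first} index $u_1,v_1$, forming $I' = I + u_1 - v_1$, verifying $\mathbf{cl}_{\cal M}(I') = \mathbf{cl}_{\cal M}(I)$, and then establishing the circuit invariance $\mathbb{C}_{\cal M}(u_j, I') = \mathbb{C}_{\cal M}(u_j, I)$ for $j \ge 2$ from the uniqueness of the fundamental circuit in an independent set plus one element) is sound, and you correctly identify why the asymmetric hypothesis forces the first index to be the one removed: the condition $v_1 \notin \mathbb{C}_{\cal M}(u_j, I)$ for $j > 1$ is exactly what keeps the fundamental circuits of $u_2,\ldots,u_\ell$ inside $I' + u_j$. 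Your side remark that the $v_i$ are automatically distinct (since $v_i = v_j$ with $i < j$ would violate $v_i \notin \mathbb{C}_{\cal M}(u_j,I)$) is a useful observation and makes the deduction that $v_j \in I'$ for $j \ge 2$ airtight.
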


Lemma~\ref{lemma:IriT76_sequence} implies that
$x \in {\bf cl}_{\cal M}(I + u - v)$
for every independent set $I$ of ${\cal M}$, 
every element $u \in {\bf cl}_{\cal M}(I) \setminus I$,
every element $v \in \mathbb{C}_{\cal M}(u,I)$, and 
every element $x \in ({\bf cl}_{\cal M}(I) \setminus I) - u$. 

\begin{lemma}[{See, e.g., \cite[Lemma~3]{IriT76}}] \label{lemma:IriT76_auxiliary}
Let $I$ be an independent set of ${\cal M}$.
In addition, let $u$ and $v$ 
be an element 
in ${\bf cl}_{\cal M}(I) \setminus I$ and 
an element in $\mathbb{C}_{\cal M}(u,I)$, respectively. 
Define $J := I + u - v$. 
Let $x$ be 
an element in $({\bf cl}_{\cal M}(I) \setminus I) - u$, 
and let $y$ be 
an element in $I - v$.
Then the following statements hold.
\begin{description}
\item[(i)]
If $v \in \mathbb{C}_{\cal M}(x, I)$, 
then $u \in \mathbb{C}_{\cal M}(x, J)$. 
\item[(ii)]
If $v \notin \mathbb{C}_{\cal M}(x, I)$, 
then $u \notin \mathbb{C}_{\cal M}(x, J)$. 
\item[(iii)]
Assume that $y \in \mathbb{C}_{\cal M}(x,I)$, and 
we have at least one of 
$y \notin \mathbb{C}_{\cal M}(u,I)$, 
$v \notin \mathbb{C}_{\cal M}(x,I)$. 
Then
$y \in \mathbb{C}_{\cal M}(x,J)$.
\item[(iv)]
Assume that 
$y \notin \mathbb{C}_{\cal M}(x,I)$,
and 
we have at least one of 
$y \notin \mathbb{C}_{\cal M}(u,I)$, 
$v \notin \mathbb{C}_{\cal M}(x,I)$.
Then 
$y \notin \mathbb{C}_{\cal M}(x,J)$.
\end{description}
\end{lemma}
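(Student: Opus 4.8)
The plan is to reduce all four statements to the fact, recalled above, that for an independent set $I$ of ${\cal M}$ and an element $w \in {\bf cl}_{{\cal M}}(I) \setminus I$, the circuit $\mathbb{C}_{{\cal M}}(w,I)$ equals $\{z \in I + w : I + w - z \in {\cal I}\}$. First I would record what is used throughout: from $v \in \mathbb{C}_{{\cal M}}(u,I)$ we get $J := I + u - v \in {\cal I}$; from $x \notin I$ and $x \ne u$ we get $x \notin J$; and the remark following Lemma~\ref{lemma:IriT76_sequence} gives $x \in {\bf cl}_{{\cal M}}(J)$. Hence $\mathbb{C}_{{\cal M}}(x,J)$ is a well-defined circuit inside $J + x$, and for every $z \in J + x$ we have $z \in \mathbb{C}_{{\cal M}}(x,J)$ if and only if $J + x - z \in {\cal I}$. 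Note also $v,y \in I$ and $y \ne v$, so $y \in J$ and $v \notin J$. For (i) and (ii), apply this with $z = u$ (note $u \in J$): then $u \in \mathbb{C}_{{\cal M}}(x,J)$ iff $J + x - u \in {\cal I}$, and $J + x - u = I + x - v$, which is independent iff $v \in \mathbb{C}_{{\cal M}}(x,I)$; this is exactly (i) and (ii) together.

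For (iii), apply the same with $z = y$; it suffices to prove $J + x - y \in {\cal I}$. Since $J + x - y = (I \cup \{x,u\}) \setminus \{v,y\}$, I would obtain this set from Lemma~\ref{lemma:IriT76_sequence} with $\ell = 2$. If $y \notin \mathbb{C}_{{\cal M}}(u,I)$, apply it with $(u_1,v_1) = (x,y)$ and $(u_2,v_2) = (u,v)$; the required conditions $y \in \mathbb{C}_{{\cal M}}(x,I)$, $v \in \mathbb{C}_{{\cal M}}(u,I)$ and $y \notin \mathbb{C}_{{\cal M}}(u,I)$ all hold. If instead $v \notin \mathbb{C}_{{\cal M}}(x,I)$, apply it with $(u_1,v_1) = (u,v)$ and $(u_2,v_2) = (x,y)$, using $v \notin \mathbb{C}_{{\cal M}}(x,I)$ for the last condition. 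In either case the lemma yields that $(I \cup \{x,u\}) \setminus \{v,y\} = J + x - y$ is independent, so $y \in \mathbb{C}_{{\cal M}}(x,J)$.

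For (iv), apply the same with $z = y$; it now suffices to show that $J + x - y = (I \cup \{x,u\}) \setminus \{v,y\}$ is dependent, and I would do this by exhibiting a circuit inside it. Put $C_x := \mathbb{C}_{{\cal M}}(x,I) \subseteq I + x$; the hypothesis $y \notin \mathbb{C}_{{\cal M}}(x,I)$ gives $y \notin C_x$. If in addition $v \notin C_x$, then $C_x \subseteq (I + x) \setminus \{v,y\} \subseteq J + x - y$, and we are done. Otherwise the hypothesis of (iv) forces $y \notin \mathbb{C}_{{\cal M}}(u,I) =: C_u$; then $v \in C_x \cap C_u$ while $x \in C_x \setminus C_u$ (as $x \notin I + u \supseteq C_u$), so $C_x \ne C_u$, and Lemma~\ref{lemma:elimination} (with $e = v$ and $f = x$) yields a circuit $C$ with $x \in C \subseteq (C_x \cup C_u) - v$. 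Since $C_x \cup C_u \subseteq I \cup \{x,u\}$ and $y \notin C_x \cup C_u$, we get $C \subseteq (I \cup \{x,u\}) \setminus \{v,y\} = J + x - y$, so $J + x - y$ is dependent and $y \notin \mathbb{C}_{{\cal M}}(x,J)$.

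I expect part (iv) to be the main obstacle: unlike (i)--(iii), it cannot be closed by a single application of the swap description or of Lemma~\ref{lemma:IriT76_sequence}, and it needs both the case split on whether $v \in \mathbb{C}_{{\cal M}}(x,I)$ and the circuit-elimination step. The remaining work — checking, in each part, exactly which of $u,v,x,y$ lies in each fundamental circuit and that the relevant sets are suitably nested or disjoint — is routine but must be tracked carefully, in particular so that $y$ never re-enters $C_x$ or $C_u$ in the argument for (iv).
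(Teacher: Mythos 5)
The paper does not prove this lemma; it simply cites it to Iri and Tomizawa~\cite{IriT76}, so there is no proof here to compare against. Your argument is correct and self-contained: the reduction of all four parts to the membership test ``$z\in\mathbb{C}_{\cal M}(x,J)$ iff $J+x-z\in{\cal I}$'' works because the remark after Lemma~\ref{lemma:IriT76_sequence} gives $x\in{\bf cl}_{\cal M}(J)\setminus J$; the identity $J+x-u=I+x-v$ cleanly yields (i) and (ii) at once; the two applications of Lemma~\ref{lemma:IriT76_sequence} with $\ell=2$ (in either order, depending on which disjunct of the hypothesis holds) correctly produce (iii); and the case split in (iv) together with circuit elimination (Lemma~\ref{lemma:elimination}) exhibits a circuit inside $J+x-y$ in both subcases, with the disjointness checks ($y\notin C_x$, $y\notin C_u$, $x\notin C_u$) all valid. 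This is essentially the standard proof of the Iri--Tomizawa exchange lemma, so it is a correct filling-in of a result the paper takes as given.
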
 

The following lemma plays an important role in this paper. 

\begin{lemma}[{See, e.g., \cite[Lemma~2.4]{Kamiyama22}}] \label{lemma:circuit_union}
Let $C, C_1,C_2,\ldots,C_{\ell}$ be circuits of ${\cal M}$.
In addition, we assume that 
distinct elements $u_1,u_2,\ldots,u_{\ell}, v \in U$ satisfy the following 
conditions. 
\begin{description}
\item[(U1)]
$u_i \in C \cap C_i$  
for every integer $i \in [\ell]$. 
\item[(U2)]
$u_{i} \notin C_{j}$ holds for every 
pair of distinct integers $i,j \in [\ell]$. 
\item[(U3)]
$v \in C \setminus (C_1 \cup C_2 \cup \cdots \cup C_{\ell})$.  
\end{description}
Then there exists a circuit $C^{\prime}$ of ${\cal M}$ 
such that 
\begin{equation*}
C^{\prime} \subseteq \big(C \cup C_1 \cup C_2 \cup \cdots \cup C_{\ell}\big) 
\setminus \{u_1,u_2,\ldots,u_{\ell}\}.
\end{equation*} 
\end{lemma}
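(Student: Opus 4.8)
The plan is to establish, by induction on $\ell$, a slightly stronger statement: there exists a circuit $C'$ of ${\cal M}$ with $v \in C'$ and $C' \subseteq (C \cup C_1 \cup \cdots \cup C_{\ell}) \setminus \{u_1,\ldots,u_{\ell}\}$. The base case $\ell = 0$ is immediate by taking $C' = C$, since $v \in C$ by (U3). (If one prefers to start from $\ell = 1$, apply Lemma~\ref{lemma:elimination} to $C$ and $C_1$ with $e = u_1 \in C \cap C_1$ and $f = v \in C \setminus C_1$, which are legitimate since $v \in C \setminus C_1$ forces $C \neq C_1$.)

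For the inductive step with $\ell \ge 1$, I would first apply the induction hypothesis to the circuits $C, C_1, \ldots, C_{\ell-1}$ together with the distinct elements $u_1, \ldots, u_{\ell-1}, v$: conditions (U1) and (U2) restrict harmlessly to indices in $[\ell-1]$, and (U3) still holds since $v$ lies in none of $C_1,\ldots,C_{\ell}$. This yields a circuit $C''$ with $v \in C''$ and $C'' \subseteq (C \cup C_1 \cup \cdots \cup C_{\ell-1}) \setminus \{u_1,\ldots,u_{\ell-1}\}$. I would then split on whether $u_{\ell} \in C''$.

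If $u_{\ell} \notin C''$, then $C''$ already avoids all of $u_1,\ldots,u_{\ell}$ and is contained in $C \cup C_1 \cup \cdots \cup C_{\ell}$, so $C' = C''$ works. If $u_{\ell} \in C''$, I would apply Lemma~\ref{lemma:elimination} to $C''$ and $C_{\ell}$: they are distinct because $v \in C'' \setminus C_{\ell}$ by (U3), and they meet at $u_{\ell}$ by (U1); taking $e = u_{\ell}$ and $f = v$ produces a circuit $C'$ with $v \in C' \subseteq (C'' \cup C_{\ell}) - u_{\ell}$. Finally I would check that $C'$ avoids every $u_i$: for $i \in [\ell-1]$ this holds because neither $C''$ (by its construction) nor $C_{\ell}$ (by (U2)) contains $u_i$, and $u_{\ell}$ has been removed explicitly; hence $C' \subseteq (C \cup C_1 \cup \cdots \cup C_{\ell}) \setminus \{u_1,\ldots,u_{\ell}\}$, completing the induction.

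The point I expect to be the main obstacle is getting the induction to close at all, and this is exactly why I would strengthen the statement to carry $v$ as a guaranteed member of the circuit: without it, there is no reason the circuit returned by the induction hypothesis should contain an element outside $C_{\ell}$, so both the applicability of Lemma~\ref{lemma:elimination} (which needs $C'' \neq C_{\ell}$ and a valid $f$) and the final containment could fail. The secondary care is purely bookkeeping — making sure that eliminating against $C_{\ell}$ never reintroduces an already-deleted $u_i$, which is precisely the content of hypothesis (U2). An alternative route would be a direct rank computation on $(C \cup C_1 \cup \cdots \cup C_{\ell}) \setminus \{u_1,\ldots,u_{\ell}\}$, but the inductive argument keeps the combinatorics much cleaner.
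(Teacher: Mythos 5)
Your proof is correct, and the strengthening of the induction hypothesis to carry $v$ as a forced member of the produced circuit is exactly the right move: it is what guarantees both that the elimination step is applicable (since $v\in C''\setminus C_\ell$ certifies $C''\neq C_\ell$ and supplies the required element $f$) and that the argument can iterate. The paper does not prove this lemma itself but cites \cite[Lemma~2.4]{Kamiyama22}; the inductive argument via the strong circuit elimination axiom (Lemma~\ref{lemma:elimination}) is the standard proof of this statement, and your bookkeeping — in particular the observation that $u_i\notin C''\cup C_\ell$ for $i<\ell$ follows from the construction of $C''$ together with (U2) — is sound and complete.
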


\subsection{Direct sum of matroids} 

Assume that we are given matroids 
${\cal M}_1 = (U_1, {\cal I}_1), {\cal M}_2 = (U_2, {\cal I}_2),\dots,
{\cal M}_{\ell} = (U_{\ell}, {\cal I}_{\ell})$
such that 
$U_1,U_2,\dots,U_{\ell}$ are pairwise disjoint. 
Define the \emph{direct sum} ${\cal M} = (U, {\cal I})$ of 
${\cal M}_1, {\cal M}_2, \dots, {\cal M}_{\ell}$
as follows. 
Define $U$ as $U_1 \cup U_2 \cup \dots \cup U_{\ell}$.
In addition, we define ${\cal I}$ as the family of subsets 
$I \subseteq U$ such that
$I \cap U_i \in {\cal I}_i$
for every integer $i \in [\ell]$. 
Then it is known that 
${\cal M}$ is a matroid
(see, e.g., \cite[Proposition~4.2.12]{Oxley11}).

\subsection{Common independent sets}

Throughout this subsection, 
let ${\cal M}_1 = (U, {\cal I}_1)$ and ${\cal M}_2 = (U, {\cal I}_2)$ 
be matroids. 
It is known that 
then we can find a maximum-size 
common independent set of 
${\cal M}_1, {\cal M}_2$ in time bounded by a 
polynomial in $|U|$ (see, e.g., \cite{Blikstad21,Cunningham86}). 

Define the function $\mu_{{\cal M}_1{\cal M}_2}$ on $2^U$ by  
\begin{equation*}
\mu_{{\cal M}_1{\cal M}_2}(X) \coloneqq 
{\bf rk}_{{\cal M}_1}(U \setminus X) + 
{\bf rk}_{{\cal M}_2}(X). 
\end{equation*}

\begin{lemma}[{Edmonds~\cite{Edmonds70}}] \label{lemma:Edmonds70}
\begin{equation} \label{eq:Edmonds70}
\max\{|I| \mid
I \in {\cal I}_1 \cap {\cal I}_2\}
= 
\min\{\mu_{{\cal M}_1{\cal M}_2}(X)
\mid X \subseteq U\}. 
\end{equation}
\end{lemma}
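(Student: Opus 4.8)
The plan is to establish the two inequalities in \eqref{eq:Edmonds70} separately. The inequality $\le$ is routine: given $I \in {\cal I}_1 \cap {\cal I}_2$ and $X \subseteq U$, I would split $I$ as $(I \cap X) \cup (I \setminus X)$; since $I \setminus X$ is a subset of $I \in {\cal I}_1$ contained in $U \setminus X$ it is an independent set of ${\cal M}_1|(U \setminus X)$, and likewise $I \cap X$ is an independent set of ${\cal M}_2|X$, so $|I| = |I \setminus X| + |I \cap X| \le {\bf rk}_{{\cal M}_1}(U \setminus X) + {\bf rk}_{{\cal M}_2}(X) = \mu_{{\cal M}_1{\cal M}_2}(X)$. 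Taking the maximum over $I$ and then the minimum over $X$ gives $\le$.

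For $\ge$ it is enough to produce one maximum common independent set $I$ and one subset $X \subseteq U$ with $\mu_{{\cal M}_1{\cal M}_2}(X) = |I|$, since together with the easy direction this forces equality throughout. So I would fix a maximum common independent set $I$ and form the standard \emph{exchange digraph} $D_I$ on vertex set $U$, with an arc $xy$ whenever $x \in I$, $y \in {\bf cl}_{{\cal M}_1}(I) \setminus I$ and $x \in \mathbb{D}_{{\cal M}_1}(y,I)$ (equivalently $I - x + y \in {\cal I}_1$), and an arc $yx$ whenever $x \in I$, $y \in {\bf cl}_{{\cal M}_2}(I) \setminus I$ and $x \in \mathbb{D}_{{\cal M}_2}(y,I)$ (equivalently $I - x + y \in {\cal I}_2$). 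Write $X_1 := (U \setminus I) \setminus {\bf cl}_{{\cal M}_1}(I)$ and $X_2 := (U \setminus I) \setminus {\bf cl}_{{\cal M}_2}(I)$, the sets of elements addable to $I$ in ${\cal M}_1$, resp.\ in ${\cal M}_2$.

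The crux is an \emph{augmenting-path lemma}: if $X_1 \cap X_2 \neq \emptyset$, or if $D_I$ contains a directed path from a vertex of $X_1$ to a vertex of $X_2$, then $I$ is not a maximum common independent set. The first case is settled by $I + u$ for any $u \in X_1 \cap X_2$. In the second I would take such a path $P$ using the fewest arcs; then (by the same kind of argument as the one for shortcut-free directed cycles recalled in the Preliminaries, transcribed to paths) $P$ has no shortcut arc, and one shows that $I \triangle V(P)$ is again a common independent set, necessarily of size $|I|+1$ because $P$ alternates between $I$ and $U \setminus I$ with both endpoints outside $I$. The work is in verifying $I \triangle V(P) \in {\cal I}_1$ and $I \triangle V(P) \in {\cal I}_2$: enumerating the vertices of $P$ in a suitable order, the no-shortcut property is precisely what makes the disjointness hypotheses of Lemma~\ref{lemma:IriT76_sequence} hold, applied to ${\cal M}_1$ for the arcs of the first kind and to ${\cal M}_2$ for those of the second kind (one could iterate Lemma~\ref{lemma:circuit_union} instead). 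I expect this verification — fixing the order of the vertices of $P$ and checking all the fundamental-circuit conditions — to be the main obstacle.

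Granting this lemma and using that $I$ is maximum, $D_I$ has no $X_1$--$X_2$ directed path and $X_1 \cap X_2 = \emptyset$. I would then let $Z$ be the set of vertices of $D_I$ reachable from $X_1$ (so $X_1 \subseteq Z$ and $Z \cap X_2 = \emptyset$) and take $X := Z$. It suffices to prove $U \setminus Z \subseteq {\bf cl}_{{\cal M}_1}(I \setminus Z)$ and $Z \subseteq {\bf cl}_{{\cal M}_2}(I \cap Z)$, since then ${\bf rk}_{{\cal M}_1}(U \setminus Z) = |I \setminus Z|$ and ${\bf rk}_{{\cal M}_2}(Z) = |I \cap Z|$, whose sum is $|I|$. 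For the first inclusion, take $u \in (U \setminus Z) \setminus I$: as $u \notin X_1$ we get $u \in {\bf cl}_{{\cal M}_1}(I)$, and every $x \in \mathbb{D}_{{\cal M}_1}(u,I)$ gives an arc $xu$ of $D_I$, so $x \notin Z$ (otherwise $u$ would be reachable); hence $\mathbb{C}_{{\cal M}_1}(u,I) \subseteq (I \setminus Z) + u$ and Lemma~\ref{lemma:closure} yields $u \in {\bf cl}_{{\cal M}_1}(I \setminus Z)$. For the second, take $u \in Z \setminus I$: as $u \notin X_2$ we get $u \in {\bf cl}_{{\cal M}_2}(I)$, and every $x \in \mathbb{D}_{{\cal M}_2}(u,I)$ gives an arc $ux$ of $D_I$, so $x \in Z$; hence $\mathbb{C}_{{\cal M}_2}(u,I) \subseteq (I \cap Z) + u$ and Lemma~\ref{lemma:closure} yields $u \in {\bf cl}_{{\cal M}_2}(I \cap Z)$. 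This exhibits the required $X$ and completes the proof.
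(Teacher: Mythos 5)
The paper does not prove Lemma~\ref{lemma:Edmonds70} at all; it is cited from Edmonds~\cite{Edmonds70} as a classical black box, so there is no ``paper proof'' to compare against. Your proposal is a correct reconstruction of the standard augmenting-path proof of the matroid intersection theorem: the $\le$ direction via the split $I = (I\cap X) \cup (I\setminus X)$ is fine, and the $\ge$ direction via the exchange digraph, the augmenting-path lemma, and the reachable set $Z$ is also fine --- I checked the closure arguments for $U\setminus Z$ and $Z$, and both yield ${\bf rk}_{{\cal M}_1}(U\setminus Z) = |I\setminus Z|$ and ${\bf rk}_{{\cal M}_2}(Z) = |I\cap Z|$, so $\mu_{{\cal M}_1{\cal M}_2}(Z) = |I|$ as claimed.

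The one place you flag as ``the main obstacle'' --- turning shortest-path minimality into the hypotheses of Lemma~\ref{lemma:IriT76_sequence} --- does work out, but the ``suitable order'' is genuinely different for the two matroids, and you should say so. Writing the shortest $X_1$--$X_2$ path as $y_0, x_1, y_1, \dots, x_k, y_k$ with $y_0\in X_1$, $y_k\in X_2$, $x_i\in I$: for ${\cal M}_1$ you apply Lemma~\ref{lemma:IriT76_sequence} with the pairs $(y_i, x_i)$ in the path order $i=1,\dots,k$, since an offending $x_i\in\mathbb{C}_{{\cal M}_1}(y_j,I)$ with $i<j$ is a forward-jumping arc $x_i y_j$ contradicting shortest-ness; then add $y_0\notin{\bf cl}_{{\cal M}_1}(I)={\bf cl}_{{\cal M}_1}(J)$. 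For ${\cal M}_2$ the pairs are $(y_{i-1},x_i)$, and you must enumerate them in the \emph{reverse} order $i=k,k-1,\dots,1$, since an offending $x_i\in\mathbb{C}_{{\cal M}_2}(y_{j-1},I)$ with $j>i$ in that reversed enumeration is precisely the forward-jumping arc $y_{j-1}x_i$ going from an earlier to a later vertex on the path, again excluded by shortest-ness; then add $y_k\notin{\bf cl}_{{\cal M}_2}(I)$. Without the reversal, the excluded arcs would be backward chords, which a shortest path need not avoid, and the hypotheses of Lemma~\ref{lemma:IriT76_sequence} would not follow. With that caveat made explicit, the proof is complete and correct.
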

Since $\mu_{{\cal M}_1{\cal M}_2}$ is submodular
(see, e.g., \cite[Lemma 1.3.1]{Oxley11}), 
it is known that 
an inclusion-wise minimal minimizer 
of the right-hand side of \eqref{eq:Edmonds70} 
is uniquely determined.
Furthermore, we can find 
the inclusion-wise minimal minimizer 
of the right-hand side of \eqref{eq:Edmonds70} 
in time bounded by a polynomial in $|U|$ 
(see, e.g., \cite[Note~10.12]{Murota03}). 
We call the inclusion-wise minimal minimizer 
of the right-hand side of \eqref{eq:Edmonds70} 
the \emph{critical subset of ${\cal M}_1,{\cal M}_2$}. 

\begin{definition}
For each common independent set $I$ of ${\cal M}_1,{\cal M}_2$, 
we define the simple directed graph 
${\bf G}_{{\cal M}_1{\cal M}_2}(I) = 
(U, {\bf A}_{{\cal M}_1{\cal M}_2}(I))$ 
as follows. 
\begin{itemize}
\item
Let $u$ be an element in $U\setminus I$, and 
let $v$ be an element in $I$. 
\begin{itemize}
\item 
$vu \in {\bf A}_{{\cal M}_1{\cal M}_2}(I)$ 
if and only if $u \in {\bf cl}_{{\cal M}_1}(I)$ and 
$v \in \mathbb{C}_{{\cal M}_1}(u,I)$. 
\item
$uv \in {\bf A}_{{\cal M}_1{\cal M}_2}(I)$ 
if and only if $u \in {\bf cl}_{{\cal M}_2}(I)$ and 
$v \in \mathbb{C}_{{\cal M}_2}(u,I)$. 
\end{itemize}
\end{itemize}
\end{definition}

Let $I$ be  
a common independent set of ${\cal M}_1, {\cal M}_2$.
Furthermore, let $C$ be a simple directed cycle in 
${\bf G}_{{\cal M}_1{\cal M}_2}(I)$. 
Let $X$ (resp.\ $Y$) be 
the set of elements in $U \setminus I$ (resp.\ 
$I$) that $C$ passes through. 
Then we define 
$I \ominus C := (I \cup X) \setminus Y$. 

\begin{lemma} \label{lemma:IriT76_intersection}
Let $I$ be a common independent set of ${\cal M}_1, {\cal M}_2$. 
Let $C$ be a simple directed cycle in 
${\bf G}_{{\cal M}_1{\cal M}_2}(I)$ such that 
there does not exist a shortcut arc in 
${\bf A}_{{\cal M}_1{\cal M}_2}(I)$ for $C$. 
Then $I \ominus C$ is a common independent set of ${\cal M}_1, {\cal M}_2$. 
\end{lemma}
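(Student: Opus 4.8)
The plan is to show separately that $I \ominus C \in \mathcal{I}_1$ and that $I \ominus C \in \mathcal{I}_2$; by symmetry of the construction (arcs of the form $vu$ encode circuits of $\mathcal{M}_1$, arcs $uv$ encode circuits of $\mathcal{M}_2$) it suffices to handle one matroid, say $\mathcal{M}_1$. Write $X = \{u_1, u_2, \dots, u_k\}$ for the elements of $U \setminus I$ that $C$ passes through and $Y = \{v_1, v_2, \dots, v_k\}$ for the elements of $I$ that $C$ passes through, indexed so that the cyclic order along $C$ is $v_1, u_1, v_2, u_2, \dots, v_k, u_k, v_1$. In particular, for each $i$ the arc $v_i u_i$ lies in $\mathbf{A}_{\mathcal{M}_1\mathcal{M}_2}(I)$, so $u_i \in \mathbf{cl}_{\mathcal{M}_1}(I)$ and $v_i \in \mathbb{C}_{\mathcal{M}_1}(u_i, I)$. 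I would then try to apply Lemma~\ref{lemma:IriT76_sequence} (the Iri--Tomizawa exchange lemma) with $\ell = k$ and the sequences $(u_1, \dots, u_k)$ and $(v_1, \dots, v_k)$ to conclude $J := (I \cup X) \setminus Y = I \ominus C$ is independent in $\mathcal{M}_1$.

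To invoke that lemma I must verify its hypothesis: $v_i \notin \mathbb{C}_{\mathcal{M}_1}(u_j, I)$ for all $i < j$. This is exactly where the no-shortcut assumption enters, and it is the main obstacle. Suppose for contradiction that $v_i \in \mathbb{C}_{\mathcal{M}_1}(u_j, I)$ for some $i < j$. Then the arc $v_i u_j$ belongs to $\mathbf{A}_{\mathcal{M}_1\mathcal{M}_2}(I)$. Since $j \neq i$, this arc is not an arc of $C$ itself (the only $\mathcal{M}_1$-arc of $C$ leaving $v_i$ is $v_i u_i$). I would argue that $v_i u_j$ is a shortcut arc for $C$: adding it to $C$ creates the shorter simple directed cycle $v_i, u_j, v_{j+1}, u_{j+1}, \dots, v_k, u_k, v_1, u_1, \dots, v_{i-1}, u_{i-1}, v_i$ (following $C$ from $u_j$ around to $v_i$), which is a simple directed cycle different from $C$ — it omits, e.g., $v_{i+1}$. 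Actually one must be slightly careful about orientation and about the case where the chord goes "backwards" along $C$; the cleanest formulation is: any chord of a simple directed cycle that is itself an arc of the digraph and is consistent with creating a new directed cycle is a shortcut, and $v_i u_j$ with $i \ne j$ is of this form. This contradicts the hypothesis of the lemma, so no such pair $(i,j)$ exists, and Lemma~\ref{lemma:IriT76_sequence} yields $J \in \mathcal{I}_1$.

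For $\mathcal{M}_2$ the argument is identical after reversing roles: the arcs $u_i v_{i+1}$ (indices mod $k$) of $C$ certify $u_i \in \mathbf{cl}_{\mathcal{M}_2}(I)$ and $v_{i+1} \in \mathbb{C}_{\mathcal{M}_2}(u_i, I)$, and one applies Lemma~\ref{lemma:IriT76_sequence} to $\mathcal{M}_2$ with the sequence of $u_i$'s (in the appropriate order) and the corresponding $v_{i+1}$'s; a $\mathcal{M}_2$-chord $u_i v_{j+1}$ with $j \neq i$ would again be a shortcut arc for $C$, contradicting the assumption. Hence $J \in \mathcal{I}_2$ as well, so $I \ominus C = J$ is a common independent set of $\mathcal{M}_1, \mathcal{M}_2$. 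The only genuinely delicate point is the bookkeeping in the shortcut argument — matching the cyclic indexing of $C$ to the $i < j$ condition of Lemma~\ref{lemma:IriT76_sequence} and making sure the "new" cycle produced by a chord is simple, directed, and distinct from $C$ — but this is a finite combinatorial check on the cycle $C$ and does not require any further matroid theory.
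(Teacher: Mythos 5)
Your argument is correct and is exactly the unpacking of the paper's one-line proof, which simply cites Lemma~\ref{lemma:IriT76_sequence}: the no-shortcut hypothesis is precisely what rules out the ``bad'' fundamental-circuit memberships $v_i \in \mathbb{C}_{\mathcal{M}_1}(u_j,I)$ (and the analogous ones for $\mathcal{M}_2$) for $i \ne j$, since any such membership would produce an arc of $\mathbf{G}_{\mathcal{M}_1\mathcal{M}_2}(I)$ that, together with the $C$-path from $u_j$ to $v_i$, forms a strictly shorter simple directed cycle. With those hypotheses verified, Lemma~\ref{lemma:IriT76_sequence} applied once to each matroid gives $I \ominus C \in \mathcal{I}_1 \cap \mathcal{I}_2$, as you state.
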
 
\begin{proof} 
This lemma immediately follows from Lemma~\ref{lemma:IriT76_sequence}. 
\end{proof} 

\section{Algorithm} 

For notional simplicity, 
we define 
\begin{equation*}
\mathbb{C}_S(\cdot, \cdot) \coloneqq \mathbb{C}_{{\cal M}_S}(\cdot, \cdot), \ \ 
{\bf rk}_S(\cdot) \coloneqq {\bf rk}_{{\cal M}_S}(\cdot), \ \ 
{\bf cl}_S(\cdot) \coloneqq {\bf cl}_{{\cal M}_S}(\cdot)
\end{equation*}
for each element $S \in \{D,H\}$. 
For each element $S \in \{D, H\}$ and each 
subset $F \subseteq E$, 
we define 
\begin{equation*}
\begin{split}
{\bf he}_S(F) & \coloneqq 
\{e \in F \mid \mbox{$e \succsim_S f$ 
for every element $f \in F$}\} \\
{\bf ta}_S(F) & \coloneqq 
\{e \in F \mid \mbox{$f \succsim_S e$ 
for every element $f \in F$}\}.
\end{split}
\end{equation*}

\begin{definition}
For each common independent set $I$ of ${\cal M}_D,{\cal M}_H$, 
we define ${\bf block}(I)$ as the set of 
elements $e \in E \setminus I$ 
such that $e \in {\bf cl}_H(I)$ and 
$e$ blocks $I$.
\end{definition} 

\subsection{Subroutines} 

For each subset $F \subseteq E$, we define
the matroids ${\cal M}_D\langle F \rangle, {\cal M}_H\langle F \rangle$ and 
the subset ${\rm Ch}_D(F) \subseteq F$ as 
the outputs of 
Algorithms~\ref{alg:matroid_D}, 
\ref{alg:matroid_H}, and \ref{alg:choice_D}, 
respectively. 

Although facts similar to the lemmas in this subsection
were proved in \cite{Kamiyama22,Kamiyama25}, we give their proofs 
for completeness. 

\begin{algorithm}[ht]
Define $N_0 \coloneqq F$ and ${\cal M}_0^{\prime} \coloneqq {\cal M}_D$.
Set $s \coloneqq 0$.\\
If $N_0 = \emptyset$, then output the matroid $(\emptyset,\{\emptyset\})$ as 
${\cal M}_D\langle F \rangle$ and halt.\\
\While{$N_s\neq \emptyset$}
{
   Set $s \coloneqq s+1$.\\
   Define $T_s \coloneqq {\bf he}_D(N_{s-1})$, $N_s \coloneqq N_{s-1} \setminus T_s$,
   ${\cal M}_s \coloneqq {\cal M}_{s-1}^{\prime}|T_s$, and 
   ${\cal M}_s^{\prime} \coloneqq {\cal M}_{s-1}^{\prime} / T_s$.  
}
Output the direct sum of 
${\cal M}_1,{\cal M}_2, \dots, {\cal M}_s$ as ${\cal M}_D\langle F \rangle$
and halt.
\caption{Algorithm for defining ${\cal M}_D\langle F \rangle$}
\label{alg:matroid_D}
\end{algorithm}

\begin{lemma} \label{lemma:base_matroid_D}
Let $F$ be a subset of $E$.
Let $B$ be a base of ${\cal M}_D\langle F \rangle$. 
Assume that Algorithm~\ref{alg:matroid_D} halts when $s = k$, and 
we use the notation in Algorithm~\ref{alg:matroid_D}. 
Define $F_s \coloneqq \bigcup_{j = 1}^sT_j$ 
and $B_s \coloneqq B \cap F_s$ 
for each integer $s \in [k]$. 
Then $B_s$ is a base of ${\cal M}_D|F_s$
for every integer $s \in [k]$.
\end{lemma}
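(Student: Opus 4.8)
The plan is to induct on $s$, at each step gluing a base of a restriction to a base of a contraction via Lemma~\ref{lemma:contraction}, after first identifying each ${\cal M}_s$ as a minor of ${\cal M}_D$ living on the sets $F_s$.

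First I would unwind Algorithm~\ref{alg:matroid_D}. Since it halts when $s = k$, the sets $T_1, \dots, T_k$ are pairwise disjoint, $N_s = F \setminus F_s$ for every $s$, and in particular $F_k = F$. From ${\cal M}_0^{\prime} = {\cal M}_D$ together with ${\cal M}_s^{\prime} = {\cal M}_{s-1}^{\prime}/T_s$, repeated use of Lemma~\ref{lemma:minor} gives ${\cal M}_{s-1}^{\prime} = {\cal M}_D/F_{s-1}$; hence, using $F_s = F_{s-1} \cup T_s$ and Lemma~\ref{lemma:minor} once more,
\[
{\cal M}_s = {\cal M}_{s-1}^{\prime}|T_s = ({\cal M}_D/F_{s-1})|T_s = ({\cal M}_D|F_s)/F_{s-1}.
\]
Moreover, since ${\cal M}_D\langle F \rangle$ is the direct sum of ${\cal M}_1, \dots, {\cal M}_k$ over the pairwise disjoint ground sets $T_1, \dots, T_k$, the definition of direct sum implies that the base $B$ of ${\cal M}_D\langle F \rangle$ satisfies: $B \cap T_s$ is a base of ${\cal M}_s$ for every $s \in [k]$.

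Then I would run the induction. For $s = 1$ we have $F_1 = T_1$ and ${\cal M}_1 = {\cal M}_D|T_1$, so $B_1 = B \cap T_1$ is a base of ${\cal M}_D|F_1$ by the previous paragraph. For the inductive step, assume $B_{s-1}$ is a base of ${\cal M}_D|F_{s-1}$. Because restriction of a restriction is a restriction, $({\cal M}_D|F_s)|F_{s-1} = {\cal M}_D|F_{s-1}$, so $B_{s-1}$ is a base of $({\cal M}_D|F_s)|F_{s-1}$. Since $({\cal M}_D|F_s)/F_{s-1} = {\cal M}_s$ and $B \cap T_s$ is a base of ${\cal M}_s$ contained in $T_s = F_s \setminus F_{s-1}$, Lemma~\ref{lemma:contraction} (the ``base'' version) yields that $(B \cap T_s) \cup B_{s-1}$ is a base of ${\cal M}_D|F_s$. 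Finally, $(B \cap T_s) \cup B_{s-1} = (B \cap T_s) \cup (B \cap F_{s-1}) = B \cap F_s = B_s$, which completes the induction; taking $s = k$ also recovers the fact that $B = B_k$ is a base of ${\cal M}_D|F$.

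I do not expect a genuine obstacle here: the argument is a routine induction once the minor identity ${\cal M}_s = ({\cal M}_D|F_s)/F_{s-1}$ is established. The only points that call for care are keeping the applications of Lemma~\ref{lemma:minor} straight and checking that the glued set $(B \cap T_s) \cup B_{s-1}$ really equals $B_s$, both of which are elementary.
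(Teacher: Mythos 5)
Your proof is correct and follows essentially the same route as the paper's: establish via Lemma~\ref{lemma:minor} that ${\cal M}_s$ is the minor $({\cal M}_D|F_s)/F_{s-1}$, note that $B \cap T_s$ is a base of ${\cal M}_s$ by the direct-sum structure, and then induct using Lemma~\ref{lemma:contraction} to glue $B \cap T_s$ onto $B_{s-1}$.
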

\begin{proof}
Notice that since $B$ is a base of ${\cal M}_D\langle F \rangle$, 
$B \cap T_s$ is a base of ${\cal M}_s$ for every 
integer $s \in [k]$. 
Thus, since ${\cal M}_1 = {\cal M}_D|T_1$ and  
$B \cap T_1$ is a base of ${\cal M}_1$, 
$B_1$ is a base of ${\cal M}_D|F_1$. 

Let $s$ be an integer in $[k-1]$.
Assume that 
$B_s$ is a base of ${\cal M}_D|F_s$. 
Notice that ${\cal M}_D|F_s=({\cal M}_D|F_{s+1})|F_s$ and 
Lemma~\ref{lemma:minor} implies that 
\begin{equation*}
{\cal M}_{s+1} = {\cal M}_s^{\prime}|T_{s+1} = 
({\cal M}_D/F_s)|T_{s+1} = ({\cal M}_D|F_{s+1})/F_s.
\end{equation*} 
Since 
$B \cap T_{s+1}$ is a base of ${\cal M}_{s+1}$, 
Lemma~\ref{lemma:contraction} implies that 
$B_{s+1} = B_s \cup (B \cap T_{s+1})$ 
is a base of ${\cal M}_D|F_{s+1}$. 
This completes the proof. 
\end{proof} 

Lemma~\ref{lemma:base_matroid_D}
implies that, for every subset $F \subseteq E$, 
every base $B$ of ${\cal M}_D\langle F \rangle$, and 
every element $e \in F \setminus B$, 
we have $e \in {\bf cl}_D(B)$.

\begin{lemma} \label{lemma:circuit_matroid_D}
Let $F$ be a subset of $E$.
Let $B$ be a base of ${\cal M}_D\langle F \rangle$.
Let $e$ be an element in $F \setminus B$. 
\begin{description}
\item[(i)]
$e \sim_D f$ for every element 
$f \in \mathbb{C}_{{\cal M}_D\langle F \rangle}(e,B)$.
\item[(ii)]
$\mathbb{C}_{{\cal M}_D\langle F \rangle}(e,B) \subseteq 
\mathbb{C}_D(e,B)$. 
\item[(iii)]
For every element 
$f \in \mathbb{C}_D(e,B) \setminus \mathbb{C}_{{\cal M}_D\langle F \rangle}(e,B)$, 
we have $f \succ_D e$.
\end{description}
\end{lemma}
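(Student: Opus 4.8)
The plan is to track where $e$ sits in the head‑class decomposition produced by Algorithm~\ref{alg:matroid_D} and to move fundamental circuits and independence back and forth between ${\cal M}_D$, the restrictions ${\cal M}_D|F_s$, and the minors ${\cal M}_s$. I would reuse the notation of Lemma~\ref{lemma:base_matroid_D}: assume the algorithm halts at $s=k$, set $F_s \coloneqq \bigcup_{j=1}^s T_j$ (so $F_k=F$), recall from its proof that ${\cal M}_s = ({\cal M}_D|F_s)/F_{s-1}$ and that $B_s \coloneqq B\cap F_s$ is a base of ${\cal M}_D|F_s$; in particular $B$ is a base of ${\cal M}_D|F$, so $e\in{\bf cl}_D(B)$ (as already noted after Lemma~\ref{lemma:base_matroid_D}) and $\mathbb{C}_D(e,B)$ is well defined, and $e\in{\bf cl}_{{\cal M}_D\langle F\rangle}(B)$ as well. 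Let $t$ be the index with $e\in T_t$. The two localization facts that drive everything are: (a) since ${\cal M}_D\langle F\rangle$ is a direct sum over the components $T_1,\dots,T_k$, the circuit $\mathbb{C}_{{\cal M}_D\langle F\rangle}(e,B)$ lies entirely in $T_t$ and coincides with the fundamental circuit of $e$ and $B\cap T_t$ in ${\cal M}_t$; and (b) since $B_t\subseteq B$ and $e\in{\bf cl}_D(B_t)$ (because $B_t$ is a base of ${\cal M}_D|F_t$ and $e\in F_t$), the unique circuit of ${\cal M}_D$ contained in $B+e$ already lies in $B_t+e$, whence $\mathbb{C}_D(e,B)=\mathbb{C}_D(e,B_t)\subseteq B_t+e\subseteq F_t$. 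So $\mathbb{C}_{{\cal M}_D\langle F\rangle}(e,B)\subseteq T_t$ and every element of $\mathbb{C}_D(e,B)$ other than $e$ lies in $F_t = T_1\cup\cdots\cup T_t$.

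Given (a), part (i) is immediate: any $f\in\mathbb{C}_{{\cal M}_D\langle F\rangle}(e,B)$ lies in $T_t={\bf he}_D(N_{t-1})$ together with $e$, and any two members of a single head class satisfy $e\succsim_D f$ and $f\succsim_D e$, so $e\sim_D f$. For part (ii), if $f\in\mathbb{C}_{{\cal M}_D\langle F\rangle}(e,B)$ then $B+e-f$ is independent in ${\cal M}_D\langle F\rangle$ and has size $|B|$, hence is a base of ${\cal M}_D\langle F\rangle$; applying Lemma~\ref{lemma:base_matroid_D} to this base at the top level $F_k=F$ gives $B+e-f\in{\cal I}_D$, i.e. $f\in\mathbb{C}_D(e,B)$.

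Part (iii) is the substantive case. Take $f\in\mathbb{C}_D(e,B)\setminus\mathbb{C}_{{\cal M}_D\langle F\rangle}(e,B)$; since $e$ lies in both circuits, $f\neq e$, so $f\in B$, and by (b) we have $f\in T_r$ for some $r\le t$. The key point is to rule out $r=t$: if $f\in T_t$, then $B_t+e-f\subseteq B+e-f\in{\cal I}_D$, so $B_t+e-f\in{\cal I}_D$ by (I1); since ${\cal M}_t=({\cal M}_D|F_t)/F_{t-1}$ and $B_{t-1}$ is a base of ${\cal M}_D|F_{t-1}$, Lemma~\ref{lemma:contraction} turns this into $(B\cap T_t)+e-f\in{\cal I}_{{\cal M}_t}$; as $B+e-f$ coincides with the base $B$ on every other component, $B+e-f$ would be independent in the direct sum, contradicting $f\notin\mathbb{C}_{{\cal M}_D\langle F\rangle}(e,B)$. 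Hence $r<t$, and then $T_r$ strictly beats $T_t$: $f\succsim_D e$ because $e\in N_{t-1}\subseteq N_{r-1}$ and $f\in{\bf he}_D(N_{r-1})$, while $e\succsim_D f$ is impossible, since transitivity would put $e\in{\bf he}_D(N_{r-1})=T_r$, contradicting $e\in T_t$ and the disjointness of the $T_s$. Therefore $f\succ_D e$.

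The main obstacle is entirely in part (iii): one must move independence correctly along the chain of minors ${\cal M}_s=({\cal M}_D|F_s)/F_{s-1}$ via Lemmas~\ref{lemma:minor} and~\ref{lemma:contraction}, and in particular cleanly exclude the case $f\in T_t$; the remainder is bookkeeping about the direct‑sum structure and the head‑class ordering generated by Algorithm~\ref{alg:matroid_D}.
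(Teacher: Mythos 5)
Your proof is correct and uses essentially the same machinery as the paper (the tiered decomposition from Algorithm~\ref{alg:matroid_D}, Lemma~\ref{lemma:base_matroid_D}, and Lemma~\ref{lemma:contraction} to move independence between $\mathcal{M}_D$, $\mathcal{M}_D|F_s$, and $\mathcal{M}_s$). The only structural difference is in (iii): the paper directly assumes $e \succsim_D f$ and derives a contradiction by noting this forces $f \notin F_{\ell-1}$, hence $B' \cap F_{\ell-1} = B \cap F_{\ell-1}$, and then applies Lemma~\ref{lemma:contraction}; you instead first establish the localization $\mathbb{C}_D(e,B) \subseteq F_t$ and then separately exclude $f \in T_t$, which is a slightly longer route to the same contradiction.
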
 
\begin{proof}
In this proof, we use the notation in Algorithm~\ref{alg:matroid_D}.
Assume that Algorithm~\ref{alg:matroid_D} halts when $s = k$. 
Furthermore, we assume that $e \in T_{\ell}$. 
For each integer $s \in [k]$, 
we define $F_s \coloneqq \bigcup_{j=1}^{s} T_j$.  
Define $F_0 \coloneqq \emptyset$.
Define $C \coloneqq \mathbb{C}_{{\cal M}_D\langle F \rangle}(e,B)$. 
Then $C$ is a circuit of ${\cal M}_{\ell}$. 
Thus, (i) holds. 

Let $f$ be an element in $C$.
Then $f \in T_{\ell}$ and 
$(B \cap T_{\ell}) + e - f$ is an independent set of ${\cal M}_{\ell}$.
Thus, since $B$ is a base of ${\cal M}_D\langle F \rangle$, 
$B + e - f$ is a base of ${\cal M}_D\langle F \rangle$. 
This and Lemma~\ref{lemma:base_matroid_D} imply that 
$B + e - f \in {\cal I}_D$. 
Thus, $f \in \mathbb{C}_D(e,B)$. 

Let $f$ be an element in $\mathbb{C}_D(e,B) \setminus C$. 
Assume that $e \succsim_D f$. 
Define $B^{\prime} \coloneqq B + e - f$. 
Then since $f \in \mathbb{C}_D(e,B)$, 
$B^{\prime} \in {\cal I}_D$.
Since $e \succsim_D f$ (i.e., $f \notin F_{\ell-1}$), 
$B \cap F_{\ell-1} = B^{\prime} \cap F_{\ell-1}$.
Thus, since 
$B^{\prime} \cap F_{\ell} \in {\cal I}_D$ and 
Lemma~\ref{lemma:base_matroid_D} implies that 
$B \cap F_{\ell-1}$ is a base of ${\cal M}_D|F_{\ell-1}$,
Lemma~\ref{lemma:contraction} implies that  
$B^{\prime} \cap T_{\ell}$ is an independent set of ${\cal M}_{\ell}$. 
However, since 
$C \subseteq B^{\prime} \cap T_{\ell}$, 
this is a contradiction. 
\end{proof}

\begin{lemma} \label{lemma:tail_D}
Let $F$ be a subset of $E$.
Let $C$ be a circuit of ${\cal M}_D$
such that 
$C \subseteq F$. 
Then ${\bf ta}_D(C)$
is a dependent set of 
${\cal M}_D\langle F \rangle$.
\end{lemma}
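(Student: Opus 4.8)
The plan is to locate ${\bf ta}_D(C)$ inside a single summand of the direct sum produced by Algorithm~\ref{alg:matroid_D}. Throughout I use the notation of that algorithm, assume it halts with $s=k$, and set $F_s\coloneqq\bigcup_{j=1}^{s}T_j$ and $F_0\coloneqq\emptyset$, so that $N_{s-1}=F\setminus F_{s-1}$. Since $T_s={\bf he}_D(N_{s-1})$, transitivity and completeness of $\succsim_D$ give the basic observation that for $e,f\in F$ with $e\sim_D f$, the elements $e$ and $f$ lie in the same set $T_s$: if $e\in T_p$ and $f\in T_q$ with $p\le q$, then $f\in N_{p-1}$, and $e\in{\bf he}_D(N_{p-1})$ together with $f\sim_D e$ forces $f\in{\bf he}_D(N_{p-1})=T_p$, hence $p=q$. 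Now let $\ell$ be the largest index with $C\cap T_\ell\neq\emptyset$, so $C\subseteq F_\ell$. I would first show ${\bf ta}_D(C)=C\cap T_\ell$: for $e\in C\cap T_\ell$ and any $f\in C$, say $f\in T_s$ with $s\le\ell$, one has $f\succsim_D e$ (trivially if $s=\ell$, and because $e\in T_\ell\subseteq N_{s-1}$ while $f\in{\bf he}_D(N_{s-1})$ if $s<\ell$), so $e\in{\bf ta}_D(C)$; conversely, if some $e\in{\bf ta}_D(C)$ lay in $T_s$ with $s<\ell$, then $e\sim_D f$ for any $f\in C\cap T_\ell$, contradicting the basic observation. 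In particular ${\bf ta}_D(C)=C\setminus F_{\ell-1}$, and this set is nonempty because a finite set carrying a complete transitive relation has a nonempty set of $\succsim_D$-minimal elements.

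Next I would pass to the matroid ${\cal M}_\ell$. Exactly as in the proof of Lemma~\ref{lemma:base_matroid_D}, Lemma~\ref{lemma:minor} gives ${\cal M}_\ell=({\cal M}_D|F_\ell)/F_{\ell-1}$; and since $C\subseteq F_\ell$ and restriction does not change which subsets of $F_\ell$ are independent, $C$ is a circuit of ${\cal M}_D|F_\ell$ and ${\bf rk}_{{\cal M}_D|F_\ell}$ agrees with ${\bf rk}_D$ on subsets of $F_\ell$. The matroid-theoretic core is the general fact that if $C$ is a circuit of a matroid ${\cal N}$ and $X$ is a subset of its ground set with $C\setminus X\neq\emptyset$, then $C\setminus X$ is a dependent set of ${\cal N}/X$: in this situation $C\cap X$ is a proper subset of the circuit $C$, hence independent, so ${\bf rk}_{{\cal N}}(C\cap X)=|C\cap X|$, and combining this with ${\bf rk}_{{\cal N}}(C)=|C|-1$ and submodularity of the rank function (see, e.g., \cite[Lemma~1.3.1]{Oxley11}) yields
\begin{equation*}
{\bf rk}_{{\cal N}}\big((C\setminus X)\cup X\big)={\bf rk}_{{\cal N}}(C\cup X)\le{\bf rk}_{{\cal N}}(C)+{\bf rk}_{{\cal N}}(X)-{\bf rk}_{{\cal N}}(C\cap X)=|C\setminus X|-1+{\bf rk}_{{\cal N}}(X),
\end{equation*}
which by the definition of ${\cal N}/X$ says precisely that $C\setminus X$ is not an independent set of ${\cal N}/X$. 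Applying this with ${\cal N}={\cal M}_D|F_\ell$ and $X=F_{\ell-1}$ shows that ${\bf ta}_D(C)=C\setminus F_{\ell-1}$ is a dependent set of ${\cal M}_\ell$.

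Finally, ${\cal M}_D\langle F\rangle$ is by construction the direct sum of ${\cal M}_1,\dots,{\cal M}_k$, whose ground sets $T_1,\dots,T_k$ partition $F$, and a subset of $F$ is independent in a direct sum if and only if its intersection with each $T_s$ is independent in ${\cal M}_s$. Since ${\bf ta}_D(C)\subseteq T_\ell$ and ${\bf ta}_D(C)$ is dependent in ${\cal M}_\ell$, it is dependent in ${\cal M}_D\langle F\rangle$, which is what we wanted.

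I expect the main obstacle to be the second step — verifying that ${\bf ta}_D(C)$ sits inside a single indifference class $T_\ell$ while the remaining elements of $C$ lie in $F_{\ell-1}$ — rather than the matroid content, which reduces to the short general lemma about circuits and contraction. The delicate point is ruling out that an element of ${\bf ta}_D(C)$ ``falls through'' to a later class $T_s$; this is handled by the basic observation above, which is exactly where transitivity and completeness of $\succsim_D$ enter.
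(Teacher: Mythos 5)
Your proof is correct, and it takes a genuinely different route from the paper's. Both arguments reduce to showing that ${\bf ta}_D(C)$ is dependent in the summand ${\cal M}_\ell = ({\cal M}_D|F_\ell)/F_{\ell-1}$, but the paper does this concretely: it fixes a base $B$ of ${\cal M}_D\langle F\rangle$, sets $B_{\ell-1}=B\cap F_{\ell-1}$, and uses Lemma~\ref{lemma:circuit_union} to patch together the fundamental circuits $C_f$ (for $f\in C\setminus({\bf ta}_D(C)\cup B_{\ell-1})$) to manufacture a circuit $C'$ of ${\cal M}_D$ with $C'\cap T_\ell\subseteq{\bf ta}_D(C)$ and $C'\setminus{\bf ta}_D(C)\subseteq B_{\ell-1}$, and then invokes Lemma~\ref{lemma:contraction}. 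You instead isolate the clean matroid fact that a circuit $C$ of ${\cal N}$ with $C\setminus X\neq\emptyset$ yields a dependent set $C\setminus X$ of ${\cal N}/X$, and prove it in two lines from submodularity of the rank function and the paper's rank-based definition of ${\cal I}/X$; this sidesteps both the choice of a base and Lemma~\ref{lemma:circuit_union} entirely. You also spend more effort than the paper does to pin down that ${\bf ta}_D(C)=C\cap T_\ell=C\setminus F_{\ell-1}$ for the last $\ell$ meeting $C$ (the paper takes ${\bf ta}_D(C)\subseteq T_\ell$ for granted), but that care pays off because it is exactly what makes the contraction lemma applicable with $X=F_{\ell-1}$. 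The general fact you prove is essentially Oxley's description of circuits of a contraction; your argument is a self-contained, more abstract substitute for the paper's constructive one, and either would do.
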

\begin{proof}
For notational simplicity, we define $X \coloneqq {\bf ta}_D(C)$. 
In what follows, we use the notation 
in Algorithm~\ref{alg:matroid_D}.
Assume that Algorithm~\ref{alg:matroid_D} halts when $s = k$, and 
$X \subseteq T_{\ell}$. 
Define $F_s\coloneqq \bigcup_{j=1}^{s} T_j$
for each integer $s \in [k]$. 
Define $F_0 \coloneqq \emptyset$. 
Let $B$ be a base of ${\cal M}\langle F \rangle$. 
Define $B_{\ell-1} \coloneqq B \cap F_{\ell-1}$. 

The definition of ${\cal M}_D\langle F \rangle$
implies that our goal is to prove that 
$X$ is a dependent set of 
${\cal M}_{\ell}$.
Since ${\cal M}_{\ell} = ({\cal M}_D/F_{\ell-1})|T_{\ell}$, 
it is sufficient to prove that 
$X$ is a dependent set of 
${\cal M}_D/F_{\ell-1}$.

Lemma~\ref{lemma:base_matroid_D}
implies that $B_{\ell-1}$ 
is a base of ${\cal M}_D|F_{\ell-1}$. 
We prove that there exists a circuit $C^{\prime}$ of 
${\cal M}_D$ such that 
$C^{\prime} \cap T_{\ell} \subseteq X$
and 
$C^{\prime} \setminus X \subseteq B_{\ell-1}$. 
If we can prove this, then 
$(C^{\prime} \cap T_{\ell}) \cup B_{\ell-1} \notin {\cal I}_D$ 
since $C^{\prime} \subseteq (C^{\prime} \cap T_{\ell}) \cup B_{\ell-1}$.
Thus, it follows from Lemma~\ref{lemma:contraction} that 
$C^{\prime} \cap T_{\ell}$ 
is a dependent set of ${\cal M}_D / F_{\ell-1}$.  
Since $C^{\prime} \cap T_{\ell} \subseteq X$, 
(I1) implies that $X$ is a dependent set of 
${\cal M}_D / F_{\ell-1}$. 

If $C \setminus X \subseteq B_{\ell-1}$, then 
since $C$ satisfies the above conditions, 
the proof is done. 
Assume that $C \setminus X \not\subseteq B_{\ell-1}$.
Then we prove that, for every element 
$f \in C \setminus (X \cup B_{\ell-1})$, 
there exists a circuit $C_f$ of ${\cal M}_D$ such that  
$f \in C_f$ and $C_f - f \subseteq B_{\ell-1}$.
Since $f \notin X$ and $C_f - f\subseteq B_{\ell-1}$, 
$X \cap C_f = \emptyset$.
Since $X \neq \emptyset$,
we can prove 
the existence of a circuit $C^{\prime}$ of 
${\cal M}_D$ satisfying the above conditions
by selecting an arbitrary element in $X$ and 
applying Lemma~\ref{lemma:circuit_union}.

Let $f$ be an element in $C \setminus (X \cup B_{\ell-1})$.
Since $C \setminus X \subseteq F_{\ell-1}$, 
there 
exists an integer 
$s \in [\ell-1]$ such that 
$f \in T_s$.
Thus, since Lemma~\ref{lemma:base_matroid_D}
implies that $B \cap F_s$ is a base of ${\cal M}_D|F_s$, 
there exists a circuit $C_f$ of 
${\cal M}_D$ such that 
$f \in C_f$ and $C_f - f \subseteq B \cap F_s \subseteq B_{\ell-1}$. 
\end{proof} 

\begin{lemma} \label{lemma:tail_circuit_D}
Let $F$ be a subset of $E$, and 
let $C$ be a circuit of ${\cal M}_D$
such that 
$C \subseteq F$. 
Then for every element $e \in {\bf ta}_D(C)$, 
there exists a circuit $C^{\prime}$ 
of ${\cal M}_D\langle F \rangle$ such that 
$e \in C^{\prime} \subseteq {\bf ta}_D(C)$. 
\end{lemma}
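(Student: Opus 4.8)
The plan is to localize everything to the single step $\ell$ of Algorithm~\ref{alg:matroid_D} at which the elements of ${\bf ta}_D(C)$ are processed, and then reduce the statement to a closure computation in the contracted matroid $({\cal M}_D/F_{\ell-1})|T_{\ell}$. As setup, I would adopt the notation of Algorithm~\ref{alg:matroid_D}, assume it halts at $s=k$, and set $F_s := \bigcup_{j=1}^{s}T_j$ with $F_0 := \emptyset$. Since any two elements $e,e'\in{\bf ta}_D(C)$ satisfy $e\succsim_D e'$ and $e'\succsim_D e$, they are $\sim_D$-equivalent, so by completeness and transitivity of $\succsim_D$ they all lie in one tie class; hence there is $\ell\in[k]$ with ${\bf ta}_D(C)\subseteq T_{\ell}$. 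A short argument with completeness then gives ${\bf ta}_D(C)=C\cap T_{\ell}$ and shows that every element of $C\setminus{\bf ta}_D(C)$ is $\succ_D$-strictly preferred to every element of ${\bf ta}_D(C)$, hence lies in an earlier tie class, so $C\setminus{\bf ta}_D(C)\subseteq F_{\ell-1}$. Finally, as in the proof of Lemma~\ref{lemma:base_matroid_D} (via Lemma~\ref{lemma:minor}) one has ${\cal M}_{\ell}=({\cal M}_D/F_{\ell-1})|T_{\ell}$, and ${\cal M}_{\ell}$ is one of the summands of the direct sum ${\cal M}_D\langle F\rangle$.

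Next, $C\setminus{\bf ta}_D(C)$ is a proper subset of the circuit $C$, hence independent in ${\cal M}_D$, and being contained in $F_{\ell-1}$ it is independent in ${\cal M}_D|F_{\ell-1}$; by (I2) I would extend it to a base $B_{\ell-1}$ of ${\cal M}_D|F_{\ell-1}$ with $C\setminus{\bf ta}_D(C)\subseteq B_{\ell-1}$. Now fix $e\in{\bf ta}_D(C)$. Since $C$ is a circuit containing $e$, Lemma~\ref{lemma:closure} gives $e\in{\bf cl}_D(C-e)$, and since $C-e=({\bf ta}_D(C)-e)\cup(C\setminus{\bf ta}_D(C))\subseteq({\bf ta}_D(C)-e)\cup B_{\ell-1}$, monotonicity of the closure operator yields $e\in{\bf cl}_D\big(({\bf ta}_D(C)-e)\cup B_{\ell-1}\big)$.

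Then I would translate this statement into the minor ${\cal M}_D/F_{\ell-1}$: using Lemma~\ref{lemma:contraction} (together with Lemma~\ref{lemma:minor}) and the observation recorded after Lemma~\ref{lemma:closure} that adjoining elements of a set's closure does not change its rank — so ${\bf rk}_D(Y\cup B_{\ell-1})={\bf rk}_D(Y\cup F_{\ell-1})$ for any $Y\subseteq T_{\ell}$, because $F_{\ell-1}\subseteq{\bf cl}_D(B_{\ell-1})$ — one obtains ${\bf rk}_{{\cal M}_D/F_{\ell-1}}({\bf ta}_D(C))={\bf rk}_{{\cal M}_D/F_{\ell-1}}({\bf ta}_D(C)-e)$, i.e.\ $e\in{\bf cl}_{{\cal M}_D/F_{\ell-1}}({\bf ta}_D(C)-e)$. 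Applying Lemma~\ref{lemma:closure} to ${\cal M}_D/F_{\ell-1}$ with the set ${\bf ta}_D(C)-e$ and the element $e$ produces a circuit $C'$ of ${\cal M}_D/F_{\ell-1}$ with $e\in C'\subseteq{\bf ta}_D(C)$. Since $C'\subseteq{\bf ta}_D(C)\subseteq T_{\ell}$, it is a circuit of $({\cal M}_D/F_{\ell-1})|T_{\ell}={\cal M}_{\ell}$, hence a circuit of the direct sum ${\cal M}_D\langle F\rangle$; this is the desired $C'$.

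The main obstacle is precisely the passage, for the prescribed element $e$, from "$e$ is spanned by $({\bf ta}_D(C)-e)\cup B_{\ell-1}$ in ${\cal M}_D$'' to "$e$ is spanned by ${\bf ta}_D(C)-e$ in ${\cal M}_D/F_{\ell-1}$'': Lemma~\ref{lemma:tail_D} alone only tells us that ${\bf ta}_D(C)$ is dependent in ${\cal M}_D\langle F\rangle$, which does not by itself locate a circuit through $e$. What makes it work is the deliberate choice of $B_{\ell-1}$ to absorb $C\setminus{\bf ta}_D(C)$, so that the fundamental circuit of $e$ relative to $C$ survives the contraction of $F_{\ell-1}$; once that choice is fixed, the rank/closure bookkeeping in the minor is routine.
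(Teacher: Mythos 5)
Your proof is correct, and it takes a genuinely different route from the paper's. The paper argues by contradiction and minimality: it picks a circuit $C^{\ast}\subseteq F$ of ${\cal M}_D$ with $e\in{\bf ta}_D(C^{\ast})\subseteq{\bf ta}_D(C)$ minimizing $|{\bf ta}_D(C^{\ast})|$, invokes Lemma~\ref{lemma:tail_D} to find a circuit of ${\cal M}_{\ell}$ inside ${\bf ta}_D(C^{\ast})$, and then, assuming that circuit avoids $e$, uses two applications of strong circuit elimination (Lemma~\ref{lemma:elimination}) plus Lemma~\ref{lemma:contraction} to manufacture a circuit $C^{\diamond}$ of ${\cal M}_D$ with a strictly smaller tail set, contradicting the choice of $C^{\ast}$. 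You sidestep both Lemma~\ref{lemma:tail_D} and the minimality machinery: by choosing $B_{\ell-1}$ to absorb $C\setminus{\bf ta}_D(C)$, you get $e\in{\bf cl}_D\big(({\bf ta}_D(C)-e)\cup B_{\ell-1}\big)$ for free from $e\in{\bf cl}_D(C-e)$, and the rank identity ${\bf rk}_{{\cal M}_D/F_{\ell-1}}(Y)={\bf rk}_D(Y\cup B_{\ell-1})-|B_{\ell-1}|$ (valid because $B_{\ell-1}$ spans $F_{\ell-1}$) lets you read this off directly as $e\in{\bf cl}_{{\cal M}_D/F_{\ell-1}}({\bf ta}_D(C)-e)$, after which Lemma~\ref{lemma:closure} hands you the desired circuit $C^{\prime}$ of ${\cal M}_{\ell}$. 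Your argument is shorter and constructive, produces the circuit explicitly rather than by reductio, and in fact subsumes Lemma~\ref{lemma:tail_D} as a corollary (since a circuit inside ${\bf ta}_D(C)$ certainly witnesses dependence), so in a rewrite the paper could drop Lemma~\ref{lemma:tail_D} entirely. The one point worth stating more carefully is the justification of the contraction rank formula, which follows from the paper's definition of ${\cal I}/X$ together with Lemma~\ref{lemma:contraction}, but that is routine.
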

\begin{proof}
In this proof, we use the notation 
in Algorithm~\ref{alg:matroid_D}.
Assume that Algorithm~\ref{alg:matroid_D} halts when $s = k$, and 
${\bf ta}_D(C) \subseteq T_{\ell}$. 
Define $F_s \coloneqq \bigcup_{j=1}^s T_j$
for each integer $s \in [k]$. 
Define $F_0 \coloneqq \emptyset$. 
Let $B$ be a base of ${\cal M}\langle F \rangle$. 
Define $B_{\ell-1} \coloneqq B \cap F_{\ell-1}$. 

Let $e$ be an element in ${\bf ta}_D(C)$. 
Assume that there does not exist a circuit $C^{\prime}$ 
of ${\cal M}_{\ell}$ such that 
$e \in C^{\prime} \subseteq {\bf ta}_D(C)$. 
Define $C^{\ast}$ as a circuit of ${\cal M}_D$ satisfying the 
following conditions. 
\begin{description}
\item[(A1)]
$C^{\ast} \subseteq F$ and $e \in {\bf ta}_D(C^{\ast}) \subseteq {\bf ta}_D(C)$.
\item[(A2)]
$C^{\ast}$ minimizes 
$|{\bf ta}_D(C^{\ast})|$ 
among all the circuits of ${\cal M}_D$ satisfying (A1).
\end{description}
Notice that since $C$ satisfies (A1),
$C^{\ast}$ is well-defined. 
In what follows, we prove that 
there exists a circuit $C^{\circ}$ of ${\cal M}_D$ such that 
$C^{\circ}$ satisfies (A1) and 
$|{\bf ta}_D(C^{\circ})| < |{\bf ta}_D(C^{\ast})|$.
This contradicts the definition of $C^{\ast}$. 

For notational simplicity, we define $X \coloneqq {\bf ta}_D(C^{\ast})$. 
Lemma~\ref{lemma:tail_D} implies that 
$X$ is a dependent set of 
${\cal M}_{\ell}$. 
Thus, there exists a circuit $C^{\circ}$ of 
${\cal M}_{\ell}$ such that
$C^{\circ} \subseteq X$. 
Since $C^{\circ} \subseteq X \subseteq {\bf ta}_D(C)$, 
the above assumption implies that 
$e \notin C^{\circ}$. 
Since $C^{\circ}$ is a dependent set of 
${\cal M}_{\ell}$, 
Lemma~\ref{lemma:contraction} implies that 
$C^{\circ} \cup B_{\ell-1} \notin {\cal I}_D$. 
Thus, there exists a circuit 
$C^{\bullet}$ of ${\cal M}_D$ 
such that $C^{\bullet} \subseteq C^{\circ} \cup B_{\ell-1} \subseteq F$. 
Since $B_{\ell-1} \in {\cal I}_D$, 
$C^{\bullet}$ contains an element 
$f \in C^{\circ} \subseteq X \subseteq C^{\ast}$. 
Since 
$e \notin C^{\circ}$, 
$e \notin C^{\bullet}$. 
Thus, 
since $e \in C^{\ast}$, 
Lemma~\ref{lemma:elimination} implies that 
there exists a circuit $C^{\diamond}$ of ${\cal M}_D$
such that 
$e \in C^{\diamond} \subseteq (C^{\ast} \cup C^{\bullet}) - f$. 
Since 
${\bf ta}_D(C^{\bullet}) \subseteq C^{\circ} \subseteq X$, 
$e \in {\bf ta}_D(C^{\diamond}) \subseteq X - f$. 
Thus, $C^{\diamond}$ satisfies (A1) and 
$|{\bf ta}_D(C^{\diamond})| < |X|$. 
However, this contradicts the definition of 
$C^{\ast}$. 
\end{proof} 

\begin{algorithm}[ht]
Define $N_0 \coloneqq F$ and ${\cal M}_0^{\prime} \coloneqq {\cal M}_H$.
Set $s \coloneqq 0$.\\
If $N_0 = \emptyset$, then output the matroid $(\emptyset,\{\emptyset\})$ as 
${\cal M}_H\langle F \rangle$ and halt.\\
\While{$N_s\neq \emptyset$}
{
   \For{$j = 1,2$}
   {
      Set $s \coloneqq s+1$.\\
      Define $T_s \coloneqq {\bf he}_H(N_{s-1}) \cap E_j$ 
      and $N_s \coloneqq N_{s-1} \setminus T_s$.\\
      Define ${\cal M}_s \coloneqq {\cal M}_{s-1}^{\prime}|T_s$ 
      and ${\cal M}^{\prime}_t \coloneqq {\cal M}_{t-1}^{\prime} / T_t$.\\
   }
}
Output the direct sum of 
${\cal M}_1, {\cal M}_2, \dots, {\cal M}_s$ as ${\cal M}_H\langle F \rangle$ and halt.
\caption{Algorithm for defining ${\cal M}_H\langle F \rangle$}
\label{alg:matroid_H}
\end{algorithm}

\begin{lemma} \label{lemma:base_matroid_H}
Let $F$ be a subset of $E$, and 
let $B$ be a base of ${\cal M}_H\langle F \rangle$. 
Assume that Algorithm~\ref{alg:matroid_H} halts when $s = k$.
We use the notation in Algorithm~\ref{alg:matroid_H}. 
Define $F_s \coloneqq \bigcup_{j = 1}^sT_j$ 
and $B_s \coloneqq B \cap F_s$ 
for each integer $s \in [k]$. 
Then $B_s$ is a base of ${\cal M}_H|F_s$
for every integer $s \in [k]$.
\end{lemma}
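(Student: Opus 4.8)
The plan is to follow the proof of Lemma~\ref{lemma:base_matroid_D} essentially verbatim, since Algorithm~\ref{alg:matroid_H} has the same recursive shape as Algorithm~\ref{alg:matroid_D}: it starts from $N_0 = F$ and ${\cal M}_0^{\prime} = {\cal M}_H$, and at each step $s$ it peels off a set $T_s$ (now a head class of $N_{s-1}$ intersected with $E_1$ or $E_2$, rather than a full head class), sets ${\cal M}_s \coloneqq {\cal M}_{s-1}^{\prime}|T_s$, and passes to the contraction ${\cal M}_s^{\prime} \coloneqq {\cal M}_{s-1}^{\prime}/T_s$. The only cosmetic differences are that the index $s$ is advanced twice per iteration of the outer \textbf{while} loop and that some $T_s$ may be empty; neither matters, because an empty $T_s$ gives $F_s = F_{s-1}$, $B_s = B_{s-1}$, and ${\cal M}_s^{\prime} = {\cal M}_{s-1}^{\prime}$, so the assertion is simply carried through that step.

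First I would record, by a short induction on $s$ using Lemma~\ref{lemma:minor}, the identity ${\cal M}_s^{\prime} = {\cal M}_H / F_s$ for each $s$, where $F_0 \coloneqq \emptyset$: it holds for $s = 0$, and ${\cal M}_{s+1}^{\prime} = {\cal M}_s^{\prime}/T_{s+1} = ({\cal M}_H/F_s)/T_{s+1} = {\cal M}_H/(F_s \cup T_{s+1}) = {\cal M}_H/F_{s+1}$. Combining this with the second identity of Lemma~\ref{lemma:minor} yields ${\cal M}_{s+1} = {\cal M}_s^{\prime}|T_{s+1} = ({\cal M}_H/F_s)|T_{s+1} = ({\cal M}_H|F_{s+1})/F_s$, which is the crucial structural fact.

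Then I would run the main induction on $s$. For $s = 1$ we have ${\cal M}_1 = {\cal M}_H|T_1 = {\cal M}_H|F_1$, and since $B$ is a base of the direct sum ${\cal M}_H\langle F \rangle$, its trace $B_1 = B \cap T_1$ is a base of ${\cal M}_1$, hence of ${\cal M}_H|F_1$. For the inductive step, assume $B_s$ is a base of ${\cal M}_H|F_s$. Because $B$ is a base of the direct sum, $B \cap T_{s+1}$ is a base of ${\cal M}_{s+1} = ({\cal M}_H|F_{s+1})/F_s$; applying Lemma~\ref{lemma:contraction} with the base $B_s$ of ${\cal M}_H|F_s$ shows that $B_s \cup (B \cap T_{s+1}) = B_{s+1}$ is a base of ${\cal M}_H|F_{s+1}$, completing the induction.

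There is no real obstacle here: the statement is the $E_1/E_2$-refined analogue of Lemma~\ref{lemma:base_matroid_D}, and the same two preliminary lemmas (Lemma~\ref{lemma:minor} and Lemma~\ref{lemma:contraction}) do all the work. The only thing needing a moment of care is the bookkeeping of the doubled index and the possibly-empty sets $T_s$, together with confirming that ${\cal M}_{s-1}^{\prime}$ really is the contraction ${\cal M}_H/F_{s-1}$ — which the preliminary induction settles once and for all.
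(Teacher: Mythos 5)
Your proposal is correct and takes essentially the same approach as the paper: the paper's proof of this lemma simply refers back to the proof of Lemma~\ref{lemma:base_matroid_D}, and your argument reproduces that proof (inductively combining Lemma~\ref{lemma:minor} to identify ${\cal M}_{s+1}$ with $({\cal M}_H|F_{s+1})/F_s$, then applying Lemma~\ref{lemma:contraction}), while also explicitly flagging that the doubled index and possibly empty $T_s$ are harmless.
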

\begin{proof}
This lemma can be proved in the same way as Lemma~\ref{lemma:base_matroid_D}.  
\end{proof} 

Lemma~\ref{lemma:base_matroid_H}
implies that, for every subset $F \subseteq E$, 
every base $B$ of ${\cal M}_H\langle F \rangle$, and 
every element $e \in F \setminus B$, 
we have $e \in {\bf cl}_H(B)$.

\begin{lemma} \label{lemma:circuit_matroid_H}
Let $F$ be a subset of $E$.
Let $B$ be a base of ${\cal M}_H\langle F \rangle$.
Let $e$ be an element in $F \setminus B$. 
\begin{description}
\item[(i)]
$e \sim_H f$ for every element 
$f \in \mathbb{C}_{{\cal M}_H\langle F \rangle}(e,B)$.
\item[(ii)]
$\mathbb{C}_{{\cal M}_H\langle F \rangle}(e,B) \subseteq 
\mathbb{C}_H(e,B)$. 
\item[(iii)]
For every element 
$f \in \mathbb{C}_H(e,B) \setminus \mathbb{C}_{{\cal M}_H\langle F \rangle}(e,B)$, 
if $e \in E_1$ or $e,f \in E_2$, then 
$f \succ_H e$.
\end{description}
\end{lemma}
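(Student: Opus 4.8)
The plan is to transcribe the proof of Lemma~\ref{lemma:circuit_matroid_D} almost line by line, using the notation of Algorithm~\ref{alg:matroid_H}: assume it halts when $s=k$, let $e\in T_\ell$, put $F_s\coloneqq\bigcup_{j=1}^s T_j$ and $F_0\coloneqq\emptyset$, and set $C\coloneqq\mathbb{C}_{{\cal M}_H\langle F\rangle}(e,B)$. Since $B\cap T_\ell$ is a base of ${\cal M}_\ell$ and $e\in T_\ell\setminus B$, the circuit $C$ lives inside the summand ${\cal M}_\ell$ of the direct sum ${\cal M}_H\langle F\rangle$, so $C\subseteq T_\ell$; and $T_\ell\subseteq{\bf he}_H(N_{\ell-1})$ is contained in a single $\sim_H$-class, which gives~(i). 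For~(ii): if $f\in C$, then $(B\cap T_\ell)+e-f$ is independent in ${\cal M}_\ell$, so $B+e-f$ is a base of ${\cal M}_H\langle F\rangle$, hence $B+e-f\in{\cal I}_H$ by Lemma~\ref{lemma:base_matroid_H}; combined with $e\in{\bf cl}_H(B)$ (a consequence of Lemma~\ref{lemma:base_matroid_H}), this yields $f\in\mathbb{C}_H(e,B)$.

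For~(iii) I would first isolate three facts. (a)~\emph{Monotonicity}: if $s\le s'$, $x\in T_s$ and $y\in T_{s'}$, then $x\succsim_H y$, since $y\in T_{s'}\subseteq N_{s'-1}\subseteq N_{s-1}$ and $x\in{\bf he}_H(N_{s-1})$. (b)~$\mathbb{C}_H(e,B)\subseteq F_\ell$: by Lemma~\ref{lemma:base_matroid_H}, $B\cap F_\ell$ is a base of ${\cal M}_H|F_\ell$ and $e\in F_\ell\setminus B$, so $\mathbb{C}_H(e,B)$ equals the fundamental circuit of $e$ and $B\cap F_\ell$ in ${\cal M}_H|F_\ell$, which is contained in $F_\ell$. (c)~$\mathbb{C}_H(e,B)\cap T_\ell\subseteq C$: for such an $f$ we have $(B\cap F_\ell)+e-f\in{\cal I}_H$, and since $e,f\in T_\ell$ this set agrees with $B$ on $F_{\ell-1}$, which is a base of ${\cal M}_H|F_{\ell-1}$ by Lemma~\ref{lemma:base_matroid_H}; Lemma~\ref{lemma:contraction} then gives that $(B\cap T_\ell)+e-f$ is independent in ${\cal M}_\ell$, i.e.\ $f\in C$.

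Then I would argue by contradiction. Let $f\in\mathbb{C}_H(e,B)\setminus C$ with $e\in E_1$ or $e,f\in E_2$, and suppose $e\succsim_H f$. By~(b) and~(c), $f\in F_\ell\setminus T_\ell$, so $f\in T_p$ for some $p<\ell$; by~(a) then $f\succsim_H e$, hence $e\sim_H f$, so $e$ and $f$ lie in a common $\sim_H$-class $[c]$. Here I would bring in the structural description of Algorithm~\ref{alg:matroid_H}: it peels the $\sim_H$-classes that meet $F$ in $\succsim_H$-decreasing order, and each such class $[c']$ contributes its $E_1$-part as a single set $T_s$ with $s$ odd and its $E_2$-part as the single set $T_{s+1}$; in particular all of $[c']\cap F$ lies in $T_s\cup T_{s+1}$, and the $E_1$-elements of a class are peeled no later than its $E_2$-elements. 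Applying this to $[c]$: if $e\in E_1$, then $T_\ell=[c]\cap E_1$ and $[c]\cap F\subseteq T_\ell\cup T_{\ell+1}$, contradicting $f\in T_p\cap[c]$ with $p<\ell$; if $e,f\in E_2$, then $e$ and $f$ both lie in $[c]\cap E_2$, which is the single set $T_\ell$, forcing $p=\ell$, again a contradiction. Hence $f\succ_H e$, and~(iii) follows.

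The step I expect to be the main obstacle is exactly the structural description of Algorithm~\ref{alg:matroid_H} used above --- that within each $\sim_H$-class the $E_1$-part is peeled strictly before the $E_2$-part, with the two parts landing on consecutive indices of the prescribed parity. This is not mere bookkeeping: it requires an induction on the while-loop tracking how ${\bf he}_H(N_s)$ evolves as the $E_1$- and $E_2$-parts of the current top class are removed, and in particular handling a top class that lies entirely in $E_1$, after which the alignment between the parity of $j$ and the $E_1$/$E_2$ split must be re-examined. This is precisely the point where the many-to-many refinement given by the split $E=E_1\cup E_2$ interacts nontrivially with the matroid contractions, i.e.\ the ``careful discussion'' flagged in the introduction for the $H$-side. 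Everything else is a direct transcription of the proof of Lemma~\ref{lemma:circuit_matroid_D}, with~(b) and~(c) standing in for the single observation ``$e\succsim_D f$ implies $f\notin F_{\ell-1}$'' that sufficed there.
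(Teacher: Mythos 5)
Your handling of parts (i) and (ii) matches the paper verbatim. For part (iii) you take a genuinely different, though ultimately equivalent, route. The paper proves a short Claim: \emph{assuming $e \succsim_H f$ and $e\in E_1$ or $e,f\in E_2$, one has $f\notin F_{\ell-1}$} (for $e\in E_1$ because every $g\in F_{\ell-1}$ satisfies $g\succ_H e$, and for $e\in E_2$ because every $g\in F_{\ell-1}$ satisfies $g\in E_1$ or $g\succ_H e$). It then performs a base swap: $B'=B+e-f$ is in ${\cal I}_H$, agrees with $B$ on $F_{\ell-1}$, so $B'\cap T_\ell$ is independent in ${\cal M}_\ell$; but $C\subseteq B'\cap T_\ell$ is a circuit, contradiction. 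You instead prove the opposite positional fact unconditionally via your facts (b) and (c): any $f\in\mathbb{C}_H(e,B)\setminus C$ lies in $F_{\ell-1}$. Combining this with monotonicity (a) and the assumed $e\succsim_H f$ gives $e\sim_H f$, and you then invoke the class-structure of the peeling to rule this out. Your (c) is logically the same base-swap/contraction argument as the paper's last two sentences, just read in the contrapositive direction; your final appeal to the class structure of Algorithm~\ref{alg:matroid_H} is exactly the content of the paper's Claim~\ref{claim_1:lemma:circuit_matroid_H}. So the two proofs are built from the same pieces, assembled in the opposite order.

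What you buy with your arrangement is that the algorithmic structure is isolated into a clean, verifiable statement; what it costs is that you have moved the burden onto proving that statement, and you explicitly leave it unproved, flagging it as the main obstacle. That step is precisely what the paper disposes of in two lines inside its Claim (for $e\in E_1$: ``$g\succ_H e$ for every $g\in F_{\ell-1}$''; for $e\in E_2$: ``at least one of $f\in E_1$, $f\succ_H e$ holds''), relying on the reading of Algorithm~\ref{alg:matroid_H} in which the top $\sim_H$-class is identified once per \textbf{while}-iteration and split into its $E_1$-part (odd index) and $E_2$-part (even index). Under that reading the structural claim is immediate and your worry about ``a top class that lies entirely in $E_1$'' causing a parity misalignment does not arise, because an all-$E_1$ class simply produces an empty $T_s$ at the even slot rather than pulling the next class's $E_2$-part forward. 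Your proposal is therefore correct in substance and in line with the paper's, but stops short of closing the loop on the one structural fact it needs --- a fact the paper treats as routine.
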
 
\begin{proof}
In this proof, we use the notation in Algorithm~\ref{alg:matroid_H}.
Assume that Algorithm~\ref{alg:matroid_H} halts when $s = k$. 
Furthermore, we assume that $e \in T_{\ell}$. 
For each integer $s \in [k]$, 
we define $F_s \coloneqq \bigcup_{j=1}^{s} T_j$.  
Define $F_0 \coloneqq \emptyset$.
Define $C \coloneqq \mathbb{C}_{{\cal M}_H\langle F \rangle}(e,B)$. 
Then $C$ is a circuit of ${\cal M}_{\ell}$. 
Thus, (i) holds. 

Let $f$ be an element in $C$.
Then $f \in T_{\ell}$ and 
$(B \cap T_{\ell}) + e - f$ is an independent set of ${\cal M}_{\ell}$.
Thus, $B + e - f$ is a base of ${\cal M}_H\langle F \rangle$. 
This and Lemma~\ref{lemma:base_matroid_H} imply that 
$B + e - f \in {\cal I}_H$. 
Thus, $f \in \mathbb{C}_H(e,B)$. 

Let $f$ be an element in $\mathbb{C}_H(e,B) \setminus C$. 
Assume that $e \succsim_H f$. 

\begin{claim} \label{claim_1:lemma:circuit_matroid_H}
If $e \in E_1$ or $e,f \in E_2$, then 
$f \notin F_{\ell - 1}$. 
\end{claim}
\begin{proof}
If $e \in E_1$, then $g \succ_H e$ for every 
element $g \in F_{\ell-1}$. 
Thus, $f \notin F_{\ell-1}$. 

If $e \in E_2$ and $f \in F_{\ell-1}$, then 
at least one of $f \in E_1$, 
$f \succ_H e$ holds. 
Thus, since $e \succsim_H f$, 
if $f \in E_2$, then $f \notin F_{\ell -1}$. 
\end{proof} 

Assume that
$e \in E_1$ or $e,f \in E_2$. 
Define $B^{\prime} \coloneqq B + e - f$. 
Since $f \in \mathbb{C}_H(e,B)$, 
$B^{\prime} \in {\cal I}_H$.
In addition, since $f \notin F_{\ell-1}$ follows from 
Claim~\ref{claim_1:lemma:circuit_matroid_H}, 
$B \cap F_{\ell-1} = B^{\prime} \cap F_{\ell-1}$.
Lemma~\ref{lemma:base_matroid_H} implies that 
$B \cap F_{\ell-1}$ is a base of ${\cal M}_H|F_{\ell-1}$. 
Thus, 
$B^{\prime} \cap T_{\ell}$ is an independent set of ${\cal M}_{\ell}$. 
Since 
$C \subseteq B^{\prime} \cap T_{\ell}$, 
this is a contradiction. 
\end{proof}

\begin{algorithm}[ht]
Define $N_0 \coloneqq F$, $T_0^{\ast} \coloneqq \emptyset$, 
and ${\cal M}_0^{\prime} \coloneqq {\cal M}_D$.
Set $s \coloneqq 0$.\\
\While{$N_s \neq \emptyset$}
{Set $s \coloneqq s+1$.\\
Define $T_s \coloneqq {\bf he}_D(N_{s-1})$, 
$N_s \coloneqq N_{s-1} \setminus T_s$, and
${\cal M}_s^{\prime} \coloneqq {\cal M}_{s-1}^{\prime} / T_s$.\\
Define $T_s^{\ast} \coloneqq T_{s-1}^{\ast} \cup \{e \in T_s \mid \mbox{$\{e\}$ 
is an independent set of ${\cal M}_{s-1}^{\prime}$}\}$.
}
Output $T_s^{\ast}$ as ${\rm Ch}_D(F)$ and halt.
\caption{Algorithm for defining ${\rm Ch}_D(F)$}
\label{alg:choice_D}
\end{algorithm}

\begin{lemma} \label{lemma:strong_dominance_choice_D}
Let $F$ be a subset of $E$, and 
let $e$ be an element in $F$. 
Assume that there exists a circuit $C$ of ${\cal M}_D$
such that 
$e \in C \subseteq F$ and 
$f \succ_D e$ for every element $f \in C - e$. 
Then we have $e \notin {\rm Ch}_D(F)$. 
\end{lemma}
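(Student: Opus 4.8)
The plan is to trace through Algorithm~\ref{alg:choice_D} and show that the hypothesised circuit $C$ forces $e$ never to enter $T^\ast_s$. First I would set up notation: assume Algorithm~\ref{alg:choice_D} halts when $s = k$, and write $F_s \coloneqq \bigcup_{j=1}^{s} T_j$ with $F_0 \coloneqq \emptyset$, so that ${\cal M}_s^{\prime} = {\cal M}_D / F_s$ by repeated application of Lemma~\ref{lemma:minor}. Let $\ell$ be the index with $e \in T_\ell$. By the hypothesis, every element $f \in C - e$ satisfies $f \succ_D e$; since $T_\ell = {\bf he}_D(N_{\ell-1})$ consists of the $\succsim_D$-maximal elements remaining after $F_{\ell-1}$ has been removed, no element strictly $\succ_D e$ can survive into $N_{\ell-1}$, so $C - e \subseteq F_{\ell-1}$. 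In other words, $C \subseteq F_{\ell-1} + e$ and $C - e \subseteq F_{\ell-1}$.

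The next step is to deduce that $\{e\}$ is \emph{not} an independent set of ${\cal M}_{\ell-1}^{\prime} = {\cal M}_D / F_{\ell-1}$. This is exactly the condition that disqualifies $e$ from being added to $T_\ell^\ast$ in the iteration $s = \ell$ (and $e$ is not a candidate in any other iteration, since $e \in T_\ell$ only). To see that $\{e\}$ is dependent in ${\cal M}_D / F_{\ell-1}$: let $B$ be a base of ${\cal M}_D | F_{\ell-1}$. Because $C - e \subseteq F_{\ell-1}$ and $C$ is a circuit of ${\cal M}_D$ hence $C - e \in {\cal I}_D$, we can extend $C - e$ to the base $B$ of ${\cal M}_D|F_{\ell-1}$; then $B + e \supseteq C$, so $B + e \notin {\cal I}_D$, i.e.\ $B + e$ is dependent in ${\cal M}_D$. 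By Lemma~\ref{lemma:contraction}, $\{e\}$ is a dependent set of ${\cal M}_D / F_{\ell-1}$. Therefore $e \notin T_\ell^\ast$, and since $T_0^\ast \subseteq T_1^\ast \subseteq \cdots \subseteq T_k^\ast = {\rm Ch}_D(F)$ and $e$ can only be added at step $\ell$, we conclude $e \notin {\rm Ch}_D(F)$.

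I expect the main (though modest) obstacle to be the bookkeeping that justifies $C - e \subseteq F_{\ell-1}$: one must argue carefully that the greedy peeling in Algorithm~\ref{alg:choice_D} removes all elements $\succ_D e$ strictly before the layer $T_\ell$ that contains $e$. Concretely, if some $f \in C - e$ had not yet been removed by the start of iteration $\ell$, then $f \in N_{\ell-1}$; but $e \in T_\ell = {\bf he}_D(N_{\ell-1})$ means $e \succsim_D g$ for every $g \in N_{\ell-1}$, in particular $e \succsim_D f$, contradicting $f \succ_D e$. Hence indeed $f \in F_{\ell-1}$ for every $f \in C - e$, and there is no element of $C$ below $e$ in the layering except $e$ itself. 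Once this is in place, the rest is a direct invocation of Lemma~\ref{lemma:contraction} as above, with the extension of the independent set $C - e$ to a base of the restriction ${\cal M}_D|F_{\ell-1}$ being the only step requiring a word of justification.
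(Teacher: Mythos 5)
Your proof is correct, but it takes a genuinely different (and more elementary) route than the paper. The paper observes that the hypothesis forces ${\bf ta}_D(C) = \{e\}$ and then cites Lemma~\ref{lemma:tail_D} to conclude directly that $\{e\}$ is a dependent set of ${\cal M}_D\langle F \rangle$, hence of ${\cal M}_{\ell} = {\cal M}_{\ell-1}'|T_\ell$, hence of ${\cal M}_{\ell-1}'$, which disqualifies $e$ from ${\rm Ch}_D(F)$. You instead bypass Lemma~\ref{lemma:tail_D} entirely: you note that every $f \in C - e$ is strictly $\succ_D$-better than $e$ and therefore already peeled off, giving $C - e \subseteq F_{\ell-1}$; then you extend the independent set $C - e$ to a base $B$ of ${\cal M}_D|F_{\ell-1}$, observe $B + e \supseteq C$ is dependent, and invoke Lemma~\ref{lemma:contraction} to conclude $\{e\}$ is dependent in ${\cal M}_D/F_{\ell-1} = {\cal M}_{\ell-1}'$. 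The trade-off: the paper's proof is shorter because Lemma~\ref{lemma:tail_D} is already proved and is needed elsewhere anyway, but that lemma handles the harder general case where ${\bf ta}_D(C)$ has several elements sitting in the same layer $T_\ell$ and so requires the circuit-union machinery (Lemma~\ref{lemma:circuit_union}). In the present special case ${\bf ta}_D(C)$ is a singleton, so all of $C - e$ lands in earlier layers and your direct base-extension argument is cleaner and avoids that machinery entirely; it is essentially an inlined, simplified instance of the proof of Lemma~\ref{lemma:tail_D}. Both arguments are sound.
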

\begin{proof}
In this proof, we use the notation in Algorithm~\ref{alg:choice_D}.
Assume that $e \in T_{\ell}$. 
Since ${\bf ta}_D(C) = \{e\}$, 
Lemma~\ref{lemma:tail_D}
implies that 
$\{e\}$ is not a dependent set of ${\cal M}\langle F \rangle$. 
Thus, $\{e\}$ is a dependent set of ${\cal M}_{\ell-1}^{\prime}|T_{\ell}$.
This implies that 
since 
$\{e\}$ is a dependent set of ${\cal M}_{\ell-1}^{\prime}$, 
$e \notin T_{\ell}^{\ast} \setminus T_{\ell-1}^{\ast}$. 
\end{proof} 

\begin{lemma} \label{lemma:outside_choice_D}
Let $F$ be a subset of $E$, and 
let $e$ be an element in $F$. 
Assume that $e \notin {\rm Ch}_D(F)$. 
Then 
there exists a circuit $C$ of ${\cal M}_D$
such that 
$e \in C$, 
$C - e \subseteq {\rm Ch}_D(F)$, and 
$f \succ_D e$ for every element $f \in C - e$. 
\end{lemma}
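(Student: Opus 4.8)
The plan is to run Algorithm~\ref{alg:choice_D} on $F$ and track why $e$ fails to enter $T_s^\ast$ at the step where it is removed from the ground set. Assume $e \in T_\ell$, i.e.\ $e$ is removed at iteration $s=\ell$, so $e \in {\bf he}_D(N_{\ell-1})$ and in particular $f \succsim_D e$ fails for no $f \in N_{\ell-1}$; equivalently, $f \succ_D e$ for every $f \in F \setminus N_{\ell-1} = \bigcup_{j<\ell} T_j$, since those are exactly the elements strictly preferred to everything remaining. Because $e \notin {\rm Ch}_D(F) = T_\ell^\ast$, and $e \notin T_{\ell-1}^\ast$ (as $e \in T_\ell$ is removed only at iteration $\ell$), the defining condition of $T_\ell^\ast$ tells us $\{e\}$ is \emph{not} an independent set of ${\cal M}_{\ell-1}^{\prime}$, i.e.\ $\{e\}$ is a dependent set of ${\cal M}_D / F_{\ell-1}$, where $F_{\ell-1} \coloneqq \bigcup_{j=1}^{\ell-1} T_j$.

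Next I would translate this dependence in the contraction back to a circuit in ${\cal M}_D$. Let $B_{\ell-1}$ be a base of ${\cal M}_D|F_{\ell-1}$. Since $\{e\}$ is dependent in ${\cal M}_D/F_{\ell-1}$, Lemma~\ref{lemma:contraction} gives that $\{e\} \cup B_{\ell-1}$ is dependent in ${\cal M}_D$, so there is a circuit $C$ of ${\cal M}_D$ with $C \subseteq B_{\ell-1} + e$. Because $B_{\ell-1}$ is independent, $C$ cannot be contained in $B_{\ell-1}$, hence $e \in C$; and $C - e \subseteq B_{\ell-1} \subseteq F_{\ell-1} \subseteq F$. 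It remains to check the two required properties. First, $C \subseteq F$ and $e \in C$ are immediate. Second, every $f \in C - e$ lies in $F_{\ell-1} = \bigcup_{j<\ell} T_j$, and by the observation in the previous paragraph every element of $F_{\ell-1}$ satisfies $f \succ_D e$; this gives $f \succ_D e$ for all $f \in C - e$. Finally, $C - e \subseteq B_{\ell-1} \subseteq F_{\ell-1}$, and since each $f \in F_{\ell-1}$ was selected into some $T_j$ with $j < \ell$ precisely when $\{f\}$ was independent in ${\cal M}_{j-1}^{\prime}$, one checks $f \in T_j^\ast \subseteq {\rm Ch}_D(F)$; hence $C - e \subseteq {\rm Ch}_D(F)$, as required.

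The slightly delicate point — the one I would state carefully rather than wave at — is the equivalence ``$e \in {\bf he}_D(N_{\ell-1})$'' $\Longleftrightarrow$ ``$f \succ_D e$ for every $f \in F_{\ell-1}$.'' This uses that $\succsim_D$ is transitive and complete: for $f \in T_j$ with $j < \ell$ we have $f \succsim_D e$ (since $e \in N_{j}$ at that point and $f \in {\bf he}_D(N_{j-1})$), and if also $e \succsim_D f$ held then $e$ would already have been in ${\bf he}_D(N_{j-1})$ and thus removed at iteration $j$, contradicting $e \in T_\ell$ with $\ell > j$. So in fact $f \succ_D e$. This is the same bookkeeping used in Lemmas~\ref{lemma:tail_D} and~\ref{lemma:strong_dominance_choice_D}, so I would phrase it consistently with those. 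Everything else is a routine application of Lemma~\ref{lemma:contraction} together with the definition of $T_s^\ast$ in Algorithm~\ref{alg:choice_D}; the contrapositive structure (``$e \notin {\rm Ch}_D(F)$ forces dependence in the right contraction'') is what makes the circuit appear, and the head-set membership of $e$ is what pins down the strict-preference and membership conditions on $C - e$.
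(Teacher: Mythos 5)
Your proof is correct in outline up to the point where you need the circuit to land inside ${\rm Ch}_D(F)$, and that step has a genuine gap. You take $B_{\ell-1}$ to be an arbitrary base of ${\cal M}_D|F_{\ell-1}$ and then assert ``since each $f \in F_{\ell-1}$ was selected into some $T_j$ with $j < \ell$ precisely when $\{f\}$ was independent in ${\cal M}_{j-1}^{\prime}$, one checks $f \in T_j^\ast$.'' That conflates membership in $T_j$ (determined by preference) with membership in $T_j^\ast$ (determined by independence in ${\cal M}_{j-1}^{\prime}$); being in $T_j$ does not put an element into ${\rm Ch}_D(F)$. Concretely, if $a,b,f$ are pairwise parallel in ${\cal M}_D$ with $a \sim_D b \succ_D f \succ_D e$ and $e$ is also parallel to these, then $T_1 = \{a,b\}$, $T_2 = \{f\}$, $T_3 = \{e\}$ and ${\rm Ch}_D(F) = \{a,b\}$; yet $\{f\}$ is a perfectly good base of ${\cal M}_D|F_2$, and the resulting circuit $\{e,f\}$ has $C - e = \{f\} \not\subseteq {\rm Ch}_D(F)$.

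The fix is exactly what the paper does and is not optional: you must take $B$ to be a base of ${\cal M}_D\langle F\rangle$ and set $B^{\prime} \coloneqq B \cap F_{\ell-1}$. Then Lemma~\ref{lemma:base_matroid_D} gives the nested property that $B \cap F_s$ is a base of ${\cal M}_D|F_s$ for \emph{every} $s \in [\ell-1]$, and this is what powers the claim $B^{\prime} \subseteq T_{\ell-1}^{\ast} \subseteq {\rm Ch}_D(F)$: if some $f \in B^{\prime} \cap T_j$ had $\{f\}$ dependent in ${\cal M}_{j-1}^{\prime}$, then Lemma~\ref{lemma:contraction} applied with the base $B \cap F_{j-1}$ would force $(B \cap F_{j-1}) + f \notin {\cal I}_D$, contradicting $B \in {\cal I}_D$. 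An arbitrary base of ${\cal M}_D|F_{\ell-1}$ has no such nesting across the $F_j$'s and the argument simply does not go through. Everything else in your writeup — identifying $\{e\}$ as dependent in ${\cal M}_{\ell-1}^{\prime} = {\cal M}_D/F_{\ell-1}$, pulling back to a circuit $C \subseteq B^{\prime} + e$ via Lemma~\ref{lemma:contraction}, and the strict-preference bookkeeping $f \succ_D e$ for $f \in F_{\ell-1}$ — is sound and matches the paper.
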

\begin{proof}
In this proof, we use the notation in Algorithm~\ref{alg:choice_D}.
Assume that $e \in T_{\ell}$. 
Since $e \notin {\rm Ch}_D(F)$ 
$\{e\}$ is a dependent set of ${\cal M}_{\ell-1}^{\prime}$. 
Since $\{e\} \in {\cal I}_D$, $\ell > 1$ holds. 
Define $F_{\ell-1} \coloneqq \bigcup_{j=1}^{\ell-1}T_j$. 
Lemma~\ref{lemma:minor} implies that 
${\cal M}_{\ell-1}^{\prime} = {\cal M}_D/F_{\ell-1}$. 
Let $B$ be a base of ${\cal M}_D \langle F \rangle$. 
Lemma~\ref{lemma:base_matroid_D}
implies that 
$B \cap F_s$ is a base of 
${\cal M}_D|F_s$ 
for every integer $s \in [\ell-1]$. 
Define $B^{\prime} \coloneqq B \cap F_{\ell-1}$. 

\begin{claim} \label{claim_1:lemma:outside_choice_D} 
$B^{\prime} \subseteq T_{\ell-1}^{\ast}$. 
\end{claim} 
\begin{proof}
Assume that $B^{\prime} \not\subseteq T_{\ell-1}^{\ast}$. 
Let $f$ be an element in $B^{\prime} \setminus T_{\ell-1}^{\ast}$. 
Since $B^{\prime} \subseteq F_{\ell-1}$, 
there exists an integer $j \in [\ell-1]$ such that 
$f \in T_j$.
Then $\{f\}$ is a dependent set of ${\cal M}_{j-1}^{\prime}$. 
Thus,  
Lemma~\ref{lemma:contraction} implies that
$(B \cap F_{j-1}) + f \notin {\cal I}_D$. 
However, since $(B \cap F_{j-1}) + f \subseteq B$, this contradicts the fact that 
$B \in {\cal I}_D$. This completes the proof. 
\end{proof} 

Since $\{e\}$ is a dependent set of ${\cal M}_{\ell-1}^{\prime}$, 
Lemma~\ref{lemma:contraction} implies that
$B^{\prime} + e \notin {\cal I}_D$.
Thus, 
Lemma~\ref{lemma:closure} implies that
$e \in {\bf cl}_D(B^{\prime})$.
Define $C \coloneqq \mathbb{C}_D(e, B^{\prime})$.
Then since 
Claim~\ref{claim_1:lemma:outside_choice_D}
implies that 
$B^{\prime} \subseteq {\rm Ch}_D(F)$, 
$C - e \subseteq {\rm Ch}_D(F)$.
Furthermore, since  
$C - e \subseteq B^{\prime} \subseteq F_{\ell-1}$ and 
$e \in T_{\ell}$, 
$f \succ_D e$ for every element $f \in C - e$. 
This completes the proof. 
\end{proof} 

\begin{lemma} \label{lemma:rank_equal} 
Let $F$ be a subset of $E$.
Define $K \coloneqq {\rm Ch}_D(F)$. 
Then 
${\bf rk}_D(F) = {\bf rk}_{{\cal M}_D\langle K \rangle}(K)$. 
\end{lemma}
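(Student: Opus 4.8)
The plan is to reduce each side of the identity to a rank in ${\cal M}_D$ by means of Lemma~\ref{lemma:base_matroid_D}, and then to show that $K \coloneqq {\rm Ch}_D(F)$ spans $F$ in ${\cal M}_D$.

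First I would set up notation by running Algorithm~\ref{alg:choice_D} on $F$ and observing that the sequences $T_s$, $N_s$, ${\cal M}_s^{\prime}$ it produces are identical to those produced by Algorithm~\ref{alg:matroid_D} on $F$ (the two algorithms differ only in that the former additionally builds the sets $T_s^{\ast}$, while the latter additionally builds the matroids ${\cal M}_s$ and their direct sum). Say both halt at $s = k$; the case $F = \emptyset$ is trivial, so assume $k \ge 1$. Writing $F_s \coloneqq \bigcup_{j=1}^{s} T_j$ and $F_0 \coloneqq \emptyset$, the layers partition $F$, so $F_k = F$; moreover $T_0^{\ast} \subseteq T_1^{\ast} \subseteq \cdots \subseteq T_k^{\ast} = K$, and Lemma~\ref{lemma:minor} gives ${\cal M}_{s}^{\prime} = {\cal M}_D / F_{s}$ for every $s$.

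Next I would fix a base $B$ of ${\cal M}_D\langle F \rangle$. Lemma~\ref{lemma:base_matroid_D} with $s = k$ says that $B$ is a base of ${\cal M}_D | F_k = {\cal M}_D | F$; in particular $B \in {\cal I}_D$ and ${\bf rk}_D(F) = |B|$. Applying Lemma~\ref{lemma:base_matroid_D} to $K$ in place of $F$ shows that any base of ${\cal M}_D\langle K \rangle$ is a base of ${\cal M}_D | K$, so (as $K$ is the whole ground set of ${\cal M}_D\langle K\rangle$) we get ${\bf rk}_{{\cal M}_D\langle K \rangle}(K) = {\bf rk}_D(K)$. Since $K \subseteq F$, it therefore suffices to prove $B \subseteq K$: then every $e \in K \setminus B \subseteq F \setminus B$ satisfies $B + e \notin {\cal I}_D$ (because $B$ is a base of ${\cal M}_D | F$), so $B$ is also a base of ${\cal M}_D | K$, whence ${\bf rk}_D(K) = |B| = {\bf rk}_D(F)$, and the identity follows.

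The inclusion $B \subseteq K$ is the heart of the argument, and it is exactly the computation in Claim~\ref{claim_1:lemma:outside_choice_D} carried out for an arbitrary layer index instead of $\ell - 1$. Given $f \in B$, choose $j \in [k]$ with $f \in T_j$; if $f \notin T_j^{\ast}$, then by the definition of $T_j^{\ast}$ in Algorithm~\ref{alg:choice_D} the singleton $\{f\}$ is a dependent set of ${\cal M}_{j-1}^{\prime} = {\cal M}_D / F_{j-1}$, so Lemma~\ref{lemma:contraction} (applied with the base $B \cap F_{j-1}$ of ${\cal M}_D | F_{j-1}$ furnished by Lemma~\ref{lemma:base_matroid_D}, with the convention $F_0 = \emptyset$) yields $(B \cap F_{j-1}) + f \notin {\cal I}_D$, which contradicts $(B \cap F_{j-1}) + f \subseteq B \in {\cal I}_D$ via (I1). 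Hence $f \in T_j^{\ast} \subseteq T_k^{\ast} = K$. I do not expect a genuine obstacle here, since all the ingredients are already available; the only point needing a little care is the bookkeeping — keeping the notation of the two algorithms aligned and invoking Lemma~\ref{lemma:base_matroid_D} and Lemma~\ref{lemma:contraction} at the correct index, including the degenerate case $j = 1$.
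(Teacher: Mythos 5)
Your proof is correct, but it reaches the conclusion by a different route than the paper. The paper establishes $\mathbf{rk}_D(F) = \mathbf{rk}_D(K)$ by invoking Lemma~\ref{lemma:outside_choice_D} as a black box: for every $e \in F \setminus K$ that lemma produces a circuit $C$ with $e \in C \subseteq K + e$, so Lemma~\ref{lemma:closure} gives $F \setminus K \subseteq \mathbf{cl}_D(K)$ and hence $\mathbf{rk}_D(F) = \mathbf{rk}_D(K)$; it then appeals to Lemma~\ref{lemma:base_matroid_D} for the equality $\mathbf{rk}_D(K) = \mathbf{rk}_{{\cal M}_D\langle K\rangle}(K)$. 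You instead fix a base $B$ of ${\cal M}_D\langle F\rangle$, show directly that $B \subseteq K$ (by replaying, for an arbitrary layer, exactly the computation of Claim~\ref{claim_1:lemma:outside_choice_D}), and conclude that $B$ is simultaneously a base of ${\cal M}_D|F$ and of ${\cal M}_D|K$. The two arguments are dual faces of the same fact — $K$ spans $F$ in ${\cal M}_D$ — but yours works ``inside out'' (a base of $F$ sits inside $K$) whereas the paper works ``outside in'' (every element outside $K$ is in $\mathbf{cl}_D(K)$). What the paper's route buys is brevity, since it reuses the already-proved Lemma~\ref{lemma:outside_choice_D} rather than unfolding the layer-by-layer argument again; what your route buys is a somewhat more concrete picture (an explicit base inside $K$), at the price of repeating bookkeeping that was already packaged into an earlier lemma. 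Both uses of Lemma~\ref{lemma:base_matroid_D} and the closing deduction in your write-up are sound, including the degenerate $j = 1$ case via $F_0 = \emptyset$.
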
 
\begin{proof}
Notice that Lemma~\ref{lemma:outside_choice_D} 
implies that, for every element $e \in F \setminus K$, 
there exists a circuit $C$ of ${\cal M}_D$ such that 
$e \in C \subseteq K + e$.
Thus, 
since Lemma~\ref{lemma:closure} 
implies that 
$F \setminus K \subseteq {\bf cl}_D(K)$, 
\begin{equation*}
{\bf rk}_D(K) = {\bf rk}_D(K \cup (F \setminus K)) = {\bf rk}_D(F). 
\end{equation*}
Since Lemma~\ref{lemma:base_matroid_D} implies that 
${\bf rk}_D(K) = {\bf rk}_{{\cal M}_D\langle K \rangle}(K)$, 
this completes the proof. 
\end{proof} 

\begin{lemma} \label{lemma:outside_circuit}
Let $F$ be a subset of $E$.
Define $K \coloneqq {\rm Ch}_D(F)$. 
Assume that there exists a circuit $C$ of ${\cal M}_D$
such that $C \subseteq F$ and ${\bf ta}_D(C) \cap K \neq \emptyset$. 
Let $e$ be an element in ${\bf ta}_D(C) \cap K$. 
Then there exists a circuit $C^{\prime}$ of ${\cal M}_D$ 
such that $C^{\prime} \subseteq K$ and $e \in {\bf ta}_D(C^{\prime}) \subseteq {\bf ta}_D(C) \cap K$.
\end{lemma}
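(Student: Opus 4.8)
The plan is to realize $C'$ as a fundamental circuit through $e$ that lies inside a carefully chosen subset of $K$. If $C \subseteq K$ there is nothing to prove, since then ${\bf ta}_D(C) \subseteq C \subseteq K$ and $C$ itself works; so assume $C \not\subseteq K$. For each element $u \in C \setminus K$ we have $u \in F \setminus {\rm Ch}_D(F)$, so Lemma~\ref{lemma:outside_choice_D} supplies a circuit $C_u$ of ${\cal M}_D$ with $u \in C_u$, $C_u - u \subseteq K$, and $f \succ_D u$ for every $f \in C_u - u$. Set $S \coloneqq (C \cap K) \cup \bigcup_{u \in C \setminus K} (C_u - u)$, so that $S \subseteq K$. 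The first observation to record is that $e \notin C_u$ for every $u \in C \setminus K$: indeed $e \ne u$ since $e \in K$ and $u \notin K$, and every $f \in C_u - u$ satisfies $f \succ_D u \succsim_D e$, where $u \succsim_D e$ holds because $u \in C$ and $e \in {\bf ta}_D(C)$; hence $f \succ_D e$ and in particular $f \ne e$. Consequently $C_u - u \subseteq S - e$ for every such $u$, and also $(C \cap K) - e \subseteq S - e$.

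The key step is to show $e \in {\bf cl}_D(S - e)$. Since $C$ is a circuit containing $e$, we have $e \in {\bf cl}_D(C - e)$. Write $C - e = \big((C \cap K) - e\big) \cup (C \setminus K)$. The first part lies in $S - e$, and for every $u \in C \setminus K$ the circuit $C_u$ gives $u \in {\bf cl}_D(C_u - u) \subseteq {\bf cl}_D(S - e)$. Hence $C - e \subseteq {\bf cl}_D(S - e)$, and a short rank computation using the facts recorded after Lemma~\ref{lemma:closure} gives ${\bf cl}_D(C - e) \subseteq {\bf cl}_D(S - e)$, so $e \in {\bf cl}_D(S - e)$. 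Since $e \notin S - e$, Lemma~\ref{lemma:closure} now produces a circuit $C'$ of ${\cal M}_D$ with $e \in C' \subseteq (S - e) + e = S \subseteq K$.

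It remains to analyse tails. Every $f \in S$ satisfies $f \succsim_D e$: if $f \in C \cap K$ this is because $f \in C$ and $e \in {\bf ta}_D(C)$, and if $f \in C_u - u$ then $f \succ_D u \succsim_D e$. Since $e \in C' \subseteq S$, this makes $e$ a $\succsim_D$-minimum element of $C'$, so $e \in {\bf ta}_D(C')$. Conversely, if $f \in {\bf ta}_D(C')$ then $e \succsim_D f$ and $f \succsim_D e$, hence $f \sim_D e$; this rules out $f \in C_u - u$ (there $f \succ_D e$), so $f \in C \cap K$, and then $f \sim_D e$ together with $e \in {\bf ta}_D(C)$ forces $f \in {\bf ta}_D(C)$. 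Therefore ${\bf ta}_D(C') \subseteq {\bf ta}_D(C) \cap K$, which completes the argument.

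I expect the heart of the matter, and the step most prone to slips, to be the verification $e \in {\bf cl}_D(S - e)$: one must keep the strict/non-strict bookkeeping precise so that $e$ is never among the elements being removed. This is exactly where the hypothesis $e \in {\bf ta}_D(C)$ and the strict inequality $f \succ_D u$ in Lemma~\ref{lemma:outside_choice_D} are used: together they guarantee $e \notin C_u - u$, which is what allows $e$ to sit inside $S - e$ so that the closure argument applies. One could instead attempt to derive $C'$ by eliminating the elements of $C \setminus K$ from $C$ via Lemma~\ref{lemma:circuit_union}, but that lemma alone does not certify that $e$ survives in the resulting circuit, so a closure computation of this kind seems unavoidable.
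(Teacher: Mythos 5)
Your proof is correct, and it departs from the paper's argument in a noteworthy way. The paper dispatches the non-trivial case in a single stroke: it applies Lemma~\ref{lemma:circuit_union} with $v = e$, the $u_i$ ranging over $C \setminus K$ and $C_i = C_{u_i}$, and then asserts that the resulting circuit satisfies all the conditions of the lemma. You are right to be skeptical of this as written: the statement of Lemma~\ref{lemma:circuit_union} given in the paper only concludes $C' \subseteq (C \cup C_1 \cup \dots \cup C_\ell) \setminus \{u_1,\dots,u_\ell\}$, and the element $v$ appears only in hypothesis (U3), so the lemma as stated does not certify $v \in C'$. The paper is implicitly using the strengthened form from \cite{Kamiyama22} in which $v$ is guaranteed to lie in the output circuit, and that stronger form is tacitly relied on in several other places as well (e.g.\ Claim~\ref{claim_1:lemma:stability} and Claim~\ref{claim_2:lemma:J_b}). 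Your closure argument neatly sidesteps this: showing $e \in {\bf cl}_D(S-e)$ directly produces a circuit through $e$ inside $S$ via Lemma~\ref{lemma:closure}, with no need for fine control over which elements of $C$ survive an elimination. The tail analysis you give — $e$ is a $\succsim_D$-minimum of $S$, and any $f \in C'$ with $f \sim_D e$ is forced into $C \cap K$ and hence into ${\bf ta}_D(C)$ — is exactly the bookkeeping the paper leaves implicit, and you spell it out cleanly. The only auxiliary fact you lean on beyond what is explicitly recorded is that $X \subseteq {\bf cl}_{\cal M}(Y)$ implies ${\bf cl}_{\cal M}(X) \subseteq {\bf cl}_{\cal M}(Y)$; this does follow from the monotonicity of ${\bf cl}_{\cal M}$ together with ${\bf rk}_{\cal M}(Y) = {\bf rk}_{\cal M}({\bf cl}_{\cal M}(Y))$, both noted after Lemma~\ref{lemma:closure}, so your appeal to those facts is justified. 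In short: the paper's proof is shorter but leans on an unstated strengthening of Lemma~\ref{lemma:circuit_union}, whereas your closure computation is fully self-contained given the lemmas as actually stated.
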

\begin{proof}
If $C \subseteq K$, then the proof is done. 
Thus, we assume that $C \not\subseteq K$. 
Then it follows from Lemma~\ref{lemma:outside_choice_D}
that, 
for 
every element $f \in C \setminus K$, 
there exists a circuit $C_f$ of ${\cal M}_D$
such that 
$f \in C_f$, 
$C_f - f \subseteq K$, and 
$g \succ_D f$ for every element $g \in C_f - f$. 
Thus, since 
$e \notin C_f$ for 
every element $f \in C \setminus K$, 
Lemma~\ref{lemma:circuit_union}
implies that 
there exists a circuit $C^{\prime}$ of ${\cal M}_D$ 
satisfying the conditions in this lemma.
\end{proof} 

\subsection{Main algorithm} 

Our main algorithm is described in Algorithm~\ref{alg:main}. 

\begin{algorithm}[ht]
Set $t \coloneqq 0$. Define $R_0 \coloneqq \emptyset$\\
\Do{$R_t \neq P_{t,i}$}
{
    Set $t \coloneqq t+1$ and $i \coloneqq 0$. 
    Define $P_{t,0} \coloneqq R_{t-1}$.\\
    \Do{$P_{t,i} \neq P_{t,i-1}$}
    {
        Set $i \coloneqq i + 1$.\\
        Define $K_{t,i} \coloneqq {\rm Ch}_D(E \setminus P_{t,i-1})$.
        Define $Q_{t,i}$ as a base of ${\cal M}_H\langle K_{t,i} \rangle$.\\
        \If{${\bf rk}_D(E \setminus P_{t,i-1}) < {\bf rk}_H(K_{t,i})$}
        {
            Define $P_{t,i} \coloneqq P_{t,i-1}$ and $R_{t} \coloneqq P_{t,i}$.
            Output ${\bf No}$, and halt.
        }
        \uIf{$Q_{t,i} \cap E_1 \neq K_{t,i} \cap E_1$}
        {
            Define $P_{t,i} \coloneqq P_{t,i-1} \cup ((K_{t,i} \setminus Q_{t,i}) \cap E_1)$.
        }
        \Else
        {
            Find a maximum-size common independent set $I_{t,i}$ of 
            ${\cal M}_D\langle K_{t,i} \rangle, {\cal M}_H\langle K_{t,i} \rangle$.\\
            \uIf{$|I_{t,i}| < {\bf rk}_D(E \setminus P_{t,i-1})$}
            {
                Find the critical subset $Z_{t,i}$ 
                of ${\cal M}_D\langle K_{t,i} \rangle, {\cal M}_H\langle K_{t,i} \rangle$.\\
                Define $P_{t,i} \coloneqq P_{t,i-1} \cup Z_{t,i}$.
            }
            \Else
            {
                Define $P_{t,i} \coloneqq P_{t,i-1}$.
            }
        }
    }
    \uIf{$P_{t,i} \cap {\bf block}(I_{t,i}) \neq \emptyset$}
    {
        Define $b_t$ as an element in $P_{t,i} \cap {\bf block}(I_{t,i})$.
        Define $R_t \coloneqq P_{t,i} \cup {\bf ta}_H(\mathbb{C}_H(b_t, I_{t,i}))$.
    }
    \Else
    {
        Define $R_t \coloneqq P_{t,i}$.
    }
}
\uIf{there exists an element $e_{\rm R} \in R_t$ such that $I_{t,i} + e_{\rm R} \in {\cal I}_H$}
{
    Output ${\bf No}$, and halt.
}
\Else
{
    Output $I_{t,i}$, and halt.
}
\caption{Proposed algorithm}
\label{alg:main}
\end{algorithm}

In the rest of this subsection, we use the notation in Algorithm~\ref{alg:main}. 
Notice that since $P_{t,i-1} \subseteq P_{t,i}$ and 
$P_{t,i} \subseteq R_t$, 
Algorithm~\ref{alg:main} is a polynomial-time algorithm. 

\begin{lemma} \label{lemma:Z_nonempty} 
In Step~22 of Algorithm~\ref{alg:main}, 
$|I_{t,i}| \ge {\bf rk}_D(E \setminus P_{t,i-1})$. 
\end{lemma}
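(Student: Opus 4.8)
The plan is to trace the control flow of Algorithm~\ref{alg:main} and determine which branch of the inner \textbf{do}-\textbf{while} loop must have caused that loop to terminate during the last iteration of the outer loop. Since Step~22 comes immediately after the outer loop, the algorithm reaches it only if the outer loop terminated without the algorithm having halted earlier; hence, for the final value of $t$, the inner loop terminated with $P_{t,i} = P_{t,i-1}$, and the algorithm never executed the halting instruction inside the inner loop. I would fix this final $t$ together with the final inner index $i$ and inspect the branches of the last inner iteration. The branch testing ``${\bf rk}_D(E\setminus P_{t,i-1}) < {\bf rk}_H(K_{t,i})$'' is immediately excluded because it halts the algorithm.

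Next I would exclude the branch ``$Q_{t,i}\cap E_1\neq K_{t,i}\cap E_1$''. There $P_{t,i} = P_{t,i-1}\cup((K_{t,i}\setminus Q_{t,i})\cap E_1)$; since $Q_{t,i}$ is a base of ${\cal M}_H\langle K_{t,i}\rangle$ we have $Q_{t,i}\subseteq K_{t,i}$, so the condition forces $Q_{t,i}\cap E_1\subsetneq K_{t,i}\cap E_1$, i.e.\ $(K_{t,i}\setminus Q_{t,i})\cap E_1$ is nonempty; and it is contained in $K_{t,i} = {\rm Ch}_D(E\setminus P_{t,i-1})\subseteq E\setminus P_{t,i-1}$, hence disjoint from $P_{t,i-1}$. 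So in this branch $P_{t,i}\supsetneq P_{t,i-1}$, contradicting termination. Thus $Q_{t,i}\cap E_1 = K_{t,i}\cap E_1$ and $I_{t,i}$ is defined, so the inner loop terminated in one of the two remaining sub-branches.

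The crux is to exclude the sub-branch ``$|I_{t,i}| < {\bf rk}_D(E\setminus P_{t,i-1})$''. In that sub-branch $P_{t,i} = P_{t,i-1}\cup Z_{t,i}$, where $Z_{t,i}$ is the critical subset of ${\cal M}_D\langle K_{t,i}\rangle, {\cal M}_H\langle K_{t,i}\rangle$; since $Z_{t,i}\subseteq K_{t,i}\subseteq E\setminus P_{t,i-1}$ is disjoint from $P_{t,i-1}$ and $P_{t,i} = P_{t,i-1}$, termination forces $Z_{t,i} = \emptyset$, i.e.\ $\emptyset$ is a minimizer of $\mu_{{\cal M}_D\langle K_{t,i}\rangle{\cal M}_H\langle K_{t,i}\rangle}$. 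By Lemma~\ref{lemma:Edmonds70} this gives $|I_{t,i}| = \mu_{{\cal M}_D\langle K_{t,i}\rangle{\cal M}_H\langle K_{t,i}\rangle}(\emptyset) = {\bf rk}_{{\cal M}_D\langle K_{t,i}\rangle}(K_{t,i})$, and Lemma~\ref{lemma:rank_equal}, applied with $F = E\setminus P_{t,i-1}$ so that $K_{t,i} = {\rm Ch}_D(F)$, identifies the right-hand side with ${\bf rk}_D(E\setminus P_{t,i-1})$; this contradicts $|I_{t,i}| < {\bf rk}_D(E\setminus P_{t,i-1})$. Hence the inner loop terminated in the last sub-branch, where precisely $|I_{t,i}|\ge {\bf rk}_D(E\setminus P_{t,i-1})$ holds. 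I expect the only delicate point to be the bookkeeping of the first two paragraphs --- confirming that ``the inner loop terminated without the algorithm halting'' really leaves exactly these sub-branches --- together with chaining Edmonds' min-max identity with Lemma~\ref{lemma:rank_equal} to convert $Z_{t,i} = \emptyset$ into the equality $|I_{t,i}| = {\bf rk}_D(E\setminus P_{t,i-1})$.
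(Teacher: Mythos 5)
Your proposal is correct and follows essentially the same route as the paper: the paper's proof is the terse observation that if the algorithm enters Step~15, then combining Lemma~\ref{lemma:rank_equal} with Lemma~\ref{lemma:Edmonds70} forces $Z_{t,i}\neq\emptyset$, so that branch cannot be the one in which the inner loop stabilizes; your contrapositive phrasing of the same computation, together with the (correct but implicit in the paper) observation that the $Q_{t,i}\cap E_1\neq K_{t,i}\cap E_1$ branch also strictly grows $P$, is just a more explicit write-up of the same argument.
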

\begin{proof}
It suffices to prove that,
in Step~15 of Algorithm~\ref{alg:main},  
$Z_{t,i} \neq \emptyset$. 
In Step~15 of Algorithm~\ref{alg:main}.
Lemma~\ref{lemma:rank_equal} implies that 
$|I_{t,i}| < {\bf rk}_D(E \setminus P_{t,i-1}) = 
{\bf rk}_{{\cal M}_D\langle K_{t,i} \rangle}(K_{t,i})$. 
Thus, Lemma~\ref{lemma:Edmonds70} implies that 
$Z_{t,i} \neq \emptyset$.
This completes the proof. 
\end{proof}

\begin{lemma} \label{lemma:main_alg:base_D}
In Step~22 of Algorithm~\ref{alg:main}, 
$I_{t,i}$ is a base of ${\cal M}_D\langle K_{t,i} \rangle$ and 
${\cal M}_D|(E \setminus P_{t,i})$. 
\end{lemma}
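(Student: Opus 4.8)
The plan is to show that $I_{t,i}$ has size exactly ${\bf rk}_D(E \setminus P_{t,i-1})$ and that this common value equals both ${\bf rk}_{{\cal M}_D\langle K_{t,i}\rangle}(K_{t,i})$ and ${\bf rk}_D(E \setminus P_{t,i})$, which immediately gives the two ``base'' claims since $I_{t,i} \in {\cal I}_D\langle K_{t,i}\rangle$ and (after verifying membership) $I_{t,i} \in {\cal I}_D|(E \setminus P_{t,i})$. First I would observe that we reach Step~22 only after exiting the inner \textbf{do}-\textbf{while} loop with $Q_{t,i} \cap E_1 = K_{t,i}\cap E_1$, so the algorithm took the \textbf{Else} branch in Step~19; by Lemma~\ref{lemma:Z_nonempty} we have $|I_{t,i}| \ge {\bf rk}_D(E \setminus P_{t,i-1})$, and in that branch the algorithm either sets $P_{t,i} \coloneqq P_{t,i-1}$ when $|I_{t,i}| \ge {\bf rk}_D(E \setminus P_{t,i-1})$, or contracts $Z_{t,i}$; since the loop terminated we must be in the case $|I_{t,i}| \ge {\bf rk}_D(E\setminus P_{t,i-1})$ with $P_{t,i} = P_{t,i-1}$. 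Combined with the ``If'' test in Step~10 having failed (so ${\bf rk}_D(E \setminus P_{t,i-1}) \ge {\bf rk}_H(K_{t,i})$), and the fact that $|I_{t,i}| \le {\bf rk}_{{\cal M}_D\langle K_{t,i}\rangle}(K_{t,i})$ as a common independent set, Lemma~\ref{lemma:rank_equal} (which gives ${\bf rk}_D(E \setminus P_{t,i-1}) = {\bf rk}_{{\cal M}_D\langle K_{t,i}\rangle}(K_{t,i})$ since $K_{t,i} = {\rm Ch}_D(E\setminus P_{t,i-1})$) forces $|I_{t,i}| = {\bf rk}_{{\cal M}_D\langle K_{t,i}\rangle}(K_{t,i})$. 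Hence $I_{t,i}$ is a base of ${\cal M}_D\langle K_{t,i}\rangle$.

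For the second claim, I would use $P_{t,i} = P_{t,i-1}$, so $E \setminus P_{t,i} = E \setminus P_{t,i-1} \supseteq K_{t,i}$. By Lemma~\ref{lemma:base_matroid_D}, every base of ${\cal M}_D\langle K_{t,i}\rangle$ is a base of ${\cal M}_D|K_{t,i}$, so $I_{t,i}$ is a base of ${\cal M}_D|K_{t,i}$ and in particular $I_{t,i} \in {\cal I}_D$ with $I_{t,i} \subseteq K_{t,i} \subseteq E \setminus P_{t,i}$, giving $I_{t,i} \in {\cal I}_D|(E\setminus P_{t,i})$. Then, as already argued in the proof of Lemma~\ref{lemma:rank_equal}, every element of $(E \setminus P_{t,i-1}) \setminus K_{t,i}$ lies in ${\bf cl}_D(K_{t,i})$ (via Lemma~\ref{lemma:outside_choice_D} and Lemma~\ref{lemma:closure}), so ${\bf rk}_D(E \setminus P_{t,i}) = {\bf rk}_D(K_{t,i}) = {\bf rk}_{{\cal M}_D\langle K_{t,i}\rangle}(K_{t,i}) = |I_{t,i}|$; thus $I_{t,i}$ is a maximum-size independent set of ${\cal M}_D|(E \setminus P_{t,i})$, i.e., a base of it.

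The main obstacle I anticipate is not any deep matroid fact but the careful bookkeeping of which branch of Algorithm~\ref{alg:main} is actually in force when control reaches Step~22: one must rule out that the algorithm has already halted at Step~11 or Step~2's wrapper, confirm that the inner loop exited through the $Q_{t,i}\cap E_1 = K_{t,i}\cap E_1$ side rather than having grown $P_{t,i}$, and check that termination of the inner \textbf{do}-\textbf{while} (the condition $P_{t,i} \neq P_{t,i-1}$ becoming false) is exactly what pins down $P_{t,i} = P_{t,i-1}$ in the relevant iteration. Once that case analysis is nailed down, the rank identities are just repeated application of Lemmas~\ref{lemma:rank_equal}, \ref{lemma:base_matroid_D}, and~\ref{lemma:Edmonds70}.
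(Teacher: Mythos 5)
Your proposal is correct and follows essentially the same route as the paper's proof: establish $|I_{t,i}| \ge {\bf rk}_D(E\setminus P_{t,i-1})$ via Lemma~\ref{lemma:Z_nonempty}, deduce that $I_{t,i}$ is a base of ${\cal M}_D\langle K_{t,i}\rangle$, and then transfer to ${\cal M}_D|(E\setminus P_{t,i})$ via Lemma~\ref{lemma:base_matroid_D} and the containment $I_{t,i}\subseteq K_{t,i}\subseteq E\setminus P_{t,i}$. The only cosmetic difference is that you invoke Lemma~\ref{lemma:rank_equal} to get the equality ${\bf rk}_D(E\setminus P_{t,i-1}) = {\bf rk}_{{\cal M}_D\langle K_{t,i}\rangle}(K_{t,i})$, whereas the paper gets by with the weaker chain ${\bf rk}_D(E\setminus P) \ge {\bf rk}_D(K) = {\bf rk}_{{\cal M}_D\langle K\rangle}(K)$ (rank monotonicity plus Lemma~\ref{lemma:base_matroid_D}); the appeal to Step~10's rank comparison is superfluous here and is not used in the chain of implications.
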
 
\begin{proof}
For notational simplicity, 
we define 
\begin{equation*}
I \coloneqq I_{t,i}, \ \ 
P \coloneqq P_{t,i} = P_{t,i-1}, \ \ 
K \coloneqq K_{t,i}, \ \ 
{\cal K}_D \coloneqq {\cal M}_D\langle K\rangle. 
\end{equation*} 
Lemma~\ref{lemma:Z_nonempty} 
implies that 
$|I| \ge {\bf rk}_D(E \setminus P)$. 
Furthermore, $I \subseteq K \subseteq E \setminus P$.
Thus, 
if we can prove that 
$I \in {\cal I}_D$, then 
$I$ is a base of ${\cal M}_D|(E \setminus P)$. 
This and Lemma~\ref{lemma:base_matroid_D} 
imply 
that 
it is sufficient to prove that 
$I$ is a base of ${\cal K}_D$. 
To this end, since $I$ is an independent set of ${\cal K}_D$,
it is sufficient to prove that 
$|I| \ge {\bf rk}_{{\cal K}_D}(K)$. 
Lemma~\ref{lemma:base_matroid_D} 
implies 
that 
${\bf rk}_{{\cal K}_D}(K) = {\bf rk}_D(K)$. 
Furthermore, since 
$K \subseteq E \setminus P$, 
${\bf rk}_D(E\setminus P) \ge {\bf rk}_D(K)$.
Since 
$|I| \ge {\bf rk}_D(E \setminus P)$,
this completes the proof. 
\end{proof} 

\begin{lemma} \label{lemma:main_alg:base_H}
In Step~22 of Algorithm~\ref{alg:main}, 
$I_{t,i}$ is a base of ${\cal M}_H\langle K_{t,i} \rangle$.
\end{lemma}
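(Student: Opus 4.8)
The plan is to observe that at Step~22 the set $I_{t,i}$ is already an independent set of ${\cal M}_H\langle K_{t,i}\rangle$, and then to show that its size equals the rank of ${\cal M}_H\langle K_{t,i}\rangle$; since a maximum-size independent set is a base, this finishes the proof. For notational simplicity I would write $I \coloneqq I_{t,i}$, $K \coloneqq K_{t,i}$, $P \coloneqq P_{t,i-1}$, and let $Q \coloneqq Q_{t,i}$ be the base of ${\cal M}_H\langle K\rangle$ fixed by the algorithm.

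First, since $I$ is chosen as a common independent set of ${\cal M}_D\langle K\rangle$ and ${\cal M}_H\langle K\rangle$, it is in particular an independent set of ${\cal M}_H\langle K\rangle$, so $|I| \le {\bf rk}_{{\cal M}_H\langle K\rangle}(K) = |Q|$. It therefore suffices to prove $|I| \ge |Q|$, which I would obtain from the following chain of inequalities. By Lemma~\ref{lemma:Z_nonempty}, $|I| \ge {\bf rk}_D(E \setminus P)$. Because control reached Step~22 rather than halting with output ${\bf No}$, the conditional in Algorithm~\ref{alg:main} that outputs ${\bf No}$ when ${\bf rk}_D(E\setminus P_{t,i-1}) < {\bf rk}_H(K_{t,i})$ must have failed in iteration $i$, so ${\bf rk}_D(E\setminus P) \ge {\bf rk}_H(K)$. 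Finally, applying Lemma~\ref{lemma:base_matroid_H} with $F = K$ — using that the sets $T_s$ produced by Algorithm~\ref{alg:matroid_H} partition $K$, so that the last $F_s$ equals $K$ — shows that $Q$ is also a base of ${\cal M}_H|K$, hence $|Q| = {\bf rk}_H(K)$. Putting these together gives $|I| \ge {\bf rk}_D(E\setminus P) \ge {\bf rk}_H(K) = |Q|$, and combined with $|I| \le |Q|$ this yields $|I| = |Q| = {\bf rk}_{{\cal M}_H\langle K\rangle}(K)$, so $I$ is a base of ${\cal M}_H\langle K\rangle$.

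I do not expect a serious obstacle here: the argument is a short sandwich of inequalities. The only points requiring care are bookkeeping ones — identifying precisely which guards of Algorithm~\ref{alg:main} have been passed when control is at Step~22, and stating explicitly that ${\bf rk}_{{\cal M}_H\langle K\rangle}(K) = {\bf rk}_H(K)$, which is the bridge between the auxiliary matroid ${\cal M}_H\langle K\rangle$ and ${\cal M}_H$ restricted to $K$. The genuinely load-bearing step is the use of the failed ${\bf No}$-test to obtain ${\bf rk}_D(E\setminus P) \ge {\bf rk}_H(K)$; if that inequality could fail, $I$ would in general be merely a common independent set and not a base of ${\cal M}_H\langle K\rangle$.
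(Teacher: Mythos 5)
Your proof is correct and uses the same chain of inequalities as the paper: $|I_{t,i}| \ge {\bf rk}_D(E\setminus P_{t,i-1})$ from Lemma~\ref{lemma:Z_nonempty}, the failed guard of Step~7 giving ${\bf rk}_D(E\setminus P_{t,i-1}) \ge {\bf rk}_H(K_{t,i})$, and Lemma~\ref{lemma:base_matroid_H} giving ${\bf rk}_{{\cal M}_H\langle K_{t,i}\rangle}(K_{t,i}) = {\bf rk}_H(K_{t,i})$, closed by the fact that $I_{t,i}$ is independent in ${\cal M}_H\langle K_{t,i}\rangle$. The only cosmetic difference is that you introduce $Q_{t,i}$ as a concrete witness for ${\bf rk}_{{\cal M}_H\langle K_{t,i}\rangle}(K_{t,i})$, whereas the paper argues directly with ranks (and, incidentally, your citation of Lemma~\ref{lemma:base_matroid_H} for the key rank equality is the right one; the paper's text cites Lemma~\ref{lemma:base_matroid_D} there, which appears to be a typo).
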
 
\begin{proof}
Lemma~\ref{lemma:Z_nonempty} and 
the condition in Step~7 of Algorithm~\ref{alg:main} 
imply that 
\begin{equation*}
{\bf rk}_H(K_{t,i}) =
{\bf rk}_{{\cal M}_H\langle K_{t,i} \rangle}(K_{t,i})
\ge 
|I_{t,i}| \ge
{\bf rk}_D(E \setminus P_{t,i-1}) \ge {\bf rk}_H(K_{t,i}), 
\end{equation*}
where 
the equation follows from Lemma~\ref{lemma:base_matroid_D} and 
the first inequality follows from the fact that 
$I_{t,i}$ is an independent set of 
${\cal M}_H\langle K_{t,i} \rangle$. 
Then 
the first inequality 
holds with equality. 
Thus, $I_{t,i}$ is a base of 
${\cal M}_H\langle K_{t,i} \rangle$. 
This completes the proof. 
\end{proof} 

\begin{lemma} \label{lemma:full_E_1}
In Step~22 of Algorithm~\ref{alg:main}, 
$K_{t,i} \cap E_1 \subseteq I_{t,i}$.
\end{lemma}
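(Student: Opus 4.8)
The plan is to exploit the block structure produced by Algorithm~\ref{alg:matroid_H} run on $F = K_{t,i}$. Let $T_1, T_2, \dots, T_k$ be the sets and ${\cal M}_1, \dots, {\cal M}_k$ the matroids produced in that run, so that $\{T_1, \dots, T_k\}$ partitions $K_{t,i}$ and ${\cal M}_H\langle K_{t,i}\rangle$ is the direct sum of ${\cal M}_1, \dots, {\cal M}_k$. The key structural observation is that, since the inner \textbf{for} loop of Algorithm~\ref{alg:matroid_H} alternates between $j = 1$ and $j = 2$, the odd-indexed sets satisfy $T_r \subseteq E_1$ and the even-indexed sets satisfy $T_r \subseteq E_2$ (some of these sets may be empty, which is harmless). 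Consequently $K_{t,i} \cap E_1 = \bigcup_{r\ \mathrm{odd}} T_r$.

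Next I would use the hypotheses in force at Step~22. Since Step~22 lies in the \textbf{else} branch of the test $Q_{t,i} \cap E_1 \neq K_{t,i} \cap E_1$, we have $Q_{t,i} \cap E_1 = K_{t,i} \cap E_1$; and since $Q_{t,i}$ is a base of the direct sum ${\cal M}_H\langle K_{t,i}\rangle$, the set $Q_{t,i} \cap T_r$ is a base of ${\cal M}_r$ for every $r$. For odd $r$ this gives $Q_{t,i} \cap T_r = (Q_{t,i}\cap E_1)\cap T_r = (K_{t,i}\cap E_1)\cap T_r = T_r$, so the whole ground set $T_r$ of ${\cal M}_r$ is an independent set of ${\cal M}_r$; by (I1), $T_r$ is then the unique base of ${\cal M}_r$. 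By Lemma~\ref{lemma:main_alg:base_H}, $I_{t,i}$ is also a base of ${\cal M}_H\langle K_{t,i}\rangle$, so $I_{t,i}\cap T_r$ is a base of ${\cal M}_r$; hence $I_{t,i}\cap T_r = T_r$, i.e.\ $T_r \subseteq I_{t,i}$, for every odd $r$. Taking the union over odd $r$ yields $K_{t,i}\cap E_1 = \bigcup_{r\ \mathrm{odd}} T_r \subseteq I_{t,i}$, as desired.

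I do not expect a serious obstacle; this argument is essentially bookkeeping once the block decomposition is in hand. The only point requiring a little care is reading off correctly from Algorithm~\ref{alg:matroid_H} which blocks $T_r$ lie in $E_1$, which means tracking the interaction between the outer \textbf{while} loop and the inner \textbf{for} loop over $j \in \{1,2\}$ and noting that empty blocks cause no trouble. The remaining ingredient — that a matroid whose entire ground set is independent has that ground set as its unique base, and that a base of a direct sum restricts to a base on each summand — is immediate and is already used in the proof of Lemma~\ref{lemma:base_matroid_D}.
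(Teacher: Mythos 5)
Your proof is correct and follows essentially the same route as the paper's: read off the block structure from Algorithm~\ref{alg:matroid_H}, use the negation of the condition in Step~10 to conclude that each odd-indexed block $T_r$ is the unique base of its summand, and then invoke Lemma~\ref{lemma:main_alg:base_H} to force $I_{t,i}\cap T_r = T_r$. If anything, your write-up spells out the intermediate step (that $Q_{t,i}\cap T_r = T_r$ because $Q_{t,i}\cap E_1 = K_{t,i}\cap E_1$) slightly more explicitly than the paper does.
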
 
\begin{proof}
In this proof, we use the notation  
in Algorithm~\ref{alg:matroid_H} by replacing $F$ with $K_{t,i}$.
Assume that Algorithm~\ref{alg:matroid_H} halts when $s = k$. 
Then since the condition of Step~10 of Algorithm~\ref{alg:main}
implies that 
every element in $K_{t,i} \cap E_1$ is contained in some 
base of ${\cal M}_H\langle K_{t,i} \rangle$,
$T_s$ is the unique base of 
${\cal M}_s$ for every integer $s \in [k]$ 
such that $s$ is odd. 
Furthermore, 
since Lemma~\ref{lemma:main_alg:base_H}
implies that $I_{t,i}$ is a base of 
${\cal M}_H\langle K_{t,i} \rangle$, 
$I_{t,i} \cap T_s$ is a base of 
${\cal M}_s$ 
for every integer $s \in [k]$
such that $s$ is odd. 
Thus, 
$I_{t,i} \cap T_s = T_s$
for every integer $s \in [k]$. 
This completes the proof. 
\end{proof}

\begin{lemma} \label{lemma:key}
Assume that Algorithm~\ref{alg:main} halts when $t = k$. 
If there exists a non-uniformly stable common 
independent set $J$ of ${\cal M}_D, {\cal M}_H$, 
then we have $J \cap R_k = \emptyset$. 
\end{lemma}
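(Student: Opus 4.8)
The plan is to prove this by induction on the iteration count $t$, establishing for each $t$ that $J \cap R_t = \emptyset$ under the assumption that $J$ is non-uniformly stable. The base case $t = 0$ is immediate since $R_0 = \emptyset$. For the inductive step, suppose $J \cap R_{t-1} = \emptyset$. Since $P_{t,0} = R_{t-1}$, we have $J \cap P_{t,0} = \emptyset$, and the strategy is to show $J \cap P_{t,i} = \emptyset$ for every $i$ in the inner loop, and then that $J \cap R_t = \emptyset$ after the possible augmentation by ${\bf ta}_H(\mathbb{C}_H(b_t, I_{t,i}))$ in Step~20.

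For the inner loop, I would proceed by induction on $i$, assuming $J \cap P_{t,i-1} = \emptyset$ and showing $J \cap P_{t,i} = \emptyset$. There are three cases corresponding to the branches of the algorithm. In the branch where $P_{t,i} = P_{t,i-1} \cup ((K_{t,i} \setminus Q_{t,i}) \cap E_1)$ (Step~12), I need to show that no element $e \in (K_{t,i} \setminus Q_{t,i}) \cap E_1$ lies in $J$; here I would use the assumption $J \subseteq E \setminus P_{t,i-1}$ together with the facts that $K_{t,i} = {\rm Ch}_D(E \setminus P_{t,i-1})$ and $Q_{t,i}$ is a base of ${\cal M}_H\langle K_{t,i}\rangle$, combined with Lemmas~\ref{lemma:circuit_matroid_H}, \ref{lemma:tail_circuit_D}, \ref{lemma:outside_choice_D}, and \ref{lemma:outside_circuit} to build a blocking certificate for $e$ against $J$ using both matroids (recall $e \in E_1$ needs only weak blocking on both ${\cal M}_D$ and ${\cal M}_H$). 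In the branch where $P_{t,i} = P_{t,i-1} \cup Z_{t,i}$ (Step~17), I would use that $Z_{t,i}$ is the critical subset of ${\cal M}_D\langle K_{t,i}\rangle, {\cal M}_H\langle K_{t,i}\rangle$, so $\mu$ applied to $Z_{t,i}$ equals $|I_{t,i}| < {\bf rk}_D(E \setminus P_{t,i-1})$, and a counting argument (if some element of $Z_{t,i}$ were in $J$, then $J \setminus P_{t,i-1}$ would be a common independent set in the two restricted/contracted matroids larger than $\mu(Z_{t,i})$ allows, contradicting Lemma~\ref{lemma:Edmonds70}) gives $J \cap Z_{t,i} = \emptyset$; the translation between independence in ${\cal M}_S$ and in ${\cal M}_S\langle K_{t,i}\rangle$ is handled by Lemmas~\ref{lemma:base_matroid_D}, \ref{lemma:base_matroid_H}, \ref{lemma:circuit_matroid_D}, \ref{lemma:circuit_matroid_H}. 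The remaining branches set $P_{t,i} = P_{t,i-1}$, for which there is nothing to prove.

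Once $J \cap P_{t,i} = \emptyset$ at the end of the inner loop, I turn to Step~20. Here $b_t \in P_{t,i} \cap {\bf block}(I_{t,i})$, so in particular $b_t$ blocks $I_{t,i}$ and $b_t \in {\bf cl}_H(I_{t,i})$, and $b_t \notin J$ by the inner-loop conclusion. I need to show that no element $f \in {\bf ta}_H(\mathbb{C}_H(b_t, I_{t,i}))$ lies in $J$. The idea is that if such an $f$ were in $J$, then since $b_t$ weakly blocks $I_{t,i}$ on ${\cal M}_H$ we would have $b_t \succsim_H f'$ for the element $f'$ witnessing (W2), while $f \in {\bf ta}_H(\mathbb{C}_H(b_t,I_{t,i}))$ forces $b_t \succsim_H f$, and these relations should combine with $f \in J$, $b_t \notin J$, and the structure of the fundamental circuit to let $b_t$ block $J$ — either directly, or via an exchange argument using Lemma~\ref{lemma:elimination} / Lemma~\ref{lemma:IriT76_auxiliary} to relocate the circuit into $J + b_t$. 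The delicate point flagged by the authors is that $b_t$ must block $J$ in the appropriate sense depending on whether $b_t \in E_1$ or $b_t \in E_2$: for $b_t \in E_2$ we additionally need $b_t$ to strongly block $J$ on some ${\cal M}_S$, and recovering the strict preference when passing from $I_{t,i}$ to $J$ is exactly where careful case analysis on ties is required.

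The main obstacle I anticipate is precisely this last step — verifying that $b_t$ (respectively, the elements added in Step~12) genuinely block $J$, not just $I_{t,i}$. The difficulty is that blocking is defined relative to the fundamental circuit in the current independent set, and $J$ may differ from $I_{t,i}$ in complicated ways; one must transfer a blocking certificate across this change, which requires the fundamental-circuit manipulation lemmas (Lemmas~\ref{lemma:elimination}, \ref{lemma:circuit_union}, \ref{lemma:IriT76_auxiliary}) and, crucially, care that strict preferences are not lost in the many-to-many setting when an element of $E_2$ is involved. I expect the bulk of the proof's length and subtlety to sit in that transfer argument, with the counting argument for $Z_{t,i}$ and the closure/choice-function bookkeeping being comparatively routine given the lemmas already established.
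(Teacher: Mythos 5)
Your overall plan (reduce to showing $J\cap P_{t,i}=\emptyset$ along the inner loop, then handle the ${\bf ta}_H$ augmentation) matches the paper's strategy of locating a minimal $(t,i)$ where a ``bad element'' of $J$ first enters and deriving a contradiction there, but two of your three substantive steps rest on arguments that do not work. In the Step~12 branch you propose to ``build a blocking certificate for $e$ against $J$''; the hypothetical, however, is that $e \in (K_{t,i}\setminus Q_{t,i})\cap E_1\cap J$, so $e$ lies \emph{inside} $J$ and blocking is undefined for it. The paper's contradiction is of a different kind: it applies the stability of $J$ to the \emph{other} elements $f \in \mathbb{C}_H(e,Q_{t,i})\setminus J$ to produce circuits $C_f$ of ${\cal M}_H$ with $C_f - f\subseteq J$ and $e\notin C_f$, and then Lemma~\ref{lemma:circuit_union} yields a circuit of ${\cal M}_H$ contained entirely in $J$, contradicting $J\in{\cal I}_H$. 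In the Step~17 branch your counting argument is broken at both ends: containing one element of $Z_{t,i}$ gives no lower bound on $|J|$, and $J$ is not in general a common independent set of the tied matroids ${\cal M}_D\langle K_{t,i}\rangle,{\cal M}_H\langle K_{t,i}\rangle$, so there is no violation of Lemma~\ref{lemma:Edmonds70} to extract. The paper instead assumes $e\in Z_{t,i}\cap J$, forms $U := Z_{t,i}\setminus{\bf cl}_{{\cal K}_D}(K_{t,i}\setminus(Z_{t,i}\setminus J))$ (a proper subset, since $e\notin U$), uses strict minimality of the critical set together with submodularity of ${\bf rk}_{{\cal K}_D}$ to obtain a cardinality inequality, builds a specific independent set and fundamental circuit from it, and again closes via Lemma~\ref{lemma:circuit_union} — a closure/submodularity argument, not a size comparison.

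For Case~2 (the ${\bf ta}_H(\mathbb{C}_H(b_t,I_{t,i}))$ augmentation), you correctly flag that a delicate transfer of blocking from $I_{t,i}$ to $J$ is needed and that strict preference must be recovered for $E_2$ elements, but the proposal offers no mechanism, and the lemmas you cite (\ref{lemma:elimination}, \ref{lemma:circuit_union}, \ref{lemma:IriT76_auxiliary}) do not by themselves supply one. The missing idea is the paper's auxiliary construction: a maximum-size common independent set $L$ of ${\cal K}_D,{\cal K}_H$ chosen to satisfy conditions (C1)--(C4) and to maximize $|J\cap L|$; then Lemma~\ref{lemma:case_2_strict}, proved through the alternating ${\cal K}_D$/${\cal K}_H$ circuit-exchange Lemma~\ref{lemma:case_2_auxiliary}, shows that a directed cycle exists in the exchange graph ${\bf G}_{{\cal K}_D{\cal K}_H}(L)$, and swapping along a shortcut-free such cycle via Lemma~\ref{lemma:IriT76_intersection} while preserving (C1)--(C3) contradicts the maximality of $|J\cap L|$. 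Without an analogue of this auxiliary $L$ and the exchange-graph argument, your sketch does not reach a proof of Case~2.
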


We prove Lemma~\ref{lemma:key} in 
Section~\ref{section:proof_key_lemma}. 

\section{Correctness} 

Assume that Algorithm~\ref{alg:main} halts when $t = k$. 
In addition, we assume that $P_{t,i_t} = P_{t,i_t-1}$
for each integer $t \in [k]$.
Namely,
we assume that
Algorithm~\ref{alg:main} proceeds to 
Step~22 when $i = i_t$ 
in the $t$th iteration of 
Steps~2 to 27. 
For notational simplicity, 
we define 
\begin{equation*}
I \coloneqq I_{k,i_k}, \ \ 
R \coloneqq R_k, \ \ 
K \coloneqq K_{k,i_k}, \ \ 
{\cal K}_D \coloneqq {\cal M}_D\langle K\rangle, \ \ 
{\cal K}_H \coloneqq {\cal M}_H\langle K\rangle.
\end{equation*} 
Notice that 
$R = P_{k,i_k} = P_{k,i_k-1}$. 

\begin{lemma} \label{lemma:stability}
If Algorithm~\ref{alg:main} halts at Step~31, 
then the output of Algorithm~\ref{alg:main}
is a non-uniformly stable common 
independent set of ${\cal M}_D, {\cal M}_H$. 
\end{lemma}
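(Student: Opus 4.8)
The plan is to show directly that the output $I$ is both a common independent set and that no element of $E \setminus I$ blocks it. Since Algorithm~\ref{alg:main} reaches Step~22 in the final iteration $i = i_k$, Lemma~\ref{lemma:main_alg:base_D} tells us that $I$ is a base of ${\cal M}_D \langle K \rangle$ and of ${\cal M}_D|(E \setminus R)$, and Lemma~\ref{lemma:main_alg:base_H} tells us $I$ is a base of ${\cal K}_H = {\cal M}_H\langle K \rangle$; in particular $I \in {\cal I}_D \cap {\cal I}_H$, so $I$ is a common independent set. It remains to rule out blocking elements. First I would observe that because the outer loop terminated with $R_k = P_{k,i_k}$, and because Algorithm~\ref{alg:main} halted at Step~31 rather than Step~29, we know that $R \cap {\bf block}(I) = \emptyset$ and that there is no element $e_{\rm R} \in R$ with $I + e_{\rm R} \in {\cal I}_H$.

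Next I would split $E \setminus I$ into the part inside $R$ and the part outside $R$. For $e \in R \setminus I$: since no element of $R$ satisfies $I + e \in {\cal I}_H$, we have $e \in {\bf cl}_H(I)$, so $e$ is a candidate for membership in ${\bf block}(I)$; but $R \cap {\bf block}(I) = \emptyset$, hence $e$ does not block $I$. For $e \in E \setminus R$ with $e \notin I$: here $e \in K \setminus I$ or $e \in (E \setminus R) \setminus K$. Since $I$ is a base of ${\cal M}_D|(E\setminus R)$, every such $e$ lies in ${\bf cl}_D(I)$, and I need to extract from $\mathbb{C}_D(e,I)$ (or from the refined fundamental circuit in ${\cal K}_D$) an element $f \in \mathbb{D}_D(e,I)$ with $f \succ_D e$ when $e \in E_1$, or the appropriate super/strong condition witness. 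The key tool is Lemma~\ref{lemma:circuit_matroid_D}(iii): any element of $\mathbb{C}_D(e,I) \setminus \mathbb{C}_{{\cal K}_D}(e,I)$ satisfies $f \succ_D e$; and for elements $e$ with $e \notin {\rm Ch}_D(E\setminus P_{k,i_k-1})$, i.e. $e \in (E\setminus R)\setminus K$, Lemma~\ref{lemma:outside_choice_D} supplies a circuit $C \ni e$ with $C - e \subseteq K$ and $f \succ_D e$ for all $f \in C-e$, which shows $e$ fails even weak blocking on ${\cal M}_D$ (condition (W2) requires $e \succsim_D f$ for some $f \in \mathbb{D}_D(e,I)$, which is impossible). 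Symmetrically, for $e \in K \setminus I$ I would use that $I$ is a base of ${\cal K}_H$ together with Lemma~\ref{lemma:circuit_matroid_H}(iii): if $e$ weakly blocks $I$ on ${\cal M}_H$ via some $f \in \mathbb{D}_H(e,I)$ with $e \succsim_H f$, then (i) of that lemma forces $f \in \mathbb{C}_{{\cal K}_H}(e,I)$ when $e \in E_1$ or $e,f \in E_2$, contradicting that $I$ is a base of ${\cal K}_H$ (so $e$ cannot be added back), and the residual cases where the block would require only a non-strict comparison are exactly handled by the $E_1$ versus $E_2$ bookkeeping in Algorithm~\ref{alg:matroid_H}.

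Assembling these: any $e \in E \setminus I$ either lies in $R$ (handled by $R \cap {\bf block}(I) = \emptyset$ and the Step-31 halting condition) or lies in $E \setminus R$, in which case it fails to weakly block $I$ on ${\cal M}_D$ (when $e \notin K$) or fails the required blocking condition on ${\cal M}_H$ (when $e \in K$, using that $I$ is simultaneously a base of ${\cal K}_D$ and ${\cal K}_H$ and the refined-circuit lemmas, with the $E_1/E_2$ distinction matching the super-/strong-stability split in the definition of ``blocks''). Hence no element blocks $I$, so $I$ is non-uniformly stable. The main obstacle I anticipate is the bookkeeping for $e \in K \cap E_2$: here a weak block on ${\cal M}_D$ alone does not make $e$ a blocking element — one also needs a strong block on some ${\cal M}_S$ — so I must argue carefully that for such $e$, if it weakly blocks on both matroids, then it strongly blocks on neither, which is precisely where Lemma~\ref{lemma:circuit_matroid_H}(iii)'s hypothesis ``$e \in E_1$ or $e,f \in E_2$'' is used, and where the interaction between ${\bf ta}_H$ in the definition of $R_t$ (Step~25) and the choice function ${\rm Ch}_D$ must be reconciled. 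I would also need to confirm $K \cap E_1 \subseteq I$ via Lemma~\ref{lemma:full_E_1} so that no element of $E_1$ inside $K$ is even a candidate.

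\end{document}
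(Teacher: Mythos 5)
Your overall decomposition matches the paper's: establish that $I$ is a common independent set via Lemmas~\ref{lemma:main_alg:base_D} and \ref{lemma:main_alg:base_H}, then split $E\setminus I$ into $R$, $(E\setminus R)\setminus K$, and $K\setminus I$, invoking Lemma~\ref{lemma:full_E_1} to force $K\setminus I\subseteq E_2$. However, there is a real gap in the $(E\setminus R)\setminus K$ case. You invoke Lemma~\ref{lemma:outside_choice_D} to produce a circuit $C$ of ${\cal M}_D$ with $e\in C$, $C-e\subseteq K$, and $f\succ_D e$ for all $f\in C-e$, and then assert this ``shows $e$ fails even weak blocking on ${\cal M}_D$.'' That does not follow: $C-e$ is a subset of $K$, not of $I$, so $C$ is in general not the fundamental circuit $\mathbb{C}_D(e,I)$, and condition~(W2) refers to $\mathbb{D}_D(e,I)\subseteq I$. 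The paper bridges this with Claim~\ref{claim_1:lemma:stability}: for each $f\in C\setminus(I+e)$, since $f\in K$, Lemma~\ref{lemma:main_alg:base_D} gives $f\in{\bf cl}_{{\cal K}_D}(I)$ and Lemma~\ref{lemma:circuit_matroid_D} gives a circuit $C_f=\mathbb{C}_D(f,I)$ with all elements $\succsim_D f\succ_D e$; one then applies Lemma~\ref{lemma:circuit_union} to splice these together into $\mathbb{C}_D(e,I)$, whose non-$e$ elements are all $\succ_D e$. Without this circuit-combination step your argument does not close.

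Two smaller points. First, you state $R\cap{\bf block}(I)=\emptyset$ as an immediate consequence of loop termination, but this needs a short argument: if some $b_k\in P_{k,i_k}\cap{\bf block}(I)$ existed, Step~25 would set $R_k=P_{k,i_k}\cup{\bf ta}_H(\mathbb{C}_H(b_k,I))$, and since $b_k$ weakly blocks on ${\cal M}_H$ the tail set contains an element of $I\subseteq E\setminus P_{k,i_k}$, so $R_k\neq P_{k,i_k}$ — contradiction. Second, in the $e\in K\setminus I$ case your phrase ``forces $f\in\mathbb{C}_{{\cal K}_H}(e,I)$\dots contradicting that $I$ is a base of ${\cal K}_H$'' is not the right logic: $f\in\mathbb{C}_{{\cal K}_H}(e,I)$ is perfectly compatible with $I$ being a base. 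The correct conclusion, as in the paper, is that since $e\in E_2$ it would need to strongly block on some ${\cal M}_S$, but Lemmas~\ref{lemma:circuit_matroid_D} and \ref{lemma:circuit_matroid_H} (with the $E_1/E_2$ bookkeeping in Algorithm~\ref{alg:matroid_H}) give $f\succsim_S e$ for every $f\in\mathbb{C}_S(e,I)$ and both $S$, so no strong block is possible.
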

\begin{proof}
Assume that Algorithm~\ref{alg:main} halts at Step~31.
Let $e$ be an element in $E \setminus I$, and 
we prove that $e$ does not block $I$. 

First, we assume that $e \in R$.
Assume that $e$ blocks $I$.
Then we derive a contradiction. 
Since Algorithm~\ref{alg:main} does not halt at Step~29, 
$I + e \notin \mathcal{I}_H$. 
Furthermore, Lemmas~\ref{lemma:base_matroid_H}
and \ref{lemma:main_alg:base_H}
imply that $I$ is an independent set of ${\cal M}_H$.
Thus,  
$e \in {\bf cl}_H(I)$. 
This implies that 
since $e \in R$ and $e$ blocks $I$, 
$e \in R \cap {\bf block}(I)$, i.e., 
$R \cap {\bf block}(I) \neq \emptyset$. 
Since we assume that $\{f\} \in {\cal I}_H$ for 
every element $f \in E$, 
we have ${\bf ta}_H(\mathbb{C}_H(b_k, I)) \neq \{b_k\}$. 
However, this contradicts the fact that 
$R = P_{k,i_k}$. 
 
From here, we assume that $e \in E \setminus R$.
First, we assume that $e \in K$. 
Then 
Lemmas~\ref{lemma:main_alg:base_D} and 
\ref{lemma:main_alg:base_H} imply that
$e \in {\bf cl}_{{\cal K}_S}(I)$
for every 
element $S \in \{D,H\}$. 
In addition, 
Lemmas~\ref{lemma:circuit_matroid_D} 
and 
\ref{lemma:circuit_matroid_H}
imply
that, for every element $S \in \{D,H\}$, 
$f \succsim_S e$ for every 
element $f \in \mathbb{C}_S(e,I)$. 
Notice that 
since $e \notin I$, 
Lemma~\ref{lemma:full_E_1}
implies that 
$e \in E_2$. 
Thus, $e$ does not block $I$. 

Next, we assume that $e \notin K$. 
Since $e \in E \setminus R$,  
Lemmas~\ref{lemma:closure} and
\ref{lemma:main_alg:base_D}
imply that $e \in {\bf cl}_D(I)$. 

\begin{claim} \label{claim_1:lemma:stability}
$f \succ_D e$ for every element $f \in \mathbb{D}_D(e,I)$. 
\end{claim}
\begin{proof}
We prove that 
there exists 
a circuit $C^{\prime}$ of ${\cal M}_D$ such that 
$e \in C^{\prime} \subseteq I + e$ 
and 
$f \succ_D e$ for every element $f \in C^{\prime} - e$.
Then 
since $C^{\prime} = \mathbb{C}_D(e,I)$, 
this completes the proof. 

Since $e \notin K$, 
it follows from Lemma~\ref{lemma:outside_choice_D}
that there exists a circuit $C$ of ${\cal M}_D$ 
such that $e \in C$, $C - e \subseteq K$, and 
$f \succ_D e$ for every element $f \in C - e$. 
Thus, 
it suffices to prove that, for every 
element $f \in C \setminus (I+e)$, there exists a circuit 
$C_f$ of ${\cal M}_D$ such that 
$f \in C_f \subseteq I + f$ and 
$g \succsim_D f$ for every element $g \in C_f$.
If this is true, then
for every element $f \in C \setminus (I + e)$,  
since 
$f \succ_D e$, $e \notin C_f$. 
Thus, Lemma~\ref{lemma:circuit_union} completes the proof of 
this claim.

Let $f$ be an element in $C \setminus (I+e)$. 
Since $f \in K$, 
Lemma~\ref{lemma:main_alg:base_D}
implies that 
$f \in {\bf cl}_{{\cal K}_D}(I)$. 
Thus, 
Lemma~\ref{lemma:circuit_matroid_D}
implies that since $f \in \mathbb{C}_{{\cal K}_D}(f,I)$, 
$g \succsim_D f$
for every element  
$g \in \mathbb{C}_D(f,I)$. 
Thus, we can define $C_f$ as 
$\mathbb{C}_D(f,I)$. 
This completes the proof. 
\end{proof}

Claim~\ref{claim_1:lemma:stability}
implies that 
$e$ does not block $I$. 
This completes the proof of this case. 
\end{proof} 

\begin{lemma} \label{lemma:no}
If Algorithm~\ref{alg:main} halts at Step~8 or Step~29, 
then there does not exist 
a non-uniformly stable common 
independent set of ${\cal M}_D, {\cal M}_H$. 
\end{lemma}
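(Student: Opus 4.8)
The plan is to suppose, for contradiction, that a non-uniformly stable common independent set $J$ of ${\cal M}_D,{\cal M}_H$ exists; since the algorithm halts, Lemma~\ref{lemma:key} applies and gives $J \cap R = \emptyset$. In both halting scenarios $R = R_k = P_{k,i_k} = P_{k,i_k-1}$, so $J \subseteq E \setminus R$; write $F \coloneqq E \setminus R$, so that $K = {\rm Ch}_D(F)$ and, in the Step~29 case, $I$ is a base of ${\cal M}_D|F$. In each case I will exhibit an element of $E \setminus J$ that blocks $J$.

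Case~1 (halt at Step~8). Here ${\bf rk}_D(F) < {\bf rk}_H(K)$, while $J \subseteq F$ gives $|J| = {\bf rk}_D(J) \le {\bf rk}_D(F) < {\bf rk}_H(K) \le {\bf rk}_H(J \cup K)$. Hence some $e \in K$ has $e \notin {\bf cl}_H(J)$, so $e \notin J$ and $J + e \in {\cal I}_H$, i.e.\ $e$ satisfies (S1), hence (W1), on ${\cal M}_H$. It remains to show $e$ weakly blocks $J$ on ${\cal M}_D$ and, if $e \in E_2$, strongly blocks $J$ on one matroid. If $e \notin {\bf cl}_D(J)$, then $e$ satisfies (S1) on ${\cal M}_D$ too, and $e$ blocks $J$. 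Otherwise set $C \coloneqq \mathbb{C}_D(e,J) \subseteq J + e \subseteq F$. If $e \notin {\bf ta}_D(C)$, completeness of $\succsim_D$ gives $f \in \mathbb{D}_D(e,J)$ with $e \succ_D f$, so $e$ strongly blocks $J$ on ${\cal M}_D$. If $|{\bf ta}_D(C)| \ge 2$, then any $f \in {\bf ta}_D(C) - e$ lies in $\mathbb{D}_D(e,J)$ with $f \sim_D e$, so (W2) holds on ${\cal M}_D$; together with the weak and strong block on ${\cal M}_H$, $e$ blocks $J$ whether $e \in E_1$ or $e \in E_2$. Finally, if ${\bf ta}_D(C) = \{e\}$, then every $f \in C - e$ satisfies $f \succ_D e$, so Lemma~\ref{lemma:strong_dominance_choice_D} gives $e \notin {\rm Ch}_D(F) = K$, contradicting $e \in K$. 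Each possibility contradicts the non-uniform stability of $J$.

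Case~2 (halt at Step~29). Now there is $e_{\rm R} \in R$ with $I + e_{\rm R} \in {\cal I}_H$, where (Lemmas~\ref{lemma:main_alg:base_D} and \ref{lemma:main_alg:base_H}) $I$ is a base of ${\cal M}_D\langle K\rangle$, of ${\cal M}_D|F$, and of ${\cal M}_H\langle K\rangle$. By Lemma~\ref{lemma:base_matroid_H}, $K \subseteq {\bf cl}_H(I)$, so $e_{\rm R} \notin {\bf cl}_H(I) = {\bf cl}_H(K)$; moreover $e_{\rm R} \notin K$ (as $K \cap R = \emptyset$) and $e_{\rm R} \notin J$ (as $J \cap R = \emptyset$). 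I claim $e_{\rm R}$ blocks $J$. For the ${\cal M}_H$ side the goal is $e_{\rm R} \notin {\bf cl}_H(J)$, which yields (S1) and (W1) on ${\cal M}_H$: were $e_{\rm R} \in {\bf cl}_H(J)$, one examines the circuit $\mathbb{C}_H(e_{\rm R},J)$, whose elements other than $e_{\rm R}$ lie in $J \subseteq F$, and uses the description of $F \setminus K$ from Lemma~\ref{lemma:outside_choice_D} together with the $\sim_H$-uniformity of ${\cal M}_H\langle K\rangle$-circuits (Lemma~\ref{lemma:circuit_matroid_H}) to push this circuit into $K$ and conclude $e_{\rm R} \in {\bf cl}_H(K)$, a contradiction. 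For the ${\cal M}_D$ side, if $|J| < {\bf rk}_D(F)$ one produces a blocker of $J$ inside $F$ exactly as in Case~1 (an element of $F \setminus {\bf cl}_D(J)$ and its $D$-circuit over $J$); and if $|J| = {\bf rk}_D(F)$, so that $J$ is a base of ${\cal M}_D|F$ and ${\bf cl}_D(J) = {\bf cl}_D(K) = {\bf cl}_D(F)$, one shows $e_{\rm R}$ weakly blocks $J$ on ${\cal M}_D$, using Lemma~\ref{lemma:full_E_1} to handle the $E_1$-versus-$E_2$ distinction and $e_{\rm R} \notin {\bf cl}_H(J)$ to supply the strong block on ${\cal M}_H$ when $e_{\rm R} \in E_2$.

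Case~1 is routine: a finite case split on the location of $e$ in $\mathbb{C}_D(e,J)$. The main obstacle is Case~2. Unlike the analysis of the output $I$ in Lemma~\ref{lemma:stability}, the witness $e_{\rm R}$ lies in $R$, so its blocking behaviour is not governed directly by the base structure of $I$; the work is to transfer the information we have about $e_{\rm R}$ relative to ${\cal M}_H\langle K\rangle$ and ${\cal M}_D|F$ to the set $J$ supplied abstractly by Lemma~\ref{lemma:key}, and the most delicate point is securing an honest strong block on one of the two matroids when $e_{\rm R} \in E_2$, which forces careful use of Lemma~\ref{lemma:full_E_1} and of the $\succsim_D$- and $\succsim_H$-orderings.
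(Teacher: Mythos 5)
Your Case~1 (halt at Step~8) is correct and takes a genuinely different route from the paper. You locate a single blocking element $e \in K \setminus {\bf cl}_H(J)$ via the rank inequality and then run a finite case analysis on $\mathbb{C}_D(e,J)$, with Lemma~\ref{lemma:strong_dominance_choice_D} closing off the last sub-case. The paper instead forms $I^+$ (a base of ${\cal M}_H|K$), shows every $e\in I^+\setminus J$ is $H$-dominated by $J$ via Claim~\ref{claim_1:lemma:no}, and derives an $H$-circuit inside $I^+$ by a circuit-elimination argument with pairs $(X,Y)$. Your version is more elementary and does not need the $(X,Y)$ machinery; both are sound for this case.

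Case~2 (halt at Step~29), however, has a fatal gap. Your plan rests on establishing $e_{\rm R} \notin {\bf cl}_H(J)$ directly from the structure of the algorithm, but this cannot be done, and indeed the paper shows the opposite is forced: applying Claim~\ref{claim_1:lemma:no} and the non-uniform stability of $J$ to $e_{\rm R} \in I^+ \setminus J$ yields $e_{\rm R} \in {\bf cl}_H(J)$ together with $f \succsim_H e_{\rm R}$ for every $f \in \mathbb{D}_H(e_{\rm R},J)$. Your sketched derivation of $e_{\rm R}\notin {\bf cl}_H(J)$ examines $\mathbb{C}_H(e_{\rm R},J)$ and invokes Lemma~\ref{lemma:outside_choice_D} to ``push this circuit into $K$,'' but Lemma~\ref{lemma:outside_choice_D} produces circuits of ${\cal M}_D$, not of ${\cal M}_H$, and it gives no way to replace elements of $J\setminus K$ inside an ${\cal M}_H$-circuit. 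Without $J \subseteq K$ there is simply no route from $e_{\rm R}\notin {\bf cl}_H(K)$ to $e_{\rm R}\notin {\bf cl}_H(J)$. Your secondary branch (``if $|J| < {\bf rk}_D(F)$ one produces a blocker as in Case~1'') has the same problem in mirror image: it produces $e \in F$ with $e \notin {\bf cl}_D(J)$, but you then need $e$ to weakly block $J$ on ${\cal M}_H$, and unlike Case~1 there is no rank inequality forcing $e \notin {\bf cl}_H(J)$, nor is there an ${\cal M}_H$-analogue of Lemma~\ref{lemma:strong_dominance_choice_D} to close off the ${\bf ta}_H$-singleton sub-case. The ``single designated blocker'' strategy does not extend to the Step~29 case; the paper's counting argument over $I^+ = I + e_{\rm R}$ is what actually closes this case, and you would need to adopt it (or something equivalent) rather than trying to certify a block by $e_{\rm R}$ itself.
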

\begin{proof}
Assume that Algorithm~\ref{alg:main} halts at Step~8 or Step~29.
Furthermore, we assume that 
there exists a non-uniformly stable common 
independent set $J$ of ${\cal M}_D, {\cal M}_H$.
Notice that Lemma~\ref{lemma:key} 
implies that 
$J \subseteq E \setminus R$. 
Define the subset $I^+ \subseteq E$ as follows. 
If 
Algorithm~\ref{alg:main} halts at Step~8, then 
we define $I^+$ as a base of ${\cal M}_H|K$.
Then since $J \subseteq E \setminus R$ and 
$J \in {\cal I}_D$, the condition in Step~7 of 
Algorithm~\ref{alg:main} 
implies that $|I^+| > |J|$. 
If 
Algorithm~\ref{alg:main} halts at Step~29, then 
we define $I^+ \coloneqq I + e_{\rm R}$. 
Since Lemma~\ref{lemma:main_alg:base_D} implies that 
$I$ is a base of ${\cal M}_D|(E \setminus R)$,
$|I^+| > |I| \ge |J|$. 

\begin{claim} \label{claim_1:lemma:no}
Let $e$ be an element in $I^+ \setminus J$
such that $e \in {\bf cl}_D(J)$.
Then 
there exists 
an element $f \in \mathbb{D}_D(e,J)$ such that 
$e \succsim_D f$. 
\end{claim}
\begin{proof}
Assume that 
$f \succ_D e$ for every 
element $f \in \mathbb{D}_D(e,J)$. 
Since $e \in I^+$, there exist integers 
$t \in [k]$ and $i \in [i_t]$ such that 
$e \in K_{t,i}$. 
Thus, $e \in E \setminus P_{t,i-1}$. 
Since $J \subseteq E \setminus R$
and 
$P_{t,i-1} \subseteq R$,
we have 
$J \subseteq E \setminus P_{t,i-1}$. 
This implies that $\mathbb{C}_D(e,J) \subseteq E \setminus P_{t,i-1}$. 
Thus, 
it follows from Lemma~\ref{lemma:strong_dominance_choice_D} that 
$e \notin K_{t,i}$.
However, this contradicts the fact that $e \in K_{t,i}$. 
\end{proof} 

Since $J$ is non-uniformly stable, it follows from 
Claim~\ref{claim_1:lemma:no} that, for every element $e \in I^+ \setminus J$, 
$e \in {\bf cl}_H(J)$ and 
$f \succsim_H e$ for every element $f \in \mathbb{D}_H(e,J)$. 

From here, we prove that there exists a circuit $C$ of 
${\cal M}_H$ such that $C \subseteq I^+$. 
The existence of such a circuit $C$ contradicts the fact that 
$I^+ \in {\cal I}_H$. 
To this end, we consider a pair $(X,Y)$ of subsets of $E$ 
satisfying the following conditions. 
\begin{description}
\item[(B1)]
$X \subseteq I^+ \setminus J$,  
$Y \subseteq J \setminus I^+$, and $|X| > |Y|$. 
\item[(B2)]
For every element $e \in X$, 
there exists a circuit $C_e$ of ${\cal M}_H$ such that 
$C_e \cap X = \{e\}$
and 
$C_e \subseteq I^+ \cup Y$. 
\end{description}
Notice that 
$(I^+ \setminus J, J \setminus I^+)$ 
satisfying the above conditions. 

Let $(X,Y)$ be a pair of subsets of $E$ such that
it satisfies (B1) and (B2) and minimizes $|Y|$ among all the 
pairs of subsets of $E$ 
satisfying (B1) and (B2).
In what follows, we prove that $|Y| = 0$. 
This implies that 
there exists a circuit $C$ of 
${\cal M}_H$ such that $C \subseteq I^+$. 

Assume that $Y \neq \emptyset$. 
For each element $e \in X$, 
we define $C_e$ as a circuit 
of ${\cal M}_H$ satisfying the conditions in (B2).
Then since $I^+ \in {\cal I}_H$, 
$C_e \cap Y \neq \emptyset$
for every element $e \in X$. 
Since $|X| > |Y|$, there exists
an element $e \in X$
such that 
$C_e \cap C_f \cap Y \neq \emptyset$
for some element $f \in X - e$. 
Fix such an element $e \in X$.
Let $g$ be an element in $Y$ such that 
$g \in C_e \cap C_f$
for some element $f \in X - e$. 
Define $W$ as the set of elements $f \in X - e$
such that $g \in C_f$. 
Then for each element $f \in W$, 
Lemma~\ref{lemma:elimination} implies that 
there exists a circuit $C_f^{\prime}$ of ${\cal M}_H$ 
such that 
$f \in C_f^{\prime} \subseteq I^+ \cup (Y - g)$ and 
$C_f^{\prime} \cap (X - e) = \{f\}$. 
Thus, 
$(X - e ,Y - g)$ satisfies 
(B1) and (B2).
However, since $|Y| > |Y - g|$, 
this contradicts 
the definition of 
$(X,Y)$. 
This completes the proof. 
\end{proof} 

We are now ready to prove the main result of this paper. 

\begin{theorem}
Algorithm~\ref{alg:main} can correctly check 
the existence of a non-uniformly stable common independent set 
of ${\cal M}_D,{\cal M}_H$, and 
find one if it exists. 
\end{theorem}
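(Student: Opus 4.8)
The plan is to assemble the theorem from the three lemmas that have already been established about Algorithm~\ref{alg:main}: Lemma~\ref{lemma:stability} (if the algorithm halts at Step~31, the output is non-uniformly stable), Lemma~\ref{lemma:no} (if the algorithm halts at Step~8 or Step~29, no non-uniformly stable common independent set exists), and Lemma~\ref{lemma:key} (used inside Lemma~\ref{lemma:no}). Together with the termination remark immediately after Algorithm~\ref{alg:main} (which observes that, since $P_{t,i-1}\subseteq P_{t,i}$ and $P_{t,i}\subseteq R_t$, the algorithm runs in polynomial time), these cover every way the algorithm can terminate.

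First I would note that Algorithm~\ref{alg:main} halts only at Step~8, Step~29, or Step~31, and that in each case its output is well-defined: it outputs \textbf{No} at Steps~8 and 29, and outputs $I_{t,i}$ at Step~31. Then I would split into the two possible verdicts. If the algorithm outputs $I$ at Step~31, then Lemma~\ref{lemma:stability} immediately gives that $I$ is a non-uniformly stable common independent set of ${\cal M}_D,{\cal M}_H$, so the output is correct and a witness is produced. If the algorithm outputs \textbf{No}, then it halted at Step~8 or Step~29, and Lemma~\ref{lemma:no} gives that no non-uniformly stable common independent set of ${\cal M}_D,{\cal M}_H$ exists, so again the output is correct. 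Combining these two cases, Algorithm~\ref{alg:main} always reports correctly whether such a set exists, and outputs one whenever it does.

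I would also briefly address well-definedness issues that the proof should not gloss over: that the inner \textbf{do--while} loop (Steps~5--21) and the outer \textbf{do--while} loop (Steps~2--27) terminate, which follows from the monotonicity $P_{t,i-1}\subseteq P_{t,i}\subseteq R_t$ already recorded after the algorithm and from $R_{t-1}\subseteq R_t$; and that the quantities $I_{t,i}$, $Z_{t,i}$, $Q_{t,i}$, $b_t$ referenced at termination are indeed defined along the executed branch (e.g.\ Lemma~\ref{lemma:Z_nonempty} guarantees $Z_{t,i}\neq\emptyset$ so the algorithm never gets stuck at Step~15, and when the algorithm reaches Step~22 the set $I_{t,i}$ has been computed at Step~12). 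None of this is hard; it is bookkeeping to make the case analysis airtight.

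The main obstacle in this final step is essentially nil, since all the substantive work has been offloaded into the preceding lemmas — in particular Lemma~\ref{lemma:key}, whose proof is deferred to Section~\ref{section:proof_key_lemma} and which is flagged in the introduction as the part needing careful discussion for the many-to-many extension. So the theorem's proof is a short synthesis: enumerate the halting points, invoke Lemma~\ref{lemma:stability} for Step~31 and Lemma~\ref{lemma:no} for Steps~8 and 29, and cite the earlier remark for the polynomial running time. If anything requires a sentence of care, it is confirming that these are the \emph{only} halting points and that the \textbf{do--while} conditions are genuinely reached, rather than any genuinely new mathematical content.
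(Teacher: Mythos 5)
Your proposal matches the paper's proof exactly: the theorem is obtained by enumerating the three halting points of Algorithm~\ref{alg:main} and invoking Lemma~\ref{lemma:stability} for Step~31 and Lemma~\ref{lemma:no} for Steps~8 and~29, with termination already handled by the remark following the algorithm. The extra bookkeeping you mention (well-definedness of $I_{t,i}$, $Z_{t,i}$, etc.) is fine but goes slightly beyond what the paper spells out; the paper just cites the two lemmas.
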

\begin{proof}
This theorem follows from 
Lemmas~\ref{lemma:stability} and \ref{lemma:no}. 
\end{proof} 

\section{Proof of Lemma~\ref{lemma:key}}
\label{section:proof_key_lemma}

In this section, we prove Lemma~\ref{lemma:key}.
In what follows, we use the notation in Algorithm~\ref{alg:main}. 
Assume that there exists a non-uniformly stable common 
independent set of ${\cal M}_D, {\cal M}_H$
containing an element in $R_k$. 
Then we call an element $e \in R_k$ a \emph{bad element} if 
there exists 
a non-uniformly stable common 
independent set of ${\cal M}_D, {\cal M}_H$
containing $e$. 
Notice that the above assumption implies that,
there exists an integer $t \in [k]$
such that $R_t\setminus R_{t-1}$ contains a bad element.
Define $z$ as the minimum integer among all the integers $t \in [k]$
such that  
there exists a bad element in $R_t\setminus R_{t-1}$. 

We divide the proof into the following two cases. 
\begin{description}
\item[Case~1.]
There exists an integer $i \in [i_z]$ such that 
$P_{z,i} \setminus P_{z,i-1}$ contains a bad element. 
\item[Case~2.]
For any integer $i \in [i_z]$, 
$P_{z,i} \setminus P_{z,i-1}$ does not contain a bad element. 
\end{description}

\subsection{Case~1}
 
Define $q$ as the minimum integer in $[i_z]$ 
such that $P_{z,q} \setminus P_{z,q-1}$ contains a bad element $e$.
Then there exists a non-uniformly stable common independent set 
$J$ of ${\cal M}_D,{\cal M}_H$ such that $e \in J$. 
The minimality of $q$ implies that 
$J \subseteq E \setminus P_{z,q-1}$. 
For notational simplicity, 
we define 
$K \coloneqq K_{z,q}$,
$Q \coloneqq Q_{z,q}$.
Furthermore, if $Q \cap E_1 = K \cap E_1$, then 
we define $Z \coloneqq Z_{z,q}$.

First, we assume that $e \in (K \setminus Q) \cap E_1$. 
Then since $Q$ is a base of ${\cal K}_H$, 
$e \in {\bf cl}_{{\cal K}_H}(Q)$. 
Define $C \coloneqq \mathbb{C}_H(e,Q)$. 
Then Lemma~\ref{lemma:circuit_matroid_H} implies that 
$f \succsim_H e$ for every element $f \in C$. 
We prove that, for every element $f \in C \setminus J$, 
there exists a circuit $C_f$ of ${\cal M}_H$ such that 
$f \in C_f$, 
$C_f - f \subseteq J$, and $e \notin C_f$. 
If we can prove this, then 
since $e \in C$, 
Lemma~\ref{lemma:circuit_union} implies that 
there exists a circuit $C^{\prime}$ of ${\cal M}_H$
such that $C^{\prime} \subseteq J$. 
However, 
this contradicts the fact that $J \in {\cal I}_H$. 

Let $f$ be an element in $C \setminus J$. 
Since $f \in C - e \subseteq K$, 
Lemma~\ref{lemma:strong_dominance_choice_D} 
implies that 
if $f \in {\bf cl}_D(J)$, then 
there exists 
an element $g \in \mathbb{D}_D(f,J)$ such that 
$f \succsim_D g$.
Thus, since $J$ is a non-uniformly stable common 
independent set of ${\cal M}_D, {\cal M}_H$, 
we have $f \in {\bf cl}_H(J)$ and 
$g \succsim_H f$ for every element $g \in \mathbb{C}_H(f,J)$. 
Define $C_f \coloneqq \mathbb{C}_H(f,J)$. 
What remains is to prove that $e \notin C_f$. 
If $f \succ_H e$, then 
since $g \succsim_H f$ for every element $g \in C_f$, 
$e \notin C_f$. 
Thus, we assume that $f \sim_H e$. 
Since $e \in E_1$, 
if 
$f \notin \mathbb{C}_{{\cal K}_H}(e,Q)$, 
then Lemma~\ref{lemma:circuit_matroid_H}
implies that $f \succ_H e$.
However, this contradicts the assumption that 
$f \sim_H e$. 
Thus, $f \in \mathbb{C}_{{\cal K}_H}(e,Q)$. 
This implies that since $e \in E_1$, $f \in E_1$. 
Thus, since $J$ is a non-uniformly stable common 
independent set of ${\cal M}_D, {\cal M}_H$,
we have 
$g \succ_H f$ for every element $g \in \mathbb{D}_H(f,J)$. 
This implies that 
$e \notin C_f$.
This completes the proof of this case. 

Next, we assume that $e \in Z$. 
Define 
\begin{equation*}
U \coloneqq Z \setminus 
{\bf cl}_{{\cal K}_D}(K \setminus (Z \setminus J)).
\end{equation*} 
Since $e \notin U$ follows from $e \in J$, we have $U \subsetneq Z$.
Thus, the minimality of $Z$ implies that   
\begin{equation} \label{eq_1:critical_set} 
{\bf rk}_{{\cal K}_D}(K \setminus U) 
+ {\bf rk}_{{\cal K}_H}(U)
>  
{\bf rk}_{{\cal K}_D}(K \setminus Z) 
+ {\bf rk}_{{\cal K}_H}(Z).
\end{equation}
Define $B$ as a base of ${\cal K}_H|U$. 
Then \eqref{eq_1:critical_set} implies that 
\begin{equation} \label{eq_2:critical_set} 
|B| > {\bf rk}_{{\cal K}_H}(Z) - 
{\bf rk}_{{\cal K}_D}(K \setminus U) +
{\bf rk}_{{\cal K}_D}(K \setminus Z).
\end{equation}
Since 
$K \setminus U 
=
{\bf cl}_{{\cal K}_D}(K \setminus (Z \setminus J))$, 
we have 
\begin{equation*}
{\bf rk}_{{\cal K}_D}(K \setminus U)
=
{\bf rk}_{{\cal K}_D}(K \setminus (Z\setminus J)).
\end{equation*}
Thus, \eqref{eq_2:critical_set} implies that 
\begin{equation*}
|B| + 
{\bf rk}_{{\cal K}_D}(K \setminus (Z\setminus J)) -
{\bf rk}_{{\cal K}_D}(K \setminus Z)
> {\bf rk}_{{\cal K}_H}(Z). 
\end{equation*}
Furthermore, 
since 
${\bf rk}_{{\cal K}_D}(\cdot)$ is 
a submodular function 
(see, e.g., \cite[Lemma 1.3.1]{Oxley11}),  
we have 
\begin{equation*}
{\bf rk}_{{\cal K}_D}(K \setminus (Z\setminus J)) -
{\bf rk}_{{\cal K}_D}(K \setminus Z) 
\le {\bf rk}_{{\cal K}_D}(Z \cap J). 
\end{equation*}
Thus, since $|Z \cap J| \ge {\bf rk}_{{\cal K}_D}(Z \cap J)$ 
clearly holds, we have  
\begin{equation} \label{eq_2:assumption} 
|B| + |Z \cap J| \ge 
|B| + {\bf rk}_{{\cal K}_D}(Z \cap J) 
> {\bf rk}_{{\cal K}_H}(Z). 
\end{equation}

Define $I$ as an independent set of 
${\cal K}_H$ satisfying the following conditions.
\begin{itemize}
\item
$B \subseteq I \subseteq B \cup (Z \cap J)$. 
\item
For any element $f \in (Z \cap J) \setminus I$, 
$I + f$ is a dependent set of ${\cal K}_H$. 
\end{itemize}
That is, $I$ is a base of ${\cal K}_H|(B \cup (Z \cap J))$
such that $B \subseteq I$. 
Then  
\eqref{eq_2:assumption} implies that 
$(Z \cap J) \setminus I \neq \emptyset$. 
Let $f$ be an element in $(Z \cap J) \setminus I$. 
Define $C \coloneqq \mathbb{C}_{{\cal K}_H}(f,I)$. 

Let $B^{\ast}$ be a base of ${\cal K}_H$ such that 
$I \subseteq B^{\ast}$. 
Notice that $f \notin B^{\ast}$ and 
Lemma~\ref{lemma:closure} implies that 
$f \in {\bf cl}_{{\cal K}_H}(B^{\ast})$. 
Furthermore,  
Lemma~\ref{lemma:elimination} implies that 
$\mathbb{C}_{{\cal K}_H}(f,B^{\ast}) = C$. 
Define $C^{\ast} \coloneqq \mathbb{C}_H(f,B^{\ast})$. 
Then we prove that, for every element $g \in C^{\ast} \setminus J$, 
there exists a circuit $C_g$ of ${\cal M}_H$ such that
$g \in C_g$, $C_g - g \subseteq J$, and $f \notin C_g$. 
If we can prove this, then 
Lemma~\ref{lemma:circuit_union} implies that 
there exists a circuit of ${\cal M}_H$
that is contained in $J$ as a subset.
This contradicts the fact that $J \in {\cal I}_H$. 
In what follows, we fix an element $g \in C^{\ast} \setminus J$. 

First, we consider the case where 
$g \notin {\bf cl}_D(J)$. 
In this case, since $J$ is 
a non-uniformly stable common independent set 
of ${\cal M}_D,{\cal M}_H$, 
$g \in {\bf cl}_H(J)$ and 
$h \succ_H g$ for every element $h \in \mathbb{D}_H(g,J)$. 
Since 
Lemma~\ref{lemma:circuit_matroid_H}
implies that 
$g \succsim_H f$, $f \notin \mathbb{C}_H(g,J)$. 
Thus, we can define $C_g \coloneqq \mathbb{C}_H(g,J)$. 

In the rest of the proof of this case, we assume that 
$g \in {\bf cl}_D(J)$. 
If there exists an element $h \in \mathbb{D}_D(g,J)$
such that $g \succ_D h$, 
then 
since $J$ is 
a non-uniformly stable common independent set 
of ${\cal M}_D,{\cal M}_H$, 
$g \in {\bf cl}_H(J)$ and 
$h \succ_H g$ for every element $h \in \mathbb{D}_H(g,J)$. 
Since 
Lemma~\ref{lemma:circuit_matroid_H}
implies that 
$g \succsim_H f$, $f \notin \mathbb{C}_H(g,J)$. 
Thus, we can define $C_g \coloneqq \mathbb{C}_H(g,J)$. 

In what follows, we assume that 
$h \succsim_D g$ for every element $h \in \mathbb{C}_D(g,J)$. 

\begin{lemma} \label{lemma:not_singleton}
$\{g\} \subsetneq {\bf ta}_D(\mathbb{C}_D(g,J))$. 
\end{lemma}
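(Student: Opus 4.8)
\textbf{Proof proposal for Lemma~\ref{lemma:not_singleton}.}

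The plan is to argue by contradiction: suppose ${\bf ta}_D(\mathbb{C}_D(g,J)) = \{g\}$, i.e.\ the fundamental circuit $\mathbb{C}_D(g,J)$ of $g$ over $J$ in ${\cal M}_D$ has $g$ as its unique tail element, so that $h \succ_D g$ for every element $h \in \mathbb{D}_D(g,J)$. Under the standing assumptions of this branch of the proof we already have the opposite weak dominance $h \succsim_D g$ for all $h \in \mathbb{C}_D(g,J)$; combined with the assumed singleton tail this forces $h \succ_D g$ for all $h \in \mathbb{D}_D(g,J)$, which is exactly the hypothesis that lets us invoke non-uniform stability of $J$ against $g$ and conclude $g \in {\bf cl}_H(J)$ together with $h \succ_H g$ for every $h \in \mathbb{D}_H(g,J)$. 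I would then set $C_g \coloneqq \mathbb{C}_H(g,J)$, so $C_g - g \subseteq J$ and, since $g \succsim_H f$ by Lemma~\ref{lemma:circuit_matroid_H}(i) applied to $g \in C^{\ast}$ and every element of $C_g - g$ is strictly $\succ_H g$, we get $f \notin C_g$. This produces a circuit $C_g$ of the type promised for this element $g$, which would already be enough to complete Case~1 for $g$; so the content of the lemma is precisely the assertion that, in the remaining subcase, this easy escape is unavailable, and the surrounding argument really does need the tail of $\mathbb{C}_D(g,J)$ to be larger than $\{g\}$.

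Concretely, I would prove the contrapositive directly. Assume $\{g\} = {\bf ta}_D(\mathbb{C}_D(g,J))$. Then $\mathbb{C}_D(g,J)$ is a circuit $C$ of ${\cal M}_D$ with $g \in C$, $C - g \subseteq J \subseteq E \setminus P_{z,q-1}$ (using the minimality of $q$ and $P_{z,q-1} \subseteq R$, together with $g \in C^{\ast} \subseteq B^{\ast} \subseteq {\cal K}_H$ hence $g \in K \subseteq E \setminus P_{z,q-1}$), and $f \succ_D g$ for every $f \in C - g$. This is exactly the hypothesis of Lemma~\ref{lemma:strong_dominance_choice_D} with $F \coloneqq E \setminus P_{z,q-1}$, which yields $g \notin {\rm Ch}_D(E \setminus P_{z,q-1}) = K_{z,q} = K$. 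But $g \in C^{\ast} \subseteq B^{\ast}$, a base of ${\cal K}_H = {\cal M}_H\langle K\rangle$, so $g \in K$ — a contradiction. Hence $\{g\} \subsetneq {\bf ta}_D(\mathbb{C}_D(g,J))$.

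The only genuine obstacle is bookkeeping: verifying that $g$ really lies in $K$ and that $\mathbb{C}_D(g,J) \subseteq E \setminus P_{z,q-1}$, so that Lemma~\ref{lemma:strong_dominance_choice_D} is applicable with the correct ground set. The membership $g \in K$ follows because $g$ was chosen from $C^{\ast} \setminus J$ where $C^{\ast} = \mathbb{C}_H(f,B^{\ast})$ and $B^{\ast}$ is a base of ${\cal K}_H$; in fact $C^{\ast} \subseteq B^{\ast} + f \subseteq K$. The inclusion $\mathbb{C}_D(g,J) - g \subseteq J \subseteq E \setminus P_{z,q-1}$ is immediate once one records that $J \subseteq E \setminus P_{z,q-1}$ (from the minimality of $q$) and that $g$ itself lies in $K \subseteq E \setminus P_{z,q-1}$. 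Everything else is a direct quotation of Lemma~\ref{lemma:strong_dominance_choice_D}, so the proof is short.
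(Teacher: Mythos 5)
Your second paragraph is exactly the paper's proof: assume the tail is $\{g\}$, observe $h \succ_D g$ for all $h \in \mathbb{D}_D(g,J)$, note $\mathbb{C}_D(g,J) \subseteq E \setminus P_{z,q-1}$ from $J \subseteq E \setminus P_{z,q-1}$ and $g \in C^{\ast} \subseteq K \subseteq E \setminus P_{z,q-1}$, and invoke Lemma~\ref{lemma:strong_dominance_choice_D} to force $g \notin K = {\rm Ch}_D(E \setminus P_{z,q-1})$, contradicting $g \in K$. The surrounding motivational paragraph is not needed, but the core argument is correct and is the same route the paper takes.
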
 
\begin{proof} 
Assume that 
$\{g\} = {\bf ta}_D(\mathbb{C}_D(g,J))$.
Then $h \succ_D g$ for 
every element $h \in \mathbb{D}_D(g,J)$. 
Recall that 
$J \subseteq E \setminus P_{z,q-1}$. 
Furthermore, $g \in C^{\ast} \subseteq K \subseteq E \setminus P_{z,q-1}$. 
This and Lemma~\ref{lemma:strong_dominance_choice_D}
imply that $g \notin K$. 
However, this contradicts the fact that $g \in K$. 
\end{proof} 

Lemma~\ref{lemma:not_singleton}
implies that 
$g \in {\bf cl}_H(J)$ and 
$h \succsim_H g$ for every element $h \in \mathbb{C}_H(g,J)$. 
We divide the rest of the proof into the following cases. 
\begin{description}
\item[(1-1)]
$g \notin C$ and $f \in E_1$.
\item[(1-2)]
$g \notin C$ and $f \in E_2$.
\item[(1-3)]
$g \in C$.
\end{description}

{\bf (1-1)} 
In this case, Lemma~\ref{lemma:circuit_matroid_H}
implies that $g \succ_H f$. 
This implies that $f \notin \mathbb{C}_H(g,J)$. 
Thus, we can define $C_g \coloneqq \mathbb{C}_H(g,J)$. 

{\bf (1-2)} 
If $g \in E_2$, then
Lemma~\ref{lemma:circuit_matroid_H}
implies that $g \succ_H f$. 
This implies that $f \notin \mathbb{C}_H(g,J)$, and we can 
define 
$C_g \coloneqq \mathbb{C}_H(g,J)$. 
Assume that 
$g \in E_1$. 
In this case, 
since $J$ is 
a non-uniformly stable common independent set 
of ${\cal M}_D,{\cal M}_H$, 
$g \in {\bf cl}_H(J)$ and 
$h \succ_H g$ for every element $h \in \mathbb{D}_H(g,J)$. 
Thus, $f \notin \mathbb{C}_H(g,J)$, and 
we define $C_g \coloneqq \mathbb{C}_H(g,J)$. 

{\bf (1-3)}
We first prove the following lemma.

\begin{lemma} \label{lemma:g_in_U}
$g \in U$. 
\end{lemma}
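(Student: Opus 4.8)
The plan is to obtain $g \in U$ directly from the defining membership chains, using essentially only the case hypothesis $g \in C$ together with the fact that $g \notin J$. First I would recall that in Case~(1-3) we have $g \in C$, and that $C$ was defined as $\mathbb{C}_{{\cal K}_H}(f,I)$; since $f$ lies in $(Z \cap J) \setminus I$ while $I$ is independent in ${\cal K}_H$ with $I + f$ dependent in ${\cal K}_H$, we have $f \in {\bf cl}_{{\cal K}_H}(I) \setminus I$, so this fundamental circuit is well defined and satisfies $C \subseteq I + f$. Hence $g \in I + f$. Because $f \in Z \cap J \subseteq J$ whereas $g \in C^{\ast} \setminus J$, the elements $g$ and $f$ are distinct, and therefore $g \in I$.

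Next I would invoke the definition of $I$: it is a base of ${\cal K}_H | (B \cup (Z \cap J))$, so $I \subseteq B \cup (Z \cap J)$. Since $g \notin J$, in particular $g \notin Z \cap J$, and so $g \in B$. Finally, $B$ is a base of ${\cal K}_H | U$, whence $B \subseteq U$ and $g \in U$, which is exactly the claim.

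I do not expect a real obstacle for this particular lemma: it is a bookkeeping step that chains $g \in C \subseteq I + f$, then $I \subseteq B \cup (Z \cap J)$, then $B \subseteq U$, with $g \notin J$ ruling out the ``$+f$'' and the ``$Z \cap J$'' parts at the appropriate moments. The only point needing a moment's care is confirming that $C = \mathbb{C}_{{\cal K}_H}(f,I)$ is genuinely a subset of $I + f$, which is immediate once $f \in {\bf cl}_{{\cal K}_H}(I) \setminus I$ is checked as above. The genuinely delicate part of Case~(1-3) comes afterwards, when one uses $g \in U$ (through the inequality defining $U$ via ${\bf cl}_{{\cal K}_D}$) together with the already-established facts $g \in {\bf cl}_D(J)$, $h \succsim_D g$ for every $h \in \mathbb{C}_D(g,J)$, and $\{g\} \subsetneq {\bf ta}_D(\mathbb{C}_D(g,J))$ to construct the circuit $C_g$ of ${\cal M}_H$ that avoids $f$; that, rather than the proof of $g \in U$ itself, is where the careful discussion flagged in the introduction is required.
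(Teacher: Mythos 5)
Your proof is correct and follows essentially the same route as the paper's: both argue $g \in C - f \subseteq I \subseteq B \cup (Z \cap J)$, then use $g \notin J$ to rule out $Z \cap J$ and conclude $g \in B \subseteq U$. The only difference is that you spell out the well-definedness of $\mathbb{C}_{{\cal K}_H}(f,I)$ and the containment $C \subseteq I + f$, which the paper leaves implicit.
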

\begin{proof}
Since $f \in J$ and $g \notin J$, 
we have $g \neq f$. 
Furthermore, 
$C - f \subseteq I \subseteq B \cup (Z \cap J)$. 
Thus, since $g \in C - f$ and $g \notin J$, 
we have $g \in B \subseteq U$. 
\end{proof} 

In the rest of this proof, 
we prove that $g \in {\bf cl}_{{\cal K}_D}(K \setminus (Z \setminus J))$, 
which contradicts Lemma~\ref{lemma:g_in_U}.
To this end, 
Lemma~\ref{lemma:closure} implies that 
it is sufficient to 
prove that 
there exists a circuit 
$C^{\prime}$ of ${\cal K}_D$ such that 
$g \in C^{\prime}$ and 
$C^{\prime} - g \subseteq K \setminus (Z \setminus J)$. 

Define $C^{\circ} \coloneqq \mathbb{C}_D(g,J)$. 
Then $C^{\circ} \cap (Z \setminus J) = \{g\}$, 
and thus ${\bf ta}_D(C^{\circ}) \cap (Z \setminus J) = \{g\}$. 
Since $J \subseteq E \setminus P_{z,q-1}$, 
Lemma~\ref{lemma:outside_circuit}
implies that 
there exists a circuit $C^{\bullet}$ of ${\cal M}_D$ 
such that 
$C^{\bullet} \subseteq K$ 
and 
$g \in {\bf ta}_D(C^{\bullet}) \subseteq {\bf ta}_D(C^{\circ}) \cap K$.
Thus, since 
\begin{equation*}
g \in 
{\bf ta}_D(C^{\bullet}) \cap (Z \setminus J) \subseteq  
{\bf ta}_D(C^{\circ}) \cap K \cap (Z \setminus J)
= \{g\},
\end{equation*}
we have ${\bf ta}_D(C^{\bullet}) \cap (Z \setminus J) = \{g\}$. 
Lemma~\ref{lemma:tail_circuit_D} implies that 
there exists a circuit $C^{\diamond}$ of 
${\cal K}_D$ such that 
$g \in C^{\diamond} \subseteq {\bf ta}_D(C^{\bullet})$.
Since ${\bf ta}_D(C^{\bullet}) \cap (Z \setminus J) = \{g\}$, 
$C^{\diamond} - g \subseteq K \setminus (Z\setminus J)$. 
This completes the proof. 

\subsection{Case~2}
 
For notational simplicity, we define 
\begin{equation*}
I \coloneqq I_{z,i_z}, \ \ 
b \coloneqq b_z, \ \ 
P \coloneqq P_{z,i_z}, \ \ 
K \coloneqq K_{z,i_z}, \ \ 
{\cal K}_D \coloneqq {\cal M}_D\langle K\rangle, \ \ 
{\cal K}_H \coloneqq {\cal M}_H\langle K\rangle.
\end{equation*} 
Then 
${\bf ta}_H(\mathbb{C}_H(b, I)) - b$ contains a bad element $\xi$. 
The definition of a bad element implies that 
there exists a non-uniformly stable common independent set 
$J$ of ${\cal M}_D,{\cal M}_H$ such that 
$\xi \in J$. 
Since $P$ does not contain a bad element, 
$J \subseteq E \setminus P$. 
Since $b \in P$, $b \notin J$. 

\begin{lemma} \label{lemma:J_b} 
$b \in {\bf cl}_H(J)$ and 
$e \succ_H \xi$ for every element $e \in \mathbb{D}_H(b,J)$. 
\end{lemma}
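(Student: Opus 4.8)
The plan is to prove the two assertions in turn: the containment $b\in{\bf cl}_H(J)$ by transporting the fundamental circuit $\mathbb{C}_H(b,I)$ onto $J$, and the dominance $e\succ_H\xi$ by tracking preferences through that transport. For the containment, note that since $b\in{\bf block}(I)$ we have $b\in{\bf cl}_H(I)$, so $C\coloneqq\mathbb{C}_H(b,I)$ is defined with $b\in C\subseteq I+b$. I would first show that every $f\in(C-b)\setminus J$ lies in ${\bf cl}_H(J)$. Observe $f\in C-b\subseteq I\subseteq K={\rm Ch}_D(E\setminus P)$, while $J\subseteq E\setminus P$ is part of the Case~2 setup. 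Hence, if $f\in{\bf cl}_D(J)$ then $\mathbb{C}_D(f,J)\subseteq E\setminus P$, and Lemma~\ref{lemma:strong_dominance_choice_D} forbids the possibility that $g\succ_D f$ for every $g\in\mathbb{D}_D(f,J)$ (that would give $f\notin K$); so some $g\in\mathbb{D}_D(f,J)$ satisfies $f\succsim_D g$. Thus $f$ weakly blocks $J$ on ${\cal M}_D$ --- via (W1) when $J+f\in{\cal I}_D$, via (W2) otherwise. Since $J$ is non-uniformly stable, $f$ does not block $J$, so, combining with the last sentence, either $f$ does not weakly block $J$ on ${\cal M}_H$, or $f\in E_2$ and $f$ strongly blocks $J$ on neither matroid. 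In the first case (W1) fails on ${\cal M}_H$ and $g\succ_H f$ for every $g\in\mathbb{D}_H(f,J)$; in the second (S1) fails on ${\cal M}_H$ and $g\succsim_H f$ for every $g\in\mathbb{D}_H(f,J)$. Either way $f\in{\bf cl}_H(J)$, and I would record which of the two dominance statements on $\mathbb{D}_H(f,J)$ holds.

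Next I would invoke Lemma~\ref{lemma:circuit_union} with $C$, the circuits $C_f\coloneqq\mathbb{C}_H(f,J)$ for $f\in(C-b)\setminus J$, tails $u_f\coloneqq f$, and $v\coloneqq b$. Conditions (U1)--(U3) hold because $f\in C\cap C_f$; because $C_f\subseteq J+f$, so $C_f$ misses every $f'\ne f$ in $(C-b)\setminus J$ and in particular misses $b$ (as $b\notin J$ and $b\ne f$); and because $b\in C$. This produces a circuit $C'$ of ${\cal M}_H$ with $C'\subseteq(C\cup\bigcup_f C_f)\setminus\{f:f\in(C-b)\setminus J\}\subseteq J+b$. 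Since $J\in{\cal I}_H$, $C'$ is not contained in $J$, so $b\in C'$, whence $b\in{\bf cl}_H(J)$ and $\mathbb{C}_H(b,J)=C'$ by Lemma~\ref{lemma:closure}.

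For the dominance, each $e\in\mathbb{D}_H(b,J)=C'-b$ lies in $(C-b)\cap J$ or in $C_f-f=\mathbb{D}_H(f,J)$ for some $f\in(C-b)\setminus J$. In the former case $e\in\mathbb{C}_H(b,I)$, and $\xi\in{\bf ta}_H(\mathbb{C}_H(b,I))$ gives $e\succsim_H\xi$; in the latter case the recorded dominance statement gives $e\succsim_H f$, and $f\in\mathbb{C}_H(b,I)$ gives $f\succsim_H\xi$, so again $e\succsim_H\xi$. What remains --- and what I expect to be the main obstacle --- is to upgrade $e\succsim_H\xi$ to $e\succ_H\xi$, i.e.\ to exclude a tie $e\sim_H\xi$. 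Non-uniform stability supplies only weak dominance in the $E_2$ branch, so the tie must be ruled out by a case analysis on $e\in E_1$ versus $e\in E_2$, using Lemma~\ref{lemma:circuit_matroid_H}(iii) (which records the $E_1$/$E_2$ asymmetry of Algorithm~\ref{alg:matroid_H}), the fact that every bad element of $R_z$ lies in ${\bf ta}_H(\mathbb{C}_H(b,I))$, and the minimality of $z$: if $e\sim_H\xi$ and $e\in J$, then $e$ is a bad element, and one argues it would already occur in some $R_t\setminus R_{t-1}$ with $t<z$; in the leftover $E_2$ subcase (where only $e\succsim_H f\sim_H\xi$ is available) one runs a further circuit exchange --- Lemmas~\ref{lemma:outside_circuit} and \ref{lemma:tail_circuit_D} applied inside $K$ --- to contradict either that $I$ is a base of ${\cal M}_D\langle K\rangle$ or that $J\in{\cal I}_H$. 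This tie-breaking is precisely the ``careful discussion'' the introduction warns about, and it parallels the $E_1$-element argument of Case~1.
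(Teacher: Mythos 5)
Your argument for $b\in{\bf cl}_H(J)$ is sound and is a genuinely different route than the paper's: you transport the circuit $C=\mathbb{C}_H(b,I)$ onto $J$ by replacing each $f\in(C-b)\setminus J$ with $\mathbb{C}_H(f,J)$ and invoking Lemma~\ref{lemma:circuit_union}, whereas the paper applies the stability of $J$ directly to $b$ (Claim~\ref{claim_1:lemma:J_b} shows $b$ weakly blocks $J$ on $\mathcal{M}_D$, so the failure of $b$ to block $J$ forces $b\in{\bf cl}_H(J)$). Both are legitimate for the containment.

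The gap is exactly where you flag it, and your proposed patch does not close it. Your transport argument yields only $e\succsim_H\xi$ for $e\in\mathbb{D}_H(b,J)$: each $e$ is either in $(C-b)\cap J$, where it may actually be in ${\bf ta}_H(C)$ (for instance $e=\xi$ itself lies in $I\cap J\cap C$ and is in no way excluded from the circuit $C'$ produced by Lemma~\ref{lemma:circuit_union}), or in some $\mathbb{D}_H(f,J)$ with $f\in E_2$, where stability gives only $e\succsim_H f$ and $f$ may itself be $\sim_H\xi$. The suggested fix --- ``if $e\sim_H\xi$ and $e\in J$, then $e$ is a bad element and would already occur in some $R_t\setminus R_{t-1}$ with $t<z$'' --- is false: such an $e$ lies in ${\bf ta}_H(\mathbb{C}_H(b,I))\subseteq R_z$, hence (being bad) in $R_z\setminus R_{z-1}$ \emph{alongside} $\xi$, which is perfectly consistent with the minimality of $z$. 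No contradiction arises. The paper's proof takes an entirely different tack and this is where the real work happens: after reducing (via $b\succsim_H\xi$ and a case split on $E_1/E_2$) to the subcase $b\in E_2$, $b\sim_H\xi$, and $g\succsim_D b$ for every $g\in\mathbb{C}_D(b,J)$, the paper proves Claim~\ref{claim_2:lemma:J_b}, which shows $b\in{\bf cl}_D(I)$ with $g\succsim_D b$ for every $g\in\mathbb{C}_D(b,I)$; combined with $b\in{\bf ta}_H(\mathbb{C}_H(b,I))$ this contradicts $b\in{\bf block}(I)$, since a blocking $E_2$-element must strongly block $I$ on at least one matroid. That contradiction eliminates the tie, and it has no analogue in your circuit-transport framework. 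You would need to abandon the transport and instead reason, as the paper does, about $b$'s blocking status with respect to both $J$ and $I$.
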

\begin{proof}
Since $b \in \mathbb{C}_H(b,I)$ and $\xi \in {\bf ta}_H(\mathbb{C}_H(b,I))$,
we have $b \succsim_H \xi$. 

Since $J$ is a non-uniformly stable common independent set 
of ${\cal M}_D,{\cal M}_H$, 
if $b \notin {\bf cl}_D(J)$, then 
$b \in {\bf cl}_H(J)$ and 
$e \succ_H b$ for every element $e \in \mathbb{D}_H(b,J)$.
This implies that since $b \succsim_H \xi$, 
$e \succ_H \xi$ for every element $e \in \mathbb{D}_H(b,J)$.
This completes the proof. 

In what follows, we assume that $b \in {\bf cl}_D(J)$. 
Define $C^{\prime} \coloneqq \mathbb{C}_D(b,J)$. 

\begin{claim} \label{claim_1:lemma:J_b} 
There exists an element $e^{\ast} \in C^{\prime} - b$ such that 
$b \succsim_D e^{\ast}$. 
\end{claim}
\begin{proof}
Assume that $e \succ_D b$ for every element $e \in C^{\prime} - b$. 
Notice that since $J \subseteq E \setminus P$, 
we have $C^{\prime} - b \subseteq E \setminus P$.
Furthermore, since $b \in P$, there exist integers 
$t \in [z]$ and $i \in [i_t]$
such that $b \in K_{t,i}$.
Thus, $b \in E \setminus P_{t,i-1}$. 
Since $P_{t,i-1} \subseteq P$, 
$C^{\prime} \subseteq E \setminus P_{t,i-1}$. 
However, Lemma~\ref{lemma:strong_dominance_choice_D}
implies that $b \notin K_{t,i}$.
This contradicts the fact that $b \in K_{t,i}$. 
This completes the proof. 
\end{proof}

Then Claim~\ref{claim_1:lemma:J_b}  implies that 
$b \in {\bf cl}_H(J)$ and 
$e \succsim_H b$ for every element $e \in \mathbb{C}_H(b,J)$.
If $e \in E_1$, 
then since $J$ is a non-uniformly stable common independent set 
of ${\cal M}_D,{\cal M}_H$,
$e \succ_H b$ for every element $e \in \mathbb{D}_H(b,J)$.
Thus, since $b \succsim_H \xi$, 
$e \succ_H \xi$ for every element $e \in \mathbb{D}_H(b,J)$.

In the rest of this proof, we assume that $b \in E_2$. 
If $b \succ_H \xi$, then 
$e \succ_H \xi$ for every element $e \in \mathbb{D}_H(b,J)$.
Thus, we assume that $b \sim_H \xi$. 
Then $b \in {\bf ta}_H(\mathbb{C}_H(b,I))$. 

If there exists an element $f \in C^{\prime}$ such that 
$b \succ_D f$, then since $J$ is a non-uniformly stable common independent set 
of ${\cal M}_D,{\cal M}_H$,
$e \succ_H b$ for every element $e \in \mathbb{D}_H(b,J)$.
Since $b \succsim_H \xi$, 
$e \succ_H \xi$ for every element $e \in \mathbb{D}_H(b,J)$.
Thus, we assume that $e \succsim_D b$ for every 
element $e \in C^{\prime}$.

\begin{claim} \label{claim_2:lemma:J_b} 
$b \in {\bf cl}_D(I)$ and 
$e \succsim_D b$ for every element $e \in \mathbb{C}_D(b,I)$.
\end{claim}
\begin{proof}
Assume that $C^{\prime} - b \subseteq I$.
Then Lemma~\ref{lemma:closure} implies that 
$b \in {\bf cl}_D(I)$ and 
Lemma~\ref{lemma:elimination} implies that 
$\mathbb{C}_D(b,I) = C^{\prime}$. 
Since $e \succsim_D b$ for every 
element $e \in C^{\prime}$, 
the proof is done. 
In what follows, 
we assume that $C^{\prime} - b \not\subseteq I$. 

Let $f$ be an element in $C^{\prime} \setminus (I+b)$. 
We prove that 
there exists a circuit $C_f$ of ${\cal M}_D$ such that 
$f \in C_f$, 
$C_f - f \subseteq I$, $b \notin C_f$, and 
$e \succsim_D f$ for every element $e \in C_f$. 
If we can prove this, then Lemma~\ref{lemma:circuit_union}
implies that there exists a circuit $C^{\circ}$ 
of ${\cal M}_D$ such that 
$b \in C^{\circ}$, $C^{\circ} - b \subseteq I$, and 
$e \succsim_D b$ for every element $e \in C^{\circ}$. 
Since Lemma~\ref{lemma:elimination} implies that 
$\mathbb{C}_D(b,I) = C^{\circ}$, this completes the 
proof. 

If $f \in K$, then 
Lemma~\ref{lemma:main_alg:base_D} implies that 
$f \in {\bf cl}_{{\cal K}_D}(I)$. 
Define $C_f \coloneqq \mathbb{C}_D(f,I)$. 
Then $C_f - f \subseteq I$. 
Since $f \neq b$, $C_f - f \subseteq I$, and $b \notin I$, 
we have $b \notin C_f$. 
Lemma~\ref{lemma:circuit_matroid_D}
implies that $e \succsim_D f$ for every element $e \in C_f$.
This completes the proof of this case. 

If $f \notin K$, then 
Lemma~\ref{lemma:outside_choice_D} implies that 
there exists a circuit $C^{\bullet}$ of ${\cal M}_D$ such that 
$f \in C^{\bullet}$, $C^{\bullet} - f \subseteq K$, and 
$e \succ_D f$ for every element $e \in C^{\bullet} - f$. 
Since $b \in P$, $b \notin K$.
This implies that $b \notin C^{\bullet}$.
Let $g$ be an element in $C^{\bullet} \setminus (I+f)$. 
Since $g \in K$, 
Lemma~\ref{lemma:main_alg:base_D} implies that 
$g \in {\bf cl}_{{\cal K}_D}(I)$. 
Define $C^{\diamond} \coloneqq \mathbb{C}_D(g,I)$. 
Then $C^{\diamond} - g \subseteq I$. 
Since $g \neq b$, $C^{\diamond} - g \subseteq I$, and $b \notin I$, 
we have $b \notin C^{\diamond}$. 
Furthermore, Lemma~\ref{lemma:circuit_matroid_D}
implies that $e \succsim_D g$ for every element $e \in C^{\diamond}$.
Since $g \succ_D f$, $f \notin C^{\diamond}$. 
Thus, Lemma~\ref{lemma:circuit_union}
implies that 
there exists a circuit $C_f$ of ${\cal M}_D$ such that 
$f \in C_f$, 
$C_f - f \subseteq I$, 
$b \notin C_f$, and 
$e \succsim_D f$ for every element $e \in C_f$. 
This completes the proof. 
\end{proof} 
Recall that $b \in {\bf block}(I)$.
Furthermore, $b \in E_2$ and 
$b \in {\bf ta}_H(\mathbb{C}_H(b,I))$.
Thus, 
$b \notin {\bf cl}_D(I)$, or 
$b \in {\bf cl}_D(I)$ and 
there exists 
an element $g \in \mathbb{D}_D(b,I)$ 
such that $b \succ_D g$. 
However, this contradicts Claim~\ref{claim_2:lemma:J_b}.
This completes the proof. 
\end{proof} 

We consider a maximum-size common independent set $L$ 
of ${\cal K}_D, {\cal K}_H$ satisfying 
the following conditions. 
\begin{description}
\item[(C1)]
$b \in {\bf cl}_H(L) \setminus L$. 
\item[(C2)]
$\xi \succsim_H e$ for every element $e \in {\bf ta}_H(\mathbb{C}_H(b,L))$. 
\item[(C3)]
$J \cap {\bf ta}_H(\mathbb{C}_H(b,L)) \neq \emptyset$. 
\item[(C4)]
$L$ maximizes $|J \cap L|$ among all the  
maximum-size common independent sets  
of ${\cal K}_D, {\cal K}_H$ satisfying 
(C1), (C2), and (C3). 
\end{description} 
Since $I$ satisfies (C1), (C2), and (C3), 
$L$ is well-defined. 

\begin{lemma} \label{lemma:case_2_strict} 
Let $e$ be an element in $\mathbb{D}_H(b,L) \setminus J$
such that $e \in {\bf cl}_D(J)$. 
Then 
there exists an element $f \in \mathbb{D}_D(e,J)$ 
such that $e \succ_D f$. 
\end{lemma}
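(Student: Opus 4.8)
The plan is a proof by contradiction. Suppose no $f\in\mathbb{D}_D(e,J)$ satisfies $e\succ_D f$. Since $\succsim_D$ is complete, this means $f\succsim_D e$ for every $f\in\mathbb{D}_D(e,J)$, and since $\{e\}\in{\cal I}_D$ (so $\mathbb{C}_D(e,J)$ is not a singleton) the set $\mathbb{D}_D(e,J)$ is non-empty. Hence $e\in{\bf ta}_D(\mathbb{C}_D(e,J))$, and every element of ${\bf ta}_D(\mathbb{C}_D(e,J))$ is $\sim_D e$.

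I would first transport this circuit into the minor ${\cal K}_D={\cal M}_D\langle K\rangle$. In Case~2 we have $J\subseteq E\setminus P$, while $e\in L\subseteq K={\rm Ch}_D(E\setminus P)$; thus $\mathbb{C}_D(e,J)$ is a circuit of ${\cal M}_D$ contained in $E\setminus P$ with $e\in{\bf ta}_D(\mathbb{C}_D(e,J))\cap K$. So Lemma~\ref{lemma:outside_circuit} gives a circuit $C'$ of ${\cal M}_D$ with $C'\subseteq K$ and $e\in{\bf ta}_D(C')\subseteq{\bf ta}_D(\mathbb{C}_D(e,J))\cap K$, and then Lemma~\ref{lemma:tail_circuit_D} (with $F=K$) gives a circuit $C''$ of ${\cal K}_D$ with $e\in C''\subseteq{\bf ta}_D(C')$. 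Consequently $C''-e\subseteq\mathbb{D}_D(e,J)\subseteq J\cap K$ and every element of $C''$ is $\sim_D e$.

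Next, observe that $L$ is a base of both ${\cal K}_D$ and ${\cal K}_H$: the set $I_{z,i_z}$ is a maximum-size common independent set of ${\cal K}_D,{\cal K}_H$ and, by Lemmas~\ref{lemma:main_alg:base_D} and \ref{lemma:main_alg:base_H}, a base of each, so $|L|=|I_{z,i_z}|$ gives the same for $L$. Since $C''$ is a circuit of ${\cal K}_D$ through $e$ while $L$ is independent in ${\cal K}_D$, we cannot have $C''-e\subseteq{\bf cl}_{{\cal K}_D}(L-e)$ (that would give $e\in{\bf cl}_{{\cal K}_D}(L-e)$, impossible for a base $L$); pick $f^{\ast}\in(C''-e)\setminus{\bf cl}_{{\cal K}_D}(L-e)$. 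Then $f^{\ast}\in J\cap K$, $f^{\ast}\sim_D e$, $f^{\ast}\notin L$, and $L-e+f^{\ast}$ is again a base of ${\cal K}_D$.

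The remaining step, which I expect to be the real obstacle, is to contradict the maximality condition (C4): I would build a maximum-size common independent set $L'$ of ${\cal K}_D,{\cal K}_H$ with $f^{\ast}\in L'$, $e\notin L'$ and $|J\cap L'|>|J\cap L|$ that still satisfies (C1), (C2), (C3). Starting from $L-e+f^{\ast}$ (already a base of ${\cal K}_D$): if it is independent in ${\cal K}_H$, set $L':=L-e+f^{\ast}$; otherwise restore ${\cal K}_H$-feasibility by setting $L':=L\ominus C$ for a shortcut-free directed cycle $C$ of ${\bf G}_{{\cal K}_D{\cal K}_H}(L)$ through the arc $ef^{\ast}\in{\bf A}_{{\cal K}_D{\cal K}_H}(L)$, keeping $L'$ a common independent set of the same size by Lemma~\ref{lemma:IriT76_intersection}. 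Condition (C1) for $L'$ follows from $(L-e)+b\in{\cal I}_H$ (which holds because $e\in\mathbb{C}_H(b,L)$). The delicate part is (C2) and (C3): one must show that exchanging $e$ for $f^{\ast}$ neither inserts into ${\bf ta}_H(\mathbb{C}_H(b,\cdot))$ an element strictly above $\xi$ in $\succsim_H$ nor destroys its intersection with $J$. For this I would track the change of the fundamental circuit $\mathbb{C}_H(b,\cdot)$ along the exchange via Lemma~\ref{lemma:IriT76_auxiliary}, use $f^{\ast}\sim_D e$ together with Lemmas~\ref{lemma:circuit_matroid_D} and \ref{lemma:circuit_matroid_H} and the non-uniform stability of $J$ (Lemma~\ref{lemma:J_b}) to control the $\succsim_H$-ranks of the entering elements, and re-run a circuit-union argument (Lemma~\ref{lemma:circuit_union}) exploiting $f^{\ast}\in J$ and $J\in{\cal I}_H$ to keep $J\cap{\bf ta}_H(\mathbb{C}_H(b,L'))\neq\emptyset$. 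The elements of $\mathbb{C}_H(b,L)$ that simultaneously lie in $\mathbb{C}_H(f^{\ast},L)$, on which Lemma~\ref{lemma:IriT76_auxiliary} is silent, are where the matroid (many-to-many) setting genuinely needs the careful discussion promised in the introduction; once $L'$ is constructed, $|J\cap L'|>|J\cap L|$ contradicts (C4).
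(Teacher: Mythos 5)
Your opening reduction is sound and parallels the first half of the paper's auxiliary Lemma~\ref{lemma:case_2_auxiliary}: from the contradiction hypothesis you correctly place $e$ in ${\bf ta}_D(\mathbb{C}_D(e,J))$, transport the circuit into ${\cal K}_D$ via Lemmas~\ref{lemma:outside_circuit} and \ref{lemma:tail_circuit_D}, and extract an element $f^{\ast}\in J\setminus L$ with $e\in\mathbb{C}_{{\cal K}_D}(f^{\ast},L)$, i.e.\ an arc $ef^{\ast}$ of ${\bf G}_{{\cal K}_D{\cal K}_H}(L)$. (The paper picks its $f$ by a slightly different minimality argument on $|C\cap(J\setminus L)|$, but both yield the same kind of arc.) From that point on, however, your plan has a genuine gap.

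You try to contradict (C4) by a single exchange $L-e+f^{\ast}$, and when that fails to be ${\cal K}_H$-independent, by ``a shortcut-free directed cycle $C$ through $ef^{\ast}$''. Neither branch is justified. First, there is no guarantee that any directed cycle of ${\bf G}_{{\cal K}_D{\cal K}_H}(L)$ passes through the arc $ef^{\ast}$ at all: knowing $e\in\mathbb{C}_{{\cal K}_D}(f^{\ast},L)$ gives only this single arc, and a directed path from $f^{\ast}$ back to $e$ in the exchange graph must be produced, not assumed. Second, even if such a cycle existed and were shortcut-free, $L\ominus C$ could insert elements of $K\setminus J$ and delete further elements of $L\cap J$, so $|J\cap(L\ominus C)|>|J\cap L|$ would not follow. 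The paper avoids both problems by iterating Lemma~\ref{lemma:case_2_auxiliary}: part (D1) supplies an arc from an $L\setminus J$ vertex to a $J\setminus L$ vertex, and part (D2) --- using non-uniform stability of $J$ together with Lemma~\ref{lemma:circuit_matroid_H} and Lemma~\ref{lemma:strong_dominance_choice_D} --- supplies the next $L\setminus J$ vertex \emph{with the same hypotheses}, so the process can be repeated. Finiteness then yields a directed cycle whose non-$L$ vertices all lie in $J\setminus L$ and whose $L$-vertices all lie in $L\setminus J$; only for such an alternating cycle is $|J\cap(L\ominus C)|>|J\cap L|$ automatic. Finally, you defer the verification that the new set still satisfies (C1)--(C3) to ``the real obstacle'' and do not carry it out; the paper does this by stepping through the cycle one exchange at a time, tracking $\mathbb{C}_H(b,\cdot)$ with Lemma~\ref{lemma:IriT76_auxiliary} and splitting into the cases $f_i\succ_H\eta$, $\eta\succ_H f_i$, $f_i\sim_H\eta$. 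The part you flag as an obstacle is exactly where the argument lives, and as written your proposal does not close it.
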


Before proving 
Lemma~\ref{lemma:case_2_strict}, 
we first prove that 
Lemma~\ref{lemma:case_2_strict}
completes the proof of this case. 
Let $e$ be an element in $\mathbb{C}_H(b,L) \setminus (J+b)$, and 
let $\eta$ be an element in $J \cap {\bf ta}_H(\mathbb{C}_H(b,L))$. 
Then since $J$ is 
a non-uniformly stable common independent set 
of ${\cal M}_D,{\cal M}_H$, 
Lemma~\ref{lemma:case_2_strict}  implies that 
$e \in {\bf cl}_H(J)$ and 
$f \succ_H e$ for every element $f \in \mathbb{D}_H(e,J)$. 
Since 
$\eta \in {\bf ta}_H(\mathbb{C}_H(b,L))$, 
$e \succsim_H \eta$. 
This implies that $\eta \notin \mathbb{C}_H(e,J)$.
Since $\xi \succsim_H \eta$, 
Lemma~\ref{lemma:J_b} implies that 
$\eta \notin \mathbb{C}_H(b,J)$. 
Thus, 
Lemma~\ref{lemma:circuit_union} implies 
that there exists a circuit $C^{\prime}$ of ${\cal M}_H$ 
such that 
$C^{\prime} \subseteq J$. 
This completes the proof. 

What remains is to prove Lemma~\ref{lemma:case_2_strict}.
To this end, we first prove a necessary lemma. 

Let $e$ be an element in $L \setminus J$
such that $e \in {\bf cl}_D(J)$. 
Then since $L \subseteq K$, we have $e \in K$. 
This implies that, 
since $J \subseteq E \setminus P$, 
Lemma~\ref{lemma:strong_dominance_choice_D}
implies that 
there exists an element $f \in \mathbb{D}_D(e,J)$ 
such that $e \succsim_D f$. 
Thus, 
since $J$ is 
a non-uniformly stable common independent set 
of ${\cal M}_D,{\cal M}_H$, 
$e \in {\bf cl}_H(J)$ and 
$f \succsim_H e$ for every element $f \in \mathbb{C}_H(e,J)$. 

\begin{lemma} \label{lemma:case_2_auxiliary} 
Let $e$ be an element in $L \setminus J$
such that $e \in {\bf cl}_D(J)$. 
Assume that 
$g \succsim_D e$ 
for every element $g \in \mathbb{C}_D(e,J)$.
Then the following statements hold. 
\begin{description}
\item[(D1)]
There exists an element $f \in J \setminus L$ such that 
$f \in {\bf cl}_{{\cal K}_D}(L)$ and 
$e \in \mathbb{C}_{{\cal K}_D}(f,L)$.
\item[(D2)]
$f \in {\bf cl}_{{\cal K}_H}(L)$, and 
there exists an element $g \in L \setminus J$ such that 
$g \in \mathbb{C}_{{\cal K}_H}(f,L)$, 
$g \in {\bf cl}_D(J)$, and
$h \succsim_D g$ for every 
element $h \in \mathbb{C}_D(g,J)$.  
\end{description}
\end{lemma}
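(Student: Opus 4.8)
The plan is to establish (D1) and (D2) in turn, using two facts available from what precedes. First, $L$ is a base of both ${\cal K}_D$ and ${\cal K}_H$: this follows from Lemma~\ref{lemma:Edmonds70} together with Lemmas~\ref{lemma:main_alg:base_D} and \ref{lemma:main_alg:base_H}, since $I$ is a common independent set of ${\cal K}_D,{\cal K}_H$ that is a base of each, so $|L|=|I|={\bf rk}_{{\cal K}_D}(K)={\bf rk}_{{\cal K}_H}(K)$. Second, every base of ${\cal K}_H$ contains $K\cap E_1$ (this is exactly what the proof of Lemma~\ref{lemma:full_E_1} shows), hence $K\cap E_1\subseteq L$.

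For (D1): the hypothesis $g\succsim_D e$ for every $g\in\mathbb{C}_D(e,J)$ says precisely that $e\in{\bf ta}_D(\mathbb{C}_D(e,J))$, and $\mathbb{C}_D(e,J)\subseteq E\setminus P$ because $\mathbb{C}_D(e,J)-e\subseteq J\subseteq E\setminus P$ and $e\in L\subseteq K\subseteq E\setminus P$. Applying Lemma~\ref{lemma:outside_circuit} (with $F=E\setminus P$, so $K={\rm Ch}_D(E\setminus P)$) to the circuit $\mathbb{C}_D(e,J)$ and the element $e$ gives a circuit $C^{\bullet}$ of ${\cal M}_D$ with $C^{\bullet}\subseteq K$ and $e\in{\bf ta}_D(C^{\bullet})\subseteq{\bf ta}_D(\mathbb{C}_D(e,J))\cap K$, and then Lemma~\ref{lemma:tail_circuit_D} (with $F=K$) gives a circuit $C^{\diamond}$ of ${\cal K}_D$ with $e\in C^{\diamond}\subseteq{\bf ta}_D(C^{\bullet})\subseteq (J\cap K)+e$. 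Since $e\in{\bf cl}_{{\cal K}_D}(C^{\diamond}-e)$ while $e\notin{\bf cl}_{{\cal K}_D}(L-e)$ (because $L$ is a base of ${\cal K}_D$), we have $C^{\diamond}-e\not\subseteq{\bf cl}_{{\cal K}_D}(L-e)$, so we may pick $f\in(C^{\diamond}-e)\setminus{\bf cl}_{{\cal K}_D}(L-e)\subseteq J\cap K$. Then $(L-e)+f$ is independent in ${\cal K}_D$ and of size $|L|={\bf rk}_{{\cal K}_D}(K)$, hence a base of ${\cal K}_D$, which gives $f\in{\bf cl}_{{\cal K}_D}(L)$ and $e\in\mathbb{C}_{{\cal K}_D}(f,L)$; and $f$ lies outside ${\bf cl}_{{\cal K}_D}(L-e)\supseteq L-e$ and $f\neq e$, so $f\in J\setminus L$, proving (D1).

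For (D2): $f\in{\bf cl}_{{\cal K}_H}(L)$ is immediate ($L$ is a base of ${\cal K}_H$ and $f\in K$), and $f\notin L$ forces $f\in E_2$ since $K\cap E_1\subseteq L$. Write $\widetilde D:=\mathbb{C}_H(f,L)$ and $D:=\mathbb{C}_{{\cal K}_H}(f,L)$; by Lemma~\ref{lemma:circuit_matroid_H}, $D\subseteq\widetilde D$, every element of $D$ is $\sim_H f$, every element of $\widetilde D\setminus D$ in $E_2$ is $\succ_H f$, and one checks (from the construction of ${\cal K}_H$) that every element of $\widetilde D\setminus D$ is $\succsim_H f$. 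Assume for contradiction that no $g\in D-f$ satisfies the conditions in (D2). Fix $g\in\widetilde D-f$ with $g\notin J$; then $g\in L\subseteq K$, and $g$ does not block $J$ by non-uniform stability of $J$. If $g$ did not weakly block $J$ on ${\cal M}_D$, then $\mathbb{C}_D(g,J)$ would be a circuit of ${\cal M}_D$ contained in $E\setminus P$ whose other elements are all $\succ_D g$, so Lemma~\ref{lemma:strong_dominance_choice_D} would give $g\notin K$; hence $g$ weakly blocks $J$ on ${\cal M}_D$, and therefore --- since $g$ does not block $J$ --- $g$ does not strongly block $J$ on ${\cal M}_H$, i.e.\ $g\in{\bf cl}_H(J)$ and every element of $\mathbb{D}_H(g,J)$ is $\succsim_H g$; put $C_g:=\mathbb{C}_H(g,J)$, a circuit of ${\cal M}_H$ with $g\in C_g\subseteq J+g$. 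Checking the relations $\succsim_H$ shows $f\notin C_g$ in each case: if $g\in D-f$, the contradiction hypothesis forces $g$ to strongly block $J$ on ${\cal M}_D$, so $g$ does not weakly block $J$ on ${\cal M}_H$ and every element of $\mathbb{D}_H(g,J)$ is $\succ_H g$, which with $g\sim_H f$ rules out $f\in C_g$; if $g\in\widetilde D\setminus D$ and $g\in E_2$ then $g\succ_H f$, and if $g\in\widetilde D\setminus D$ and $g\in E_1$ then (as $g\in E_1$ does not block $J$ but weakly blocks on ${\cal M}_D$) every element of $\mathbb{D}_H(g,J)$ is $\succ_H g$, which with $g\succsim_H f$ again rules out $f\in C_g$. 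Finally, either no element of $\widetilde D-f$ lies outside $J$, so $\widetilde D\subseteq J$ contradicts $J\in{\cal I}_H$; or else Lemma~\ref{lemma:circuit_union}, applied with $C:=\widetilde D$, with the $u_i$ the distinct elements of $\widetilde D-f$ outside $J$ and the $C_i:=C_{u_i}$, and with $v:=f$, yields a circuit of ${\cal M}_H$ contained in $\widetilde D\cup\bigcup_i C_{u_i}$ with all the $u_i$ deleted, hence contained in $J$ (recall $f\in J$), again contradicting $J\in{\cal I}_H$. So some $g\in D-f$ satisfies the conditions of (D2).

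The main obstacle is this second half of the argument for (D2): one must move between the fundamental circuit $\mathbb{C}_{{\cal K}_H}(f,L)$ that (D2) speaks about and the fundamental circuit $\mathbb{C}_H(f,L)$ that is needed to run the circuit elimination (Lemma~\ref{lemma:circuit_union}) inside $J$, and keep careful track of the relations $\succsim_H$ among $f$, the elements of these two circuits, and the elements of the circuits $\mathbb{C}_H(g,J)$ --- in particular getting the $E_1$/$E_2$ split for $g\in\widetilde D\setminus D$ right and verifying that $\widetilde D\setminus D$ is made up of elements $\succsim_H f$. Everything else is a routine combination of Lemmas~\ref{lemma:strong_dominance_choice_D}, \ref{lemma:circuit_matroid_H}, \ref{lemma:tail_circuit_D}, \ref{lemma:outside_circuit}, \ref{lemma:circuit_union}, and \ref{lemma:Edmonds70} with the definition of non-uniform stability and standard matroid facts.
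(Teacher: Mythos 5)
Your proof is correct. Both for (D1) and (D2) it arrives at the same conclusions via the same key ingredients — Lemmas~\ref{lemma:outside_circuit}, \ref{lemma:tail_circuit_D}, \ref{lemma:circuit_matroid_H}, \ref{lemma:strong_dominance_choice_D}, \ref{lemma:circuit_union}, the fact that $L$ is a base of both ${\cal K}_D$ and ${\cal K}_H$, and the definition of non-uniform stability — but organizes the work differently in two places worth noting. For (D1) the paper constructs $C^{\diamond}$ the same way you do, but then runs an extremal argument: it takes a circuit $C_4$ of ${\cal K}_D$ through $e$ contained in $L\cup J$ that minimizes $|C_4\cap(J\setminus L)|$, and uses two rounds of circuit elimination to show this minimum is $1$. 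You instead observe directly that since $e\in{\bf cl}_{{\cal K}_D}(C^{\diamond}-e)$ but $e\notin{\bf cl}_{{\cal K}_D}(L-e)$ there must be some $f\in(C^{\diamond}-e)\setminus{\bf cl}_{{\cal K}_D}(L-e)$, and a single base exchange on $L$ then gives $e\in\mathbb{C}_{{\cal K}_D}(f,L)$; this is shorter and avoids the minimizer. For (D2) the paper does not use $f\in E_2$ and therefore has to split into $f\in E_1$ and $f\in E_2$ at the end; you use the $K\cap E_1\subseteq L$ observation (correctly extracted from the proof of Lemma~\ref{lemma:full_E_1}, since $T_s$ is the unique base of ${\cal M}_s$ for odd $s$) to force $f\in E_2$, which collapses that split. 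You also reorganize the case analysis on $g$: you first argue once that every $g\in\widetilde D-f$ outside $J$ weakly blocks $J$ on ${\cal M}_D$ (else Lemma~\ref{lemma:strong_dominance_choice_D} would remove $g$ from $K$), whereas the paper re-derives this fact inside each of its three cases; the content is the same. The one step you flag as ``one checks from the construction'' — that every element of $\widetilde D\setminus D$ is $\succsim_H f$ even when $f\in E_2$ and $g\in E_1$, a case not covered by the statement of Lemma~\ref{lemma:circuit_matroid_H}(iii) — is indeed true, but it is a genuine extra verification: with $g\in T_m$, $f\in T_\ell$ in Algorithm~\ref{alg:matroid_H}, one shows $m>\ell$ is impossible because then $(L+f-g)\cap F_\ell=(L\cap F_\ell)+f$ would be a dependent subset of $L+f-g$, contradicting $g\in\mathbb{C}_H(f,L)$; hence $m\le\ell$, so $f\in N_{m-1}$ and $g\in{\bf he}_H(N_{m-1})$, giving $g\succsim_H f$. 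It would be cleaner to make this explicit rather than leave it to the reader, but the claim is correct and the proof goes through.
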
 
\begin{proof}
{\bf (D1)} 
For notational simplicity, we define $\Delta \coloneqq J \setminus L$. 

We prove that 
there exists 
a circuit $C$ of ${\cal K}_D$ such that 
$e\in C \subseteq L \cup J$ and $|C \cap \Delta|=1$.
Let $f$ be the element in $C \cap \Delta$. 
Then 
Lemma~\ref{lemma:closure} implies that 
$f \in {\bf cl}_{{\cal K}_D}(L)$.
Furthermore, 
Lemma~\ref{lemma:elimination} implies that 
$C = \mathbb{C}_{{\cal K}_D}(f,L)$. 
This completes the proof of (D1). 

Define $C_1 \coloneqq \mathbb{C}_D(e,J)$. 
Then Lemma~\ref{lemma:outside_circuit}
implies that 
there exists a circuit $C_2$ of ${\cal M}_D$ 
such that 
$C_2 \subseteq K$ and 
$e \in {\bf ta}_D(C_2) \subseteq {\bf ta}_D(C_1) \cap K \subseteq L \cup J$.
Thus, since $e \in {\bf ta}_D(C_2)$, 
Lemma~\ref{lemma:tail_circuit_D}
implies that 
there exists a circuit $C_3$ of ${\cal K}_D$ 
such that $e \in C_3 \subseteq {\bf ta}_D(C_2)$. 
Let $C_4$ be a circuit 
of ${\cal K}_D$ 
such that $e \in C_4 \subseteq L \cup J$ 
and it minimizes $|C_4 \cap \Delta|$ among 
all the circuits $C^{\prime}$ of ${\cal K}_D$ 
such that $e \in C^{\prime} \subseteq L \cup J$. 
The existence of $C_3$ implies that 
$C_4$ is well-defined. 
If $C_4 \subseteq L$, then 
this contradicts the fact that $L$ is an independent set of ${\cal K}_D$. 
Thus, $C_4 \cap \Delta \neq \emptyset$. 
If $|C_4 \cap \Delta| = 1$, 
then 
the proof is done.
Thus, we assume that  
$|C_4 \cap \Delta| \ge 2$. 
Since Lemma~\ref{lemma:main_alg:base_D} implies that 
$I$ is a base of ${\cal K}_D$ and $|I|=|L|$, 
$L$ is a base of ${\cal K}_D$. 
Thus, 
$g \in {\bf cl}_{{\cal K}_D}(L)$ 
for every element $g \in C_4 \cap \Delta$. 
If there exists an element $g \in C_4 \cap \Delta$ such that 
$e \notin \mathbb{C}_{{\cal K}_D}(g,L)$, 
Lemma~\ref{lemma:elimination} implies that 
there exists a circuit $C^{\prime}$  
of ${\cal K}_D$ 
such that $e \in C^{\prime} \subseteq L \cup J$ and 
$C^{\prime} \cap \Delta \subseteq 
(C_4 \cap \Delta) - g$.
This contradicts the definition of $C_4$. 
Thus, $e \in \mathbb{C}_{{\cal K}_D}(g,L)$ 
for 
every element $g \in C_4 \cap \Delta$. 
Let $h,h^{\prime}$ be elements in 
$C_4 \cap \Delta$
such that $h \neq h^{\prime}$. 
Then
Lemma~\ref{lemma:elimination}
implies that there exists 
a circuit $C_5$ of ${\cal K}_D$
such that 
$h \in C_5 \subseteq L \cup J$,
$e \notin C_5$, and 
$C_5 \cap \Delta \subseteq \{h,h^{\prime}\} \subseteq C_4 \cap \Delta$. 
This and 
Lemma~\ref{lemma:elimination} imply
that 
there exists a circuit $C^{\prime}$ of 
${\cal K}_D$ such that $e \in C^{\prime} \subseteq L \cup J$
and 
$C^{\prime} \cap \Delta \subseteq 
(C_4 \cap \Delta) - h$.
This contradicts the definition of $C_4$. 
This completes the proof of (D1). 

{\bf (D2)} 
Since 
Lemma~\ref{lemma:main_alg:base_H} 
implies that $L$ is a base of ${\cal K}_H$, 
$f \in {\bf cl}_{{\cal K}_H}(L)$. 

Assume that, for every element $g \in \mathbb{C}_{{\cal K}_H}(f,L)$, 
(i) $g \notin {\bf cl}_D(J)$, or 
(ii) $g \in {\bf cl}_D(J)$ and 
there exists an element $h \in \mathbb{D}_D(g,J)$ 
such that $g \succ_D h$. 

Define $C \coloneqq \mathbb{C}_H(f,L)$. 
We prove that, for every element $g \in C \setminus J$, 
there exists a circuit $C_g$ of ${\cal M}_H$ such that 
$g \in C_g$, 
$C_g - g \subseteq J$, and 
$f \notin C_g$.
This and 
Lemma~\ref{lemma:circuit_union}
imply that 
there exists a circuit $C^{\prime}$ of ${\cal M}_H$ 
such that $C^{\prime} \subseteq J$.
However, this contradicts the fact that $J \in {\cal I}_H$. 
Let $g$ be an element in $C \setminus J$. 

First, we assume that $g \notin {\bf cl}_D(J)$.
Since $J$ is 
a non-uniformly stable common independent set 
of ${\cal M}_D,{\cal M}_H$, 
$g \in {\bf cl}_H(J)$ and 
$h \succ_H g$ for every element $h \in \mathbb{D}_H(g,J)$. 
Since 
$g \sim_H f$ follows from Lemma~\ref{lemma:circuit_matroid_H}, 
we can define $C_g \coloneqq \mathbb{C}_H(g,J)$. 
From here, we assume that 
$g \in {\bf cl}_D(J)$.

Next, we assume that 
$g \in \mathbb{C}_{{\cal K}_H}(f,L)$.
Then 
the above assumption 
implies that 
there exists an element $h \in \mathbb{D}_D(g,J)$ 
such that 
$g \succ_D h$. 
This implies that $g \in {\bf cl}_H(J)$ and 
$h \succ_H g$ for every element $h \in \mathbb{D}_H(g,J)$. 
Since $g \sim_H f$, we can define $C_g \coloneqq \mathbb{C}_H(g,J)$. 

Third, we assume that 
$g \in C \setminus \mathbb{C}_{{\cal K}_H}(f,L)$. 
If there exists an element $h \in \mathbb{D}_D(g,J)$ 
such 
that $g \succ_D h$, then 
$g \in {\bf cl}_H(J)$ and 
$h \succ_H g$ for every element $h \in \mathbb{D}_H(g,J)$. 
This implies that we can define $C_g \coloneqq \mathbb{C}_H(g,J)$. 
In what follows, we assume that 
$h \succsim_D g$ for every element $h \in \mathbb{C}_D(g,J)$. 
Since $g \in K$, 
Lemma~\ref{lemma:strong_dominance_choice_D}
implies that ${\bf ta}_D(\mathbb{C}_D(g,J)) \neq \{g\}$. 
This implies that if $g \in E_1$, then since $J$ is 
a non-uniformly stable common independent set 
of ${\cal M}_D,{\cal M}_H$, 
$g \in {\bf cl}_H(J)$ and 
$h \succ_H g$ for every element $h \in \mathbb{D}_H(g,J)$.
Thus, in this case, 
we can define $C_g \coloneqq \mathbb{C}_H(g,J)$. 
Assume that $g \in E_2$. 
Since $J$ is 
a non-uniformly stable common independent set 
of ${\cal M}_D,{\cal M}_H$, 
$g \in {\bf cl}_H(J)$ and 
$h \succsim_H g$ for every element $h \in \mathbb{C}_H(g,J)$.
If $f \in E_2$, then 
since $g,f \in E_2$, 
Lemma~\ref{lemma:circuit_matroid_H}
implies 
$g \succ_H f$. 
If $f \in E_1$, then  
Lemma~\ref{lemma:circuit_matroid_H}
implies 
$g \succ_H f$. 
Thus,
we can define $C_g \coloneqq \mathbb{C}_H(g,J)$. 
This completes the proof. 
\end{proof} 

We are now ready to prove 
Lemma~\ref{lemma:case_2_strict}. 

\begin{proof}[Proof of Lemma~\ref{lemma:case_2_strict}]
Assume that 
there exists an element $e \in \mathbb{D}_H(b,L) \setminus J$
such that 
$e \in {\bf cl}_D(J)$
and 
$f \succsim_D e$
for every element $f \in \mathbb{C}_D(e,J)$. 
Since $E$ is finite, 
Lemma~\ref{lemma:case_2_auxiliary} implies that 
there exists a directed cycle 
in ${\bf G}_{{\cal K}_D{\cal K}_H}(L)$.  
Thus, 
there exists a simple directed cycle 
in ${\bf G}_{{\cal K}_D{\cal K}_H}(L)$
for which there does not a shortcut arc. 
Assume that this simple directed cycle passes through 
elements 
$e_1,f_1,e_2,f_2,\dots,e_{\ell},f_{\ell} \in K$ in this order and 
satisfies the following conditions. 
\begin{itemize}
\item
$e_1,e_2,\dots,e_{\ell} \in J \setminus L$ and 
$f_1,f_2,\dots,f_{\ell} \in L \setminus J$. 
\item
$f_i \in \mathbb{C}_{{\cal K}_H}(e_i,L)$
for every integer $i \in [\ell]$. 
\end{itemize}
For each integer $i \in [\ell]$, 
we define 
\begin{equation*}
M_i \coloneqq (L \setminus \{f_1,f_2,\dots,f_i\}) \cup \{e_1,e_2,\dots,e_i\}. 
\end{equation*} 
Define $M_0 \coloneqq L$. 
Then we prove that, for every 
integer $i \in [\ell] \cup \{0\}$, 
$M_i$ satisfies (C1), (C2), and (C3). 
Lemma~\ref{lemma:IriT76_intersection}
implies that 
$M_{\ell}$ is 
a maximum-size common independent set 
of ${\cal K}_D, {\cal K}_H$. 
Since $|J \cap M_{\ell}| > |J \cap L|$, 
this contradicts the definition of $L$. 
This completes the proof. 

Since 
$L$ is an independent set of ${\cal K}_H$, 
Lemma~\ref{lemma:IriT76_sequence}
implies that 
$M_i$ is an independent set of ${\cal K}_H$
for every integer $i \in [\ell]$. 
Thus, $f_i \in \mathbb{C}_{{\cal K}_H}(e_i,M_{i-1})$
for every integer $i \in [\ell]$. 

Let $i$ be an integer in $[\ell]$. 
Assume that 
$M_{i-1}$ satisfies (C1), (C2), and (C3).
Then we prove that 
$M_i$ satisfies (C1), (C2), and (C3). 
We can prove that $M_i$ satisfies (C1) as follows. 
Since $M_{i-1}$ satisfies (C1), 
Lemma~\ref{lemma:IriT76_sequence}
implies that it is sufficient to prove that 
$b \neq e_i$. 
This follows from the fact that $b \notin J$. 

In what follows, we prove that 
$M_i$ satisfies (C2) and (C3). 
Define $C_{i-1} \coloneqq \mathbb{C}_H(b,M_{i-1})$ and 
$C_i \coloneqq \mathbb{C}_H(b,M_i)$. 
Let $\eta$ be an element in 
$J \cap {\bf ta}_H(C_{i-1})$.
We divide the proof into the following three cases. 
\begin{description}
\item[(2-1)]
$f_i \succ_H \eta$.
\item[(2-2)]
$\eta \succ_H f_i$.
\item[(2-3)]
$f_i \sim_H \eta$.
\end{description}
Notice that since $f_i \in \mathbb{C}_{{\cal K}_H}(e_i,L)$, 
Lemma~\ref{lemma:circuit_matroid_H} implies 
that $e_i \sim_H f_i$. 

{\bf (2-1)} 
For each integer $j \in \{i,i-1\}$, 
we define $C^{\prime}_j$ as the set of 
elements $g \in C_j$ such that 
$\eta \succsim_H g$. 
Since $\eta \in {\bf ta}_H(C_{i-1})$, 
$C_{i-1}^{\prime} = {\bf ta}_H(C_{i-1})$.
Thus, for every pair of elements $g,g^{\prime} \in C^{\prime}_{i-1}$, 
we have $g \sim_H g^{\prime}$.
Claim~\ref{claim_1:lemma:case_2_strict}
implies that 
if we can prove that $C^{\prime}_{i-1} = C^{\prime}_i$, then 
since 
$M_{i-1}$ satisfies (C2) and (C3), 
$M_i$ satisfies (C2) and (C3). 

\begin{claim} \label{claim_1:lemma:case_2_strict}
Assume that $C^{\prime}_{i-1} = C^{\prime}_i$.
Then  
${\bf ta}_H(C_{i-1}) = {\bf ta}_H(C_i)$. 
\end{claim}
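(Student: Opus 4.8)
The plan is to prove the identity ${\bf ta}_H(C_i) = C^{\prime}_i$; combined with the hypothesis $C^{\prime}_i = C^{\prime}_{i-1}$ and the equality $C^{\prime}_{i-1} = {\bf ta}_H(C_{i-1})$ recorded just above, this gives ${\bf ta}_H(C_{i-1}) = {\bf ta}_H(C_i)$ at once. The argument is purely order-theoretic, using only completeness and transitivity of $\succsim_H$ together with the facts already noted above: that, under the hypothesis, the elements of $C^{\prime}_i = C^{\prime}_{i-1}$ are pairwise $\sim_H$ and in particular all $\sim_H \eta$, since $\eta \in {\bf ta}_H(C_{i-1}) = C^{\prime}_{i-1} = C^{\prime}_i$. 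The first thing I would record is the consequence that $\eta \in C_i$: this is forced because $C^{\prime}_i \subseteq C_i$ by the definition of $C^{\prime}_i$ and $\eta \in C^{\prime}_i$. This observation is what makes both inclusions routine.

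For ${\bf ta}_H(C_i) \subseteq C^{\prime}_i$, take $g \in {\bf ta}_H(C_i)$; then every element of $C_i$ weakly dominates $g$ under $\succsim_H$, and since $\eta \in C_i$ we get $\eta \succsim_H g$, i.e.\ $g \in C^{\prime}_i$. For the reverse inclusion $C^{\prime}_i \subseteq {\bf ta}_H(C_i)$, take $g \in C^{\prime}_i$ and an arbitrary $h \in C_i$; by completeness either $h \succsim_H g$ (as desired) or $g \succ_H h$. In the latter case, since $g \sim_H \eta$ gives $\eta \succsim_H g$, transitivity yields $\eta \succsim_H h$, so $h \in C^{\prime}_i$; but then $h \sim_H g$ because all elements of $C^{\prime}_i$ are mutually $\sim_H$, contradicting $g \succ_H h$. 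Hence $h \succsim_H g$ for every $h \in C_i$, i.e.\ $g \in {\bf ta}_H(C_i)$. Putting the two inclusions together proves the claim.

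I do not expect a genuine obstacle here. The only point requiring care — and the reason the hypothesis $C^{\prime}_i = C^{\prime}_{i-1}$ is needed at all — is the observation that it forces $\eta$ into $C_i$; once that is in place, nothing beyond the completeness of $\succsim_H$ and the pairwise $\sim_H$-equivalence of the elements of $C^{\prime}_i$ is used, and in particular no matroid machinery (not even Lemma~\ref{lemma:IriT76_auxiliary}) enters.
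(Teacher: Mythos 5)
Your proposal is correct and uses essentially the same elementary order-theoretic reasoning as the paper: both arguments rest on $\eta\in C'_{i-1}=C'_i\subseteq C_i$ and on the fact that all elements of $C'_{i-1}$ are pairwise $\sim_H$, and both derive a contradiction from any $h\in C_i$ with $g\succ_H h$. The only difference is organizational — you factor through the identity ${\bf ta}_H(C_i)=C'_i$ (making the ``$\subseteq$'' direction immediate once $\eta\in C_i$ is noted), whereas the paper proves the two set inclusions directly, with its ``$\supseteq$'' direction phrased as a would-be-emptiness contradiction.
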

\begin{proof}
Let $g$ be an element in ${\bf ta}_H(C_{i-1}) = C^{\prime}_{i-1}$.
Then since 
$C^{\prime}_{i-1} = C^{\prime}_i$, $g \in C^{\prime}_i$.  
Assume that $g \notin {\bf ta}_H(C_i)$. 
Then there exists an element 
$h \in C_i$ such that 
$g \succ_H h$.
Since $g \in C^{\prime}_i$, $\eta \succsim_H g$.  
Thus, $\eta \succsim_H h$.
This implies that $h \in C^{\prime}_i$. 
Since $C^{\prime}_{i-1} = C^{\prime}_i$, 
$h \in C^{\prime}_{i-1}$. 
However, this contradicts the fact that $g \sim_H h$. 

Let $g$ be an element in ${\bf ta}_H(C_i)$. 
Then $h \succsim_H g$ 
for every element $h \in C_i$.
Thus,
if $g \notin C^{\prime}_i$, then 
$C^{\prime}_i = \emptyset$. 
However, since $\eta \in C^{\prime}_{i-1} = C^{\prime}_i$, 
$C^{\prime}_i \neq \emptyset$. 
This is a contradiction. 
This implies that 
$g \in C^{\prime}_i = C^{\prime}_{i-1} = {\bf ta}_H(C_{i-1})$. 
This completes the proof. 
\end{proof}

Since $e_i \sim_H f_i \succ_H \eta$, 
we have $f_i \notin C^{\prime}_{i-1}$ and 
$e_i \notin C^{\prime}_i$. 
Thus, in order to prove that 
$C^{\prime}_{i-1} = C^{\prime}_i$, 
it suffices to prove that,
for every element $g \in M_{i-1}$ such that 
$\eta \succsim_H g$, 
$g \notin \mathbb{C}_H(e_i,M_{i-1})$.
If we can prove this, then 
Lemma~\ref{lemma:IriT76_sequence}
implies that, 
for every element $g \in M_{i-1} - f_i$ such that 
$\eta \succsim_H g$, 
$g \in C^{\prime}_{i-1}$ if and only if 
$g \in C^{\prime}_i$. 

Since $e_i \in {\bf cl}_{{\cal K}_H}(M_{i-1})$, 
Lemma~\ref{lemma:circuit_matroid_H} implies that, for every 
element $h \in \mathbb{C}_H(e_i,M_{i-1})$, 
$h \succsim_H e_i$. 
Thus, for every element $g \in M_{i-1}$ such that 
$\eta \succsim_H g$,
since $e_i \succ_H \eta \succsim_H g$, 
$g \notin \mathbb{C}_H(e_i,M_{i-1})$. 

{\bf (2-2)}
In this case, we prove that $C_{i-1} = C_i$. 
If we can prove this, then since $M_{i-1}$ satisfies (C2) and (C3), 
$M_i$ satisfies (C2) and (C3). 
Notice that since $\eta \in {\bf ta}_H(C_{i-1})$ and 
$\eta \succ_H f_i$, we have $f_i \notin C_{i-1}$. 
By using this, 
we can prove that $C_{i-1} = C_i$
as follows. 
Lemma~\ref{lemma:IriT76_auxiliary}
implies that 
$e_i \notin C_i$. 
Furthermore, 
Lemma~\ref{lemma:IriT76_auxiliary}
implies that 
for every element $g \in M_{i-1} - f_i$, 
$g \in C_{i-1}$ if and only if $g \in C_i$.
This completes the proof. 

{\bf (2-3)} 
If $f_i \notin C_{i-1}$, then 
$f_i \neq \eta$ and 
Lemma~\ref{lemma:IriT76_auxiliary}
implies that $\eta \in C_i$. 
In addition, if $f_i \in C_{i-1}$, then 
Lemma~\ref{lemma:IriT76_auxiliary}
implies that $e_i \in C_i$. 
Recall that $e_i \sim_H \eta$. 
Thus, since $\xi \succsim_H \eta \sim_H e_i$ and 
$e_i \in J$, 
it suffices to prove that, for every 
element $g \in M_{i-1}$ such that 
$\eta \succ_H g$, we have 
$g \notin \mathbb{C}_H(e_i,M_{i-1})$. 
Lemma~\ref{lemma:circuit_matroid_H} implies that, for every 
element $h \in \mathbb{C}_H(e_i,M_{i-1})$, 
$h \succsim_H e_i$. 
Thus, 
for every 
element $g \in M_{i-1}$ such that 
$\eta \succ_H g$, we have 
$g \notin \mathbb{C}_H(e_i,M_{i-1})$. 
\end{proof} 

\bibliographystyle{plain}
\bibliography{nonuniform_common_bib}

@book{Oxley11,
  author = {James G. Oxley},
  title = {Matroid Theory},
  publisher = {Oxford University Press},
  year = {2011},
  edition = {2nd},
  address = {Oxford, UK}, 
}

@incollection{Edmonds70,
  author    = {Jack Edmonds},
  title     = {Submodular functions, matroids, and certain polyhedra},
  booktitle = {Combinatorial Structures and their Applications},
  publisher = {Gordon and Breach},
  year      = {1970},
  editor    = {Richard Guy and 
               Haim Hanani and 
               Norbert Sauer and 
               Johanan Sch{\"o}nheim},
  pages     = {69--87},
  address = {New York, NY},
}

@article{Kamiyama25+,
  author = {Kamiyama, Naoyuki},
  title = {Non-uniformly Stable Matchings},
  volume = {},
  number = {},
  pages = {},
  year = {to appear},
  doi = {10.1145/3773033},
  journal = {ACM Transactions on Economics and Computation},
}

@article{Kamiyama25,
  title = {Strongly Stable Matchings under Matroid Constraints},
  author = {Naoyuki Kamiyama},
  volume = {39},
  number = {3},
  pages = {1471-1490},
  year = {2025},
  doi = {10.1137/23M1587658},
  journal = {{SIAM} Journal on Discrete Mathematics},
}

@article{Kamiyama22,
  author    = {Naoyuki Kamiyama},
  title     = {A Matroid Generalization of the Super-Stable Matching Problem},
  journal   = {{SIAM} Journal on Discrete Mathematics},
  volume    = {36},
  number    = {2},
  pages     = {1467--1482},
  year      = {2022},
  doi       = {10.1137/21m1437214},
}

@InProceedings{Blikstad21,
  author =	{Blikstad, Joakim},
  title =	{{Breaking {$O(nr)$} for Matroid Intersection}},
  booktitle =	{Proceedings of the 48th International Colloquium on Automata, Languages, and Programming},
  pages =	{31:1--31:17},
  series =	{Leibniz International Proceedings in Informatics},
  year =	{2021},
  volume =	{198},
  editor =	{Bansal, Nikhil and Merelli, Emanuela and Worrell, James},
  publisher =	{Schloss Dagstuhl -- Leibniz-Zentrum f{\"u}r Informatik},
  address =	{Wadern, Germany},
  doi =		{10.4230/LIPIcs.ICALP.2021.31}}

@article{Cunningham86,
  author    = {William H. Cunningham},
  title     = {Improved Bounds for Matroid Partition and Intersection Algorithms},
  journal   = {SIAM Journal on Computing},
  volume    = {15},
  number    = {4},
  pages     = {948--957},
  year      = {1986},
  url       = {https://doi.org/10.1137/0215066},
  doi       = {10.1137/0215066},
  timestamp = {Sat, 27 May 2017 14:22:58 +0200},
  biburl    = {https://dblp.org/rec/journals/siamcomp/Cunningham86.bib},
  bibsource = {dblp computer science bibliography, https://dblp.org}
}

@book{Murota03,
  author = {Kazuo Murota},
  title = {Discrete Convex Analysis}, 
  series = {SIAM Monographs on Discrete Mathematics and Applications}, 
  volume = {10}, 
  publisher = {Society for Industrial and Applied Mathematics}, 
  year = {2003},
  address = {Philadelphia, PA},
  doi = {10.1137/1.9780898718508},
}

@article{GaleS62,
     title = {College Admissions and the Stability of Marriage},
     author = {David Gale and Lloyd S. Shapley},
     journal = {The American Mathematical Monthly},
     volume = {69},
     number = {1},
     pages = {9-15},
     year = {1962},
     doi = {10.2307/2312726},
}

@article{IriT76,
  title={An Algorithm for Finding an Optimal {``Independent Assignment''}},
  author={Masao Iri and Nobuaki Tomizawa},
  journal={Journal of The Operations Research Society of Japan},
  year={1976},
  volume={19},
  pages={32-57},
  doi = {10.15807/jorsj.19.32},
}

@article{Irving94,
  author    = {Robert W. Irving},
  title     = {Stable Marriage and Indifference},
  journal   = {Discrete Applied Mathematics},
  volume    = {48},
  number    = {3},
  pages     = {261--272},
  year      = {1994},
  doi       = {10.1016/0166-218X(92)00179-P},
}

@incollection{IwamaM08,
  author    = {Kazuo Iwama and
               Shuichi Miyazaki},
  editor    = {Ming{-}Yang Kao},
  title     = {Stable Marriage with Ties and Incomplete Lists},
  booktitle = {Encyclopedia of Algorithms},
  edition   = {2008},
  publisher = {Springer},
  year      = {2008},
  doi       = {10.1007/978-0-387-30162-4\_395},
  address   = {Boston, MA}
}

@book{Manlove13,
  author = {David F. Manlove},
  title = {Algorithmics of Matching under Preferences}, 
  publisher = {World Scientific},
  year = {2013},
  doi = {10.1142/8591},
  address = {Singapore},
}

@article{Fleiner03,
  author    = {Tam{\'{a}}s Fleiner},
  title     = {A Fixed-Point Approach to Stable Matchings and Some Applications},
  journal   = {Mathematics of Operations Research},
  volume    = {28},
  number    = {1},
  pages     = {103--126},
  year      = {2003},
  doi       = {10.1287/moor.28.1.103.14256},
}

@inproceedings{ChenG10,
  author    = {Ning Chen and
               Arpita Ghosh},
  editor    = {Mark de Berg and
               Ulrich Meyer},
  title     = {Strongly Stable Assignment},
  booktitle = {Proceedings of the 18th Annual European Symposium on Algorithms, Part {II}},
  series    = {Lecture Notes in Computer Science},
  volume    = {6347},
  pages     = {147--158},
  publisher = {Springer},
  year      = {2010},
  doi       = {10.1007/978-3-642-15781-3_13},
  address   = {Berlin, Heidelberg, Germany}
}

@techreport{Manlove99, 
author = {David F. Manlove}, 
title = {Stable Marriage with Ties and Unacceptable Partners},
number = {TR-1999-29}, 
institution = {The University of Glasgow, Department of Computing Science}, 
year = {1999},
}

@inproceedings{IrvingMS00,
  author    = {Robert W. Irving and
               David F. Manlove and
               Sandy Scott},
  editor    = {Magn{\'{u}}s M. Halld{\'{o}}rsson},
  title     = {The Hospitals/Residents Problem with Ties},
  booktitle = {Proceedings of the 7th Scandinavian Workshop on Algorithm Theory},
  series    = {Lecture Notes in Computer Science},
  volume    = {1851},
  pages     = {259--271},
  publisher = {Springer},
  year      = {2000},
  doi       = {10.1007/3-540-44985-X_24},
  address   = {Berlin, Heidelberg, Germany}, 
}

@inproceedings{IrvingMS03,
  author    = {Robert W. Irving and
               David F. Manlove and
               Sandy Scott},
  editor    = {Helmut Alt and
               Michel Habib},
  title     = {Strong Stability in the Hospitals/Residents Problem},
  booktitle = {Proceedings of the 20th Annual Symposium on Theoretical Aspects of Computer Science},
  series    = {Lecture Notes in Computer Science},
  volume    = {2607},
  pages     = {439--450},
  publisher = {Springer},
  year      = {2003},
  doi       = {10.1007/3-540-36494-3\_39},
  address   = {Berlin, Heidelberg, Germany}, 
}

@article{KavithaMMP07,
     title = {Strongly Stable Matchings in Time {$O(nm)$} and 
              Extension to the Hospitals-Residents Problem},
     author = {Telikepalli Kavitha and
               Kurt Mehlhorn and
               Dimitrios Michail and
               Katarzyna E. Paluch},
     journal = {ACM Transactions on Algorithms}, 
     volume = {3},
     number = {2},
     pages = {Article 15},
     year = {2007},
     doi       = {10.1145/1240233.1240238},
}

@phdthesis{Scott05,
author = {Sandy Scott},
title = {A Study of Stable Marriage Problems with Ties}, 
school = {The University of Glasgow},
year = {2005},
}

@inproceedings{Kunysz19,
  author    = {Adam Kunysz},
  editor    = {Pinar Heggernes},
  title     = {A Faster Algorithm for the Strongly Stable {$b$}-Matching Problem},
  booktitle = {Proceedings of the 11th International Conference on Algorithms and Complexity},
  series    = {Lecture Notes in Computer Science},
  volume    = {11485},
  pages     = {299--310},
  publisher = {Springer},
  year      = {2019},
  doi       = {10.1007/978-3-030-17402-6\_25},
  address   = {Cham, Switzerland}, 
}

@inproceedings{Malhotra04,
  author    = {Varun S. Malhotra},
  editor    = {Susanne Albers and
               Tomasz Radzik},
  title     = {On the Stability of Multiple Partner Stable Marriages with Ties},
  booktitle = {Proceedings of the 12th Annual European Symposium on Algorithms},
  series    = {Lecture Notes in Computer Science},
  volume    = {3221},
  pages     = {508--519},
  publisher = {Springer},
  year      = {2004},
  doi       = {10.1007/978-3-540-30140-0_46},
  address   = {Berlin, Heidelberg, Germany}, 
}

@inproceedings{OlaosebikanM20,
  author    = {Sofiat Olaosebikan and
               David F. Manlove},
  editor    = {Manoj Changat and
               Sandip Das},
  title     = {An Algorithm for Strong Stability in the Student-Project Allocation
               Problem with Ties},
  booktitle = {Proceedings of the 8th Annual International Conference on Algorithms and Discrete Applied Mathematics},
  series    = {Lecture Notes in Computer Science},
  volume    = {12016},
  pages     = {384--399},
  publisher = {Springer},
  year      = {2020},
  doi       = {10.1007/978-3-030-39219-2\_31},
  address   = {Cham, Switzerland}, 
}

@article{OlaosebikanM22,
  author    = {Sofiat Olaosebikan and
               David F. Manlove},
  title     = {Super-Stability in the Student-project Allocation Problem with Ties},
  journal   = {Journal of Combinatorial Optimization},
  volume    = {43},
  number    = {5},
  pages     = {1203--1239},
  year      = {2022},
  doi       = {10.1007/s10878-020-00632-x},
}

\end{document}